\newtheorem{thm}{Theorem}[section]
\newtheorem{theorem}{Theorem}[section]
\newtheorem{prop}[thm]{Proposition}
\newtheorem{lem}[thm]{Lemma}
\newtheorem{cor}[thm]{Corollary}
\newcommand{\PP}{\mathbb{P}}
\newcommand{\E}{\mathbb{E}}
\newcommand{\R}{\mathbb{R}}
\newcommand{\brk}[1]{\left( #1\right)}
\newcommand{\set}[1]{\left\{#1\right\}}
\newcommand{\abs}[1]{\left| #1 \right|}
\newcommand{\wt}[1]{{\rm wt}\brk{#1}}
\newcommand{\wtB}[1]{{\rm wt}\Big(#1\Big)}
\def\F{{\mathbb F}_q}
\def\pfp{{\rho}}
\def\Event{{\mathcal E}}
\title{Distribution of the minimal distance of random linear codes}
\author{Jing Hao\and Han Huang\and Galyna Livshyts\and Konstantin Tikhomirov
}
\address{School of Mathematics,
Georgia Institute of Technology,
686 Cherry street,
Atlanta, GA 30332.}
\definecolor{bg}{RGB}{38,50,56}
\begin{document}

\maketitle

\begin{abstract} In this paper,
we study the distribution of the minimal distance (in the Hamming metric) of a random linear code of dimension $k$ in
$\mathbb{F}_q^n$. We provide quantitative
estimates showing that the distribution function of the minimal distance is close ({\it{}superpolynomially} in $n$)
to the cumulative distribution function
of the minimum of $(q^k-1)/(q-1)$ independent
binomial random variables with parameters $\frac{1}{q}$ and $n$.
The latter, in turn, converges to a Gumbel distribution at integer points
when $\frac{k}{n}$ converges to a fixed number in $(0,1)$.
Our result confirms in a strong sense that apart from identification of the weights of proportional codewords,
the probabilistic dependencies introduced
by the linear structure of the random code, produce a negligible effect on the minimal code weight.
As a corollary of the main result, we obtain an improvement of the Gilbert--Varshamov bound for $2<q<49$.
\end{abstract}

\section{Introduction}

%Codes are information added with redundancy to correct errors.
%Among all codes,
%Linear codes have been actively studied in the information theory.
Let $\mathbb{F}_q$ be a finite field.
A {\it linear code $C$} is a subspace of $\mathbb{F}_q^n$ where $n$ is the length of the code.
The parameter $q$ of the field is referred to as the alphabet size.
The {\it size} of $C$ is the number of elements in $C$.
For a (not necessarily linear) code with size $M$, alphabet size $q$, and length $n$, the {\it
information rate} $R$ is defined to be $\log_q(M)/n$. For a linear code this number is equal to $k/n$, where $k$ is the dimension of the code as a vector space. 
%The rate $R$ measures the efficiency of the code.

Another fundamental parameter is the relative minimal distance.
Let {\it the Hamming distance} between any two codewords $u=(u_1, \cdots, u_n)$ and $v=(v_1, \cdots, v_n)$ 
in $\mathbb{F}_q^n$ be given by 
\[
  d(u,v) := |\{1 \le i \le n,\: u_i \neq v_i\}|,
\]
and {\it the Hamming weight} of a codeword $u$ be defined as $\wt{u}:= d(u,0)$.
For linear codes, the minimal distance between two distinct codewords in a code
is equal to the minimal weight over all nonzero codewords.
It is well-known that a code with a minimal distance $d$ can correct up to $\frac{d-1}{2}$ errors.
The {\it relative minimal distance} $\delta$ is defined as the ratio $\frac{d}{n}$.

In coding theory, the trade-off between the code rate $R$ and error-correcting ability $\delta$ is a central topic of study. 
%Random codes can be used to test the limits of the code efficiency.
%For example, in the classical argument of Shannon,
%random codes have been used to show that asymptotically best (largest) code rate is the channel capacity.  
Let $q$ be fixed.
For linear codes, %we are interested in the relation between $R$ and $\delta$.
Manin \cite{Manin1982} has proved that there exists a function $\alpha_q(\cdot)$ with the following property:
for any $\delta_0\in (0,1-1/q)$ and any $R_0\leq \alpha_q(\delta_0)$, there is an infinite sequence of linear codes
with the relative minimal distance converging to $\delta_0$ and the rate converging to $R_0$;
on the other hand, for every $R_0> \alpha_q(\delta_0)$, such a sequence does not exist. 
An explicit description of $\alpha_q(\cdot)$ remains a major open problem (see \cite{Goppa,Jiang2004,Joyner2011},
as well as \cite{Mceliece1977} for an upper bound for 
$\alpha_q$).
Considerable work has been done to obtain explicit constructions for linear codes with
good rate and relative minimal distance (we refer, in particular, to \cite{Tsfasman1982}).

Rather than considering special codes, one may be interested in studying the statistical properties on the space
of all linear codes, using probabilistic methods.
A classical result in this direction is the {\it Gilbert--Varshamov argument}.
Gilbert \cite{Gilbert1952} and \cite{Varshamov1957} independently gave upper bound for the size of a
(not necessarily linear) code given $n$ and $d$.
Let $A_q(n,d)$ be the maximal size of a code of length $n$ over $\mathbb{F}_q$ and with minimum distance $d$.
Then,
\begin{align} \label{eq:GVbound}
    A_q(n,d) \ge \frac{q^{n-1}}{\sum_{j=0}^{d-2} {n \choose j}(q-1)^j},
\end{align}
and, moreover, there are {\it linear} codes that can achieve this bound i.e.\
there exists a linear code over $\mathbb{F}_q$
with dimension at least 
$n - \lfloor{\log_q \sum_{j=0}^{d-2}{ {n \choose j}}(q-1)^j}\rfloor - 1 $.
The proof of the result can be obtained by a union bound argument.

Recall that {\it the $q$-ary entropy function} is defined by
\[
	H_q(x) :=  x \log_q(q-1) - x \log_q(x) - (1-x)\log_q(1-x).
\]
In \cite{Gallager1963}, it was shown that for $q=2$ and given a rate $R_0$ and $\varepsilon > 0$,
the probability that a random linear code of length $n$ and rate $R_0$, uniformly distributed on the set of linear codes of
the given length and rate,
has the minimal distance $ d < n(\delta_0 - \varepsilon)$,
is exponentially small in $n$.
Here $0<\delta_0< \frac{1}{2}$ is the solution of the equation $R_0=1 - H_2(\delta_0)$.  
On the other hand, if we fix any $\delta_0$ satisfying $0 \le \delta_0 < 1- \frac{1}{q}$
and $0 < \varepsilon \le  1 - H_q(\delta_0)$,
then the Gilbert--Varshamov argument implies that there exist infinitely many linear codes
with a rate $R \ge 1 - H_q(\delta_0) - \varepsilon$.
By taking $\varepsilon \rightarrow 0$, one would obtain a lower bound for the function $\alpha_q(\delta)$ mentioned above:
\[
	\alpha_q(\delta) \ge 1- H_q(\delta).
\]
In fact, as was proved in \cite{Barg2002}, the following law of large numbers holds for the minimal distance of a sequence
of random linear codes:
if $n\to\infty$ and the rate $k/n$ converges to a number $R_0\in(0,1)$ then the
relative minimal distance converges (almost surely) to the number $\delta_0$ given by the equation $R_0=1-H_q(\delta_0)$.
Moreover, the probability that a random linear code of length $n$ has the relative minimal distance outside
of the interval $[\delta_0-\varepsilon,\delta_0+\varepsilon]$, is exponentially small in $n$
(we remark here that in the same paper it was shown that the minimal distance of random non-linear codes
is asymptotically worse than in the linear setting).

Our goal in this paper is to obtain a more precise description of the distribution of the minimal distance of
random linear codes. The main statement is the following
\begin{theorem}\label{main}
For any prime power $q$ and any real numbers $R_1<R_2$ in $(0,1)$ there is $c(R_1,R_2,q)>0$ with the following property.
Let positive integers $k,n$ satisfy $R_1\leq k/n\leq R_2$, and let
$\mathcal C$ be the random linear code uniformly distributed on the set of all linear codes in $\F^n$ of dimension $k$.
Denote by $F_{\rm dmin}$ the cumulative distribution function of the minimal distance of $\mathcal C$.
Further, let $w_{\min}$ be the minimal weight of $\frac{q^k-1}{q-1}$ i.i.d.\ uniform random vectors in $\F^n$,
and $F_{\rm wmin}$ be its cumulative distribution function. Then
$$
\sup\limits_{x\in\R}\big|F_{\rm dmin}(x)-F_{\rm wmin}(x)\big|=O\big(\exp(-c(R_1,R_2,q)\,\sqrt{n})\big).
$$
\end{theorem}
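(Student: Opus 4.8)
The plan is to fix an integer $x$ and compare $1-F_{\rm dmin}(x)=\PP(\mathcal C\text{ contains no nonzero codeword of weight}\le x)$ with $1-F_{\rm wmin}(x)=(1-p_x)^N$, where $N=\frac{q^k-1}{q-1}$ and $p_x=\PP(\wt U\le x)$ for a uniform $U\in\F^n$. Let $B_x=\{v\in\F^n:0<\wt v\le x\}$, set $s=|B_x|/(q-1)$, and let $Z$ count the projective points (lines) of $\mathcal C$ lying in $B_x$, so that $1-F_{\rm dmin}(x)=\PP(Z=0)$ and $1-F_{\rm wmin}(x)=(1-p_x)^N=\PP(\mathrm{Bin}(N,p_x)=0)$. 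Realizing $\mathcal C$ as the row space of a uniform $G\in\F^{k\times n}$ (full rank with probability $1-q^{-\Omega(n)}$), for any linearly independent $v_{1},\dots,v_{r}\in\F^k$ with $r\le k$ the codewords $v_{1}G,\dots,v_{r}G$ are i.i.d.\ uniform on $(\F^n)^r$. In particular, choosing one representative per projective class, any two of the resulting weights are independent, so $Z$ has the same mean $Np_x$ and variance $Np_x(1-p_x)$ as $\mathrm{Bin}(N,p_x)$, up to $q^{-\Omega(n)}$ adjustments.

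I would first dispose of the two tails. If $x$ is small enough that $Np_x\le e^{-c\sqrt n}$, the union bound gives $F_{\rm dmin}(x),F_{\rm wmin}(x)\le Np_x\le e^{-c\sqrt n}$; since $\tfrac1n\log_q p_x\to H_q(x/n)-1$ with $H_q'$ bounded away from $0$ and $\infty$ near the Gilbert--Varshamov point $\delta^*=H_q^{-1}(1-k/n)$ (which lies in a compact subset of $(0,1-1/q)$ as $k/n\in[R_1,R_2]$), $Np_x$ is monotone in $x$ and grows exponentially in $x-\delta^*n$, so this covers all $x$ below $\delta^*n-\Theta(\sqrt n)$. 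Symmetrically, if $Np_x\ge e^{c\sqrt n}$ then $1-F_{\rm wmin}(x)=(1-p_x)^N\le e^{-Np_x}$, while Chebyshev together with $\mathrm{Var}(Z)=Np_x(1-p_x)$ gives $1-F_{\rm dmin}(x)=\PP(Z=0)\le\frac{1-p_x}{Np_x}\le e^{-c\sqrt n}$, so both $F$'s are within $e^{-c\sqrt n}$ of $1$.

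The heart of the proof is the remaining transition window --- an interval of $\Theta(\sqrt n)$ values of $x$ on which $Np_x$ is at most polynomial in $n$ --- where both tail bounds are too weak. Here I would run an inclusion--exclusion over the projective points of $B_x$: $\PP(Z=0)=\sum_{T}(-1)^{|T|}\PP(\mathrm{span}\,T\subseteq\mathcal C)$, where $\PP(\mathrm{span}\,T\subseteq\mathcal C)=\prod_{i=0}^{r-1}\frac{q^{k-i}-1}{q^{n-i}-1}$ depends on $T$ only through $r=\dim\mathrm{span}(T)$, and compare with $(1-p_x)^N=\sum_j(-1)^j\binom Nj p_x^j$. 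Grouping by $|T|=j$: the contribution of linearly independent $T$ equals $\binom sj\,\rho_j$ with $\rho_j=\prod_{i=0}^{j-1}\frac{q^{k-i}-1}{q^{n-i}-1}$, and the identity $\binom nj_q\rho_j=\binom kj_q$ together with $j\ll k=\Theta(n)$ shows $\binom sj\,\rho_j$ matches $\binom Nj p_x^j$ up to a factor $1\pm q^{-\Omega(n)}$. One then truncates the alternating sum at a level $M\asymp\sqrt n$, whose error $\lesssim S_{M+1}\lesssim(Np_x)^{M+1}/(M+1)!$ (plus dependent-tuple contributions) is superpolynomially small, and controls the contribution of the linearly \emph{dependent} $j$-tuples of light vectors. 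The latter rests on the combinatorial input that a low-dimensional subspace $W\le\F^n$ rarely contains many vectors of weight $\le x$: the expected number of light projective points in a uniform $r$-dimensional subspace is $\asymp q^{r-k}\ll1$ throughout the window, so ``overloaded'' subspaces are rare; and one must additionally exploit the near-cancellation of the dependent contributions in the alternating sum (a truncated M\"obius-type identity on the subspace lattice) rather than bounding them termwise. The level $M\asymp\sqrt n$ is essentially the largest for which these rarity estimates survive (they degrade once $j^2\asymp k\asymp n$), and balancing this against the truncation error is what produces the $\sqrt n$ in the exponent; tracing the constants yields $c(R_1,R_2,q)$ in terms of $R_1$, $q$, and $H_q'(\delta^*)$.

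I expect the main obstacle to be precisely this control of the dependent tuples: one must count dependent $j$-tuples of \emph{light} vectors (not of all vectors, since the Gaussian-binomial count of $r$-dimensional subspaces is already $q^{\Theta(n)}$ and would swamp a naive bound), and carry the cancellation through the truncated inclusion--exclusion, uniformly over the window. A secondary point is to ensure that the Chebyshev range $Np_x\gtrsim e^{c\sqrt n}$ and the inclusion--exclusion range dovetail without a gap at moderately large $Np_x$; should a gap appear, it can be closed by a higher-moment (or exponential-moment) estimate on $Z$, which is available once the dependent-tuple bound shows that the factorial moments of $Z$ exceed those of $\mathrm{Bin}(N,p_x)$ only by lower-order amounts.
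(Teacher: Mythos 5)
Your overall outline — dispose of the two tails by a union bound and a second-moment argument, then work in the $\Theta(\sqrt n)$-wide transition window via a truncated comparison with the binomial model — is the right shape, and it matches the paper in spirit. Your tail bounds are correct (the variance of $Z$ is in fact exactly $Np_x(1-p_x)$, since two distinct projective points of $\F^k$ are automatically non-proportional and hence give independent weights). The paper's implementation of the middle range is slightly different in form: rather than a direct Bonferroni-truncated inclusion--exclusion, it compares the moments $\E Z_d^m$ with moments $\E\widetilde Z_d^m$ of a ``proportional-independent'' surrogate and then recovers $\PP\{Z_d=0\}$ by inverting the Vandermonde-type matrix $b_{ij}=j^i$, but this is a technical rather than conceptual difference: both routes require control of the same factorial-moment-like quantities.

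The genuine gap is in your treatment of the dependent tuples, which you correctly flag as the heart of the matter. Two issues. First, your proposed combinatorial input -- ``a low-dimensional subspace rarely contains many light vectors; the expected number of light projective points in a uniform $r$-dimensional subspace is $\asymp q^{r-k}\ll 1$'' -- is a statement about a \emph{random} $r$-dimensional subspace, but the dependent $j$-tuples entering your alternating sum are not spans of random subspaces: by construction every vector in the tuple is light, and there do exist heavily overloaded $r$-dimensional subspaces (e.g.\ coordinate subspaces spanned by $r\le x$ standard basis vectors contain $\frac{q^r-1}{q-1}$ light projective points). So the first-moment ``rarity'' argument does not directly control $\sum_W\binom{|B_x\cap W|/(q-1)}{j}$ over subspaces $W$ of rank $r<j$, and you then fall back on ``exploiting near-cancellation in the alternating sum,'' which you do not carry out. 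Second, and more importantly, the missing ingredient in the paper is not a counting statement about light points at all, and no sign cancellation is needed: it is a conditional anti-concentration estimate (Proposition~\ref{prop: expectationFixedRank}, via Lemma~\ref{prop: wtof2Sum}) asserting that if $v^1,\dots,v^s$ are linearly independent and $v^{s+1}=\sum_i c_i v^i$ with all $c_i\neq 0$, then
$$
\E\prod_{i=1}^{s+1}W_{v^i}(d)\le C\,\rho_d^{\,s}\,\exp\big(-c(\lambda_0,q)\,d^4/n^3\big),
$$
i.e.\ conditioned on $s$ independent codewords all being light, a full-support linear combination of them is light only with probability $O(\exp(-cd^4/n^3))$. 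Its proof is a concentration argument (hypergeometric tail for the overlap of supports plus Hoeffding for the number of cancellations), not a subspace-counting or lattice-M\"obius argument. With this estimate in hand the dependent terms can be bounded \emph{termwise}, and the balance between the truncation error $2^{-h}$ and the factor $\exp(-cd^4/n^3)$ is precisely what forces the $h\asymp\sqrt n$ cutoff and hence the $\exp(-c\sqrt n)$ rate in the theorem. Without some version of this lemma your transition-window analysis does not close, so as written the proposal has a real gap at its central step.
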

A surprising feature of this result is that the distribution of the minimal distance can be approximated by a c.d.f.\
of the minimum of i.i.d.\ binomial variables with precision superpolynomial in $n$. In a sense, this result asserts that dependencies
between codeword weights
introduced by the linear structure of the code, produce a negligible effect on the distribution of the minimal weight.

The proof of the result is based on analysis of moments of certain functionals associated with the code.
We remark that in a recent work by Linial and Mosheiff \cite{Linial2018}, the authors
calculated centered moments for number of codewords of a random linear code with a given weight. The approach used in that paper influenced our work.

As an immediate corollary of our result, we obtain the following statement which gives an $\Theta(n^{1/2})$ improvement over the
classical Gilbert--Varshamov bound:
\begin{cor}\label{cor: mp GV}
For any prime power $q$, any $\alpha\in(0,1)$, any integer $n$, and $d\in[\alpha n,(1-\alpha)(n-n/q)]$
there is a linear code of size at least
$$
c n^{1/2}\,\frac{q^n}{\sum_{j=0}^{d-1} {n \choose j}(q-1)^j},
$$
where $c>0$ may only depend on $\alpha$ and $q$.
\end{cor}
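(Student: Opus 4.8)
The plan is to deduce the statement directly from Theorem~\ref{main}: for a suitable dimension $k$ we lower-bound the probability that a uniform random linear code of dimension $k$ has minimal distance at least $d$, choose $k$ as large as this probability allows while staying positive, and then read off the size $q^k$ of one such code. Throughout set $V:=\sum_{j=0}^{d-1}\binom nj(q-1)^j$, the number of vectors of $\F^n$ of Hamming weight $<d$, so that a uniformly random vector of $\F^n$ has weight $<d$ with probability $V/q^n$. If the interval $[\alpha n,(1-\alpha)(n-n/q)]$ contains no integer there is nothing to prove, so assume it contains $d$ and write $\delta:=d/n\in[\alpha,(1-\alpha)(1-1/q)]$. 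Since $H_q$ is strictly increasing on $[0,1-1/q]$ with $H_q(1-1/q)=1$, the number $1-H_q(\delta)$ lies in a compact subinterval of $(0,1)$ depending only on $\alpha$ and $q$; accordingly fix
\[
R_1:=\tfrac12\bigl(1-H_q((1-\alpha)(1-1/q))\bigr),\qquad R_2:=1-\tfrac12 H_q(\alpha),
\]
so that $0<R_1<R_2<1$, and let $c_0:=c(R_1,R_2,q)>0$ (with $C_0$ the implied constant) be as produced by Theorem~\ref{main}. Put $\widetilde N:=\tfrac{c_0}{4}\,n^{1/2}q^n/V$, let $k$ be the largest integer with $\frac{q^k-1}{q-1}\le\widetilde N$, and set $N:=\frac{q^k-1}{q-1}$.

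The standard volume bounds
\[
\tfrac{1}{n+1}\,q^{nH_q((d-1)/n)}\le\binom n{d-1}(q-1)^{d-1}\le V\le n\,q^{nH_q((d-1)/n)}
\]
(the last inequality because the summands $\binom nj(q-1)^j$ increase for $j\le d-1<n(1-1/q)$), together with the Chernoff estimate $V/q^n\le\exp(-2\alpha^2(1-1/q)^2 n)$ coming from $\delta$ being bounded away from the mean $1-1/q$ of the weight of a uniform vector, guarantee that once $n$ is larger than a threshold depending only on $\alpha$ and $q$ one has $1\le k<n$, $V/q^n\le\tfrac12$, and $k/n\in[R_1,R_2]$; thus Theorem~\ref{main} applies to the uniform random code $\mathcal C$ of dimension $k$. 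Since the minimal distance is integer-valued,
\[
\PP\bigl(\text{minimal distance of }\mathcal C\ge d\bigr)=1-F_{\rm dmin}(d-1)\ge 1-F_{\rm wmin}(d-1)-C_0\exp(-c_0\sqrt n),
\]
while $1-F_{\rm wmin}(d-1)=\PP(w_{\min}\ge d)=(1-V/q^n)^{N}$ because $w_{\min}$ is the minimum weight of $N$ i.i.d.\ uniform vectors in $\F^n$. Using $V/q^n\le\tfrac12$, $N\le\widetilde N$ and $\ln(1-x)\ge-2x$ on $[0,\tfrac12]$,
\[
(1-V/q^n)^{N}\ge\exp\bigl(-2N\,V/q^n\bigr)\ge\exp\bigl(-2\widetilde N\,V/q^n\bigr)=\exp\bigl(-\tfrac{c_0}{2}\sqrt n\bigr)>C_0\exp(-c_0\sqrt n)
\]
for $n$ large in terms of $\alpha$ and $q$. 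Hence $\PP(\text{minimal distance of }\mathcal C\ge d)>0$, so there is a linear code of dimension $k$ with minimal distance at least $d$; by maximality of $k$ we have $\frac{q^{k+1}-1}{q-1}>\widetilde N$, so its size satisfies $q^k>\tfrac{q-1}{q}\widetilde N=\tfrac{(q-1)c_0}{4q}\,n^{1/2}q^n/V$, which is the asserted bound with $c:=\tfrac{(q-1)c_0}{4q}$.

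For $n$ below the threshold, $n^{1/2}q^n/V$ is bounded by a constant depending only on $\alpha$ and $q$ (since $V\ge1$), so after shrinking $c$ once more we get $c\,n^{1/2}q^n/V\le1$ and the trivial code $\{0\}$ suffices. The only genuinely delicate point is bookkeeping the constants: one must check that $R_1$ and $R_2$, hence $c_0$ and $C_0$, depend only on $\alpha$ and $q$ even though $k$ is defined in terms of $c_0$. This is not circular because $R_1,R_2$ are pinned down by the entropy function before $k$ is introduced, and the various ``for $n$ large'' thresholds are then allowed to depend on $c_0$; the remaining ingredients --- the volume estimates for $V$ and the elementary inequality $\ln(1-x)\ge-2x$ --- are routine. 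It is worth noting that the $n^{1/2}$ gain over \eqref{eq:GVbound} is exactly the amount by which $N$, and therefore $q^k$, can be inflated beyond $q^n/V$ while keeping the above probability positive, which is governed by the $\sqrt n$ in the exponent of the error term of Theorem~\ref{main}.
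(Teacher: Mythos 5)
Your argument is correct and follows essentially the same route as the paper: both fix a dimension roughly $\tfrac12\log_q n$ above the Gilbert--Varshamov value, compute $\PP\{w_{\min}\ge d\}$ for the i.i.d.\ model, and then invoke Theorem~\ref{main} to transfer this to the random linear code, noting that the $\exp(-c\sqrt n)$ error term still leaves the probability positive. The only cosmetic difference is that you go straight to the inflated dimension via $\widetilde N$, whereas the paper first fixes the GV dimension $k$ and then adds $t=\tfrac12\log_q n+O(1)$; the underlying computation and the resulting bound are the same.
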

We note that existence of {\it non-linear} codes of size at least
$
cn\,\frac{q^n}{\sum_{j=0}^{d-1} {n \choose j}(q-1)^j}
$
has been previously established in \cite{Jiang2004,Vu2005}.
Linear double-circulant binary codes beating the Gilbert--Varshamov bound were considered in \cite{Gaborit2008}.
To our best knowledge, the above improvement for $2<q<49$ is new.

%
%In this article, we calculate the limiting distribution of the minimum distance of random linear codes. The main idea of the proof is to compare this random variable to the minimum of i.i.d. binomial random variables, and to show that the error in the comparison is small. %This is achieved by considering a random code ensemble where we force some of the random codewords to be the same.  We show that the minimum distance for these two configurations vary only by a small error term.
%An immediate result from this is the improvement over Gilbert-Varshamov bound of $O(\log n /n)$.  But different from \cite{Gaborit2008} where they apply probablistic method over the family of double-circulant codes, we consider all linear codes.

\medskip

Further, we obtain an explicit limit theorem for the distribution of the minimal distance. Due to the discrete nature of our random variable,
the convergence to a Gumbel distribution can only be established on the points along certain arithmetic progressions:
%Let $\gamma\approx 0.5772156649$ be the Euler constant.
\begin{theorem}[The limit theorem for the minimal distance]
  \label{thm:limiting}
Let $q$ be a prime power, and let $R_1<R_2$ be numbers in $(0,1)$.
Let $(k_n)$ be a sequence of positive integers such that $R_1\leq k_n/n\leq R_2$
for all large $n$. For any $n$ let $d_{\min}(n)$ be the minimal distance of the random linear code uniformly
distributed on the set of linear codes of length $n$ and dimension $k_n$. Further, for any $n$ let $d_0(n)$ be the largest integer
satisfying
$$u(n):=\frac{q^{k_n}-1}{q-1}\sum_{i=0}^{d_0(n)} {n\choose i}\Big(1-\frac{1}{q}\Big)^i\,q^{i-n}\leq 1.$$
Denote by $\xi_n$ the random variable
$$
\xi_n:=\big(d_0(n)-d_{\min}(n)\big)\log \frac{(q-1)(n-d_0(n))}{d_0(n)}-\log u(n).
$$
Then, as $n\to\infty$, we have
$$
\sup\limits\bigg\{|\PP\{\xi_n<t\}-G(t)|:\;t\in \log \frac{(q-1)(n-d_0(n))}{d_0(n)}\, \mathbb{Z}-\log u(n)\bigg\}\longrightarrow 0,
$$
where $G$ is the Gumbel law given by $G(t)=e^{-e^{-t}}$.
\end{theorem}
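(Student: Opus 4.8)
The plan is to derive Theorem~\ref{thm:limiting} from Theorem~\ref{main}: the latter lets us replace the minimal distance of the random code by the minimum $w_{\min}$ of $N:=(q^{k_n}-1)/(q-1)$ independent $\mathrm{Bin}(n,p)$ variables, $p:=1-1/q$, up to an error $O(e^{-c\sqrt n})=o(1)$, after which everything is an analytic computation with binomial tails. Write $b(i):=\binom ni p^i(1-p)^{n-i}$ and $F(m):=\sum_{i=0}^m b(i)$; then $\binom ni(1-1/q)^iq^{i-n}=b(i)$, so $u(n)=N\,F(d_0(n))$, and $\PP\{w_{\min}>m\}=(1-F(m))^N$. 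Set $L_n:=\log\frac{(q-1)(n-d_0(n))}{d_0(n)}>0$ and $\rho_n:=e^{-L_n}=\frac{d_0(n)}{(q-1)(n-d_0(n))}$. A lattice point is $t_j:=jL_n-\log u(n)$, $j\in\mathbb Z$; since $L_n>0$, one checks $\{\xi_n<t_j\}=\{d_{\min}(n)\geq d_0(n)-j+1\}$ and $G(t_j)=\exp(-u(n)\rho_n^{\,j})$. By Theorem~\ref{main}, $\PP\{\xi_n<t_j\}$ differs from $(1-F(d_0(n)-j))^N$ by $O(e^{-c\sqrt n})$, so it remains to show
$$(1-F(d_0(n)-j))^N-\exp\!\big(-u(n)\rho_n^{\,j}\big)\longrightarrow0\qquad\text{uniformly in }j\in\mathbb Z.$$

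The core is a sharp estimate of the lower tail of $F$. First, a Chernoff/entropy bound, using only $R_1\leq k_n/n\leq R_2$, shows that $d_0(n)/n$ lies in a compact subinterval of $(0,1-1/q)$ depending only on $R_1,R_2,q$; hence $\rho_n\in[\rho_{\min},\rho_{\max}]$ with $0<\rho_{\min}\leq\rho_{\max}<1$ fixed. Next, $b(i-1)/b(i)=\frac i{(n-i+1)(q-1)}$ is increasing in $i$ and bounded by $\rho_{\max}$ for all $i\leq d_0(n)+O(1)$; summing the resulting geometric series shows $F(m)=b(m)\cdot\Theta(1)$ in that range, and in particular $u(n)=N\,F(d_0(n))\in[c_0,1]$ for some $c_0=c_0(R_1,R_2,q)>0$ (the lower bound also uses $N\,F(d_0(n)+1)>1$). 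Finally, for each fixed $j$, a term-by-term comparison of the product $b(d_0(n)-j)/b(d_0(n))=\prod_i b(i-1)/b(i)$ with $\rho_n^{\,j}$, each factor being $\rho_n(1+O(1/n))$, gives $b(d_0(n)-j)/b(d_0(n))=\rho_n^{\,j}(1+o(1))$; combined with the fact that the geometric-sum prefactors at $d_0(n)-j$ and at $d_0(n)$ converge to the same limit, this yields
$$N\,F(d_0(n)-j)=u(n)\,\frac{F(d_0(n)-j)}{F(d_0(n))}=u(n)\,\rho_n^{\,j}+o(1)\qquad(j\text{ fixed}).$$

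Since $F(d_0(n)-j)\leq F(d_0(n))=u(n)/N\leq1/N\to0$, the expansion $(1-x)^N=e^{-Nx}+o(1)$ together with the last display gives $(1-F(d_0(n)-j))^N=\exp(-u(n)\rho_n^{\,j})+o(1)=G(t_j)+o(1)$, so $\PP\{\xi_n<t_j\}-G(t_j)\to0$ for every fixed $j$. To make this uniform, fix $\varepsilon>0$ and pick $J$ with $\rho_{\max}^{\,J}<\varepsilon$ and $\exp(-c_0\rho_{\max}^{-J})<\varepsilon$. The $2J+1$ indices with $|j|\leq J$ are handled by the pointwise statement. For $j>J$, both $t\mapsto\PP\{\xi_n<t\}$ and $G$ are nondecreasing, and at $j=J+1$ one has $N\,F(d_0(n)-J-1)\leq\rho_{\max}^{\,J+1}+o(1)$, so both $\PP\{\xi_n<t_{J+1}\}$ and $G(t_{J+1})$ exceed $1-2\varepsilon$ for large $n$; hence $|\PP\{\xi_n<t_j\}-G(t_j)|\leq2\varepsilon$ there. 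Symmetrically, for $j<-J$ one uses $(1-x)^N\leq e^{-Nx}$ and $N\,F(d_0(n)+J+1)\geq c_0\rho_{\max}^{-(J+1)}(1+o(1))$ to bring both $\PP\{\xi_n<t_j\}$ and $G(t_j)$ below $2\varepsilon$. Thus $\sup_{t\in L_n\mathbb Z-\log u(n)}|\PP\{\xi_n<t\}-G(t)|\leq2\varepsilon$ for all large $n$, and $\varepsilon\to0$ finishes the proof.

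I expect the main obstacle to be the second paragraph above: proving that the binomial lower tail decays geometrically with exactly the ratio $\rho_n$ (i.e.\ $F(d_0(n)-j)/F(d_0(n))=\rho_n^{\,j}(1+o(1))$), and establishing the quantitative localization of $d_0(n)/n$ strictly inside $(0,1-1/q)$ on which all the constants depend; once these are in hand, the extreme-value bookkeeping (the Poisson step and the split into $|j|\leq J$ and $|j|>J$) is routine. The one indispensable external input is Theorem~\ref{main}, which is precisely what allows the linearly dependent codeword weights to be traded for honestly independent binomials at negligible cost.
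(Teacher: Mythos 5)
Your proposal is correct and follows essentially the same route as the paper: reduce to the minimum of i.i.d.\ binomials via Theorem~\ref{main}, localize $d_0(n)/n$ in a compact subinterval of $(0,1-1/q)$, show the binomial tail ratio $\rho_{d_0-j}/\rho_{d_0}$ behaves like $\rho_n^{\,j}(1+o(1))$, and finish with the Poisson step $(1-x)^N=e^{-Nx}+o(1)$. The paper packages the binomial-tail and extreme-value work into Propositions~\ref{prop: 09860986} and~\ref{gumbleprop} and then states the implication without detail, whereas you re-derive those estimates inline and spell out the uniform-in-$j$ bookkeeping (the $|j|\le J$ vs.\ $|j|>J$ split) that the paper leaves implicit; otherwise the arguments coincide.
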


%Recall that 
%
%First, we show that the minimum of binomial distributions converges to Gumbel distribution, with the appropriate normalization. Recall the Euler's constant

%\begin{theorem}[minimum of i.i.d. binomials converges to Gumbel]\label{gumble}
%Let $w_{min}$ be the minimum of $m$ i.i.d. binomial random variables with parameters $n$ and $q<0.5.$ Then the random variables
%$$\gamma+\frac{\pi}{\sqrt{6}}\frac{w_{min}-\E w_{min}}{\sqrt{Var(w_{min})}}$$ converge in total variation distance to the standard Gumbel random variable with cdf $e^{-e^{-x}}$, for $x\in\R,$ when $m$ and $n$ tend to infinity simultaneously, for any fixed constant $q<0.5$. 
%\end{theorem}

\bigskip

The paper is organized as follows.
In Section \ref{sec: auxiliary results for the binomial distribution}, 
we consider some auxiliary results for the binomial distribution,
including a limiting result for the minimum of i.i.d.\ binomial random variables.  
At the end of the section, we show how the main result of the paper implies Theorem~\ref{thm:limiting}.

In Section \ref{sec: moments comparison}, we consider the set of random vectors $\{Y_a :\; a \in \mathbb{F}_q^k \backslash \{0\}\}$ uniformly distribtued on $\mathbb{F}_q^n$ that are mutually independent up to the constraint that $Y_a = Y_b$ whenever $a$ and $b$ are proportional.  We study moments of the random variable that counts number of codewords with weights less than or equal to $d$ in this configuration as well as that of random linear code ensemble and give a quantitative comparison between them.

Finally, in Section \ref{sec: analysis of the distribution of the minimal distance} we give the comparison of the c.d.f.\ of minimum distance between these two ensembles.  Due to the discrete nature of this problem, either c.d.f.\
can be obtained by solving a set of linear equations involving quantities we computed in previous sections.
Then we give a quantitative comparison by estimating the truncation errors and moment differences.

%We will also show that we can improve the Gilbert-Varshamov bound by $O(\log n /n)$.  

\bigskip

{\bf Acknowledgment.} The authors would like to thank Prof.\ Alexander Barg for valuable suggestions.
G.L.\ is supported by NSF grant CAREER DMS-1753260.
K.T.\ is supported by the Sloan Research Fellowship.

\section{Auxiliary results for the binomial distribution}
\label{sec: auxiliary results for the binomial distribution}

Our goal in this section is to obtain quantitative estimates for the 
distribution of the minimum of i.i.d.\ binomial random variables (with specially chosen parameters).
Although the material of this section is rather standard, we prefer to include it in the exposition for the reader's convenience.

Let $1\leq m\leq (q-1)^n$ and let $X_1,\dots,X_{m}$ be i.i.d.\ vectors uniformly distributed in 
$\F^n$.
Here, we are interested in estimates of the quantities
$$
\PP\big\{\min\limits_{i\leq m}\wt{X_i}\leq d\big\},\quad d\geq 0,
$$
where $\wt{X_i}$ is the number of non-zero components of $X_i$.
Denote
$$
\pfp_{d}:=\PP\big\{\wt{X_1}\leq d\big\}=\sum_{i=0}^d {n\choose i}\Big(1-\frac{1}{q}\Big)^i\,q^{i-n}.
$$

\medskip

We start by recording the following approximations to $\pfp_{d}$:

\begin{prop}\label{prop: 09860986}
    For any $\alpha\in (0,1)$ there is $C_\alpha>0$ with the following property.
    Assume that $n\geq 1$ and $C_\alpha {\log(n) } \le d\le (1-\alpha)(1-1/q)n$. 
    Then we have 
\begin{align}\label{equation1}
\frac{\pfp_{d}}{{n \choose d}q^{-n}(q-1)^d}= 
  \big(1+O_\alpha(\log n/n)\big)\frac{n-d+1}{n-\brk{\frac{q}{q-1}}d+1}.
\end{align}
Furthermore, for any positive integer $t\leq\sqrt{d}$, we have  
\begin{equation}\label{eq: pdnpdtn}
   \frac{\pfp_{d+t}}{\pfp_{d}} 
    =\left(1+O_{\alpha}\left(\frac{\log n}{n}
    +\frac{t^2}{d}\right)\right)
    \left(\frac{(q-1)(n-d)}{d}\right)^{t}.
       \end{equation}
\end{prop}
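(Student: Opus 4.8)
The plan is to estimate the partial sum $\pfp_d = q^{-n}\sum_{i=0}^d \binom{n}{i}(q-1)^i$ by comparing it to its largest term, which for $d$ in the stated range is the last one, $\binom{n}{d}(q-1)^d q^{-n}$. First I would write $\pfp_d = \binom{n}{d}(q-1)^d q^{-n}\sum_{j\geq 0} r_j$, where $r_j$ is the ratio of the $(d-j)$-th term to the $d$-th term, namely
\[
r_j = \prod_{\ell=0}^{j-1}\frac{d-\ell}{(n-d+1+\ell)(q-1)}.
\]
The key observation is that consecutive terms satisfy $r_{j+1}/r_j = \frac{d-j}{(n-d+1+j)(q-1)} \leq \frac{d}{(n-d+1)(q-1)}$, and under the hypothesis $d \leq (1-\alpha)(1-1/q)n$ this common ratio bound is at most $1 - \alpha'$ for some $\alpha' = \alpha'(\alpha,q) \in (0,1)$ (a short computation: $\frac{d}{(n-d)(q-1)} \leq \frac{(1-\alpha)(1-1/q)n}{(q-1)(n - (1-\alpha)(1-1/q)n)}$, and simplifying the right side gives something bounded away from $1$). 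Hence $\sum_{j\geq 0} r_j$ is dominated by a geometric series; more precisely I would split it as $r_0 + r_1 + \sum_{j\geq 2} r_j$, bound the tail $\sum_{j\geq 2}r_j \leq r_1 \cdot \frac{r_2/r_1}{1-(r_2/r_1)}$ using monotonicity of the ratios, and observe $r_0 = 1$, $r_1 = \frac{d}{(n-d+1)(q-1)}$. This already yields $\sum_j r_j = 1 + r_1 + O(r_1^2)$. Comparing with the exact geometric sum $\sum_{j\geq 0}\rho^j = \frac{1}{1-\rho}$ with $\rho = \frac{d}{(n-d+1)(q-1)} = \frac{d}{n-d+1}\cdot\frac{1}{q-1}$, note $\frac{1}{1-\rho} = \frac{(n-d+1)(q-1)}{(n-d+1)(q-1)-d} = \frac{(q-1)(n-d+1)}{(q-1)(n-d+1)-d}$, and a little algebra rewrites the denominator as $(q-1)\big(n - \frac{q}{q-1}d + 1\big)$, so $\frac{1}{1-\rho} = \frac{n-d+1}{n - \frac{q}{q-1}d+1}$. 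The error between $\sum_j r_j$ and $\frac{1}{1-\rho}$ needs to be controlled to relative precision $O_\alpha(\log n / n)$; since each $r_j$ differs from $\rho^j$ by a factor $\prod_{\ell<j}\frac{n-d+1}{n-d+1+\ell}$ and the relevant $j$ are $O(\log n / \log(1/\rho)) = O_\alpha(\log n)$ because of the $d \geq C_\alpha \log n$ lower bound (which forces $\rho$ bounded below, so that terms with $j \gtrsim \log n$ are negligible — actually more carefully, the lower bound on $d$ ensures $\rho$ is not too small so the sum genuinely has $\asymp \log n$ relevant terms rather than $O(1)$), the product is $1 + O(j^2/n) = 1 + O_\alpha(\log^2 n/n)$. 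Hmm — this gives $\log^2 n/n$, not $\log n/n$; I would need to be slightly more careful, perhaps bounding $\prod_{\ell<j}(1+\ell/(n-d+1))^{-1} = 1 - O(j^2/n)$ but then noting that what actually enters is $\sum_j (r_j - \rho^j)$, and since $r_j \leq \rho^j$ the total error is at most $\rho^2 \cdot (\text{const}) \cdot (\text{max deviation factor})$, i.e. dominated by the $j=2$ term where the deviation is $O(1/n)$, giving the claimed $O_\alpha(\log n/n)$ after accounting for the case where $\rho$ is close to $1$. This establishes \eqref{equation1}.

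For \eqref{eq: pdnpdtn}, the plan is to write
\[
\frac{\pfp_{d+t}}{\pfp_d} = \frac{\binom{n}{d+t}(q-1)^{d+t}}{\binom{n}{d}(q-1)^d}\cdot\frac{\pfp_{d+t}/\big(\binom{n}{d+t}(q-1)^{d+t}q^{-n}\big)}{\pfp_d/\big(\binom{n}{d}(q-1)^dq^{-n}\big)}.
\]
The first factor is $(q-1)^t\prod_{j=1}^t\frac{n-d-j+1}{d+j} = \left(\frac{(q-1)(n-d)}{d}\right)^t\prod_{j=1}^t\frac{(n-d-j+1)/(n-d)}{(d+j)/d}$, and for $t \leq \sqrt d$ each factor in the product is $1 + O(j/d) + O(j/(n-d)) = 1+O_\alpha(j/d)$ (using $n - d \gtrsim_\alpha d$ from the range of $d$... actually $n-d \gtrsim_\alpha n \geq d$), so the whole product is $\exp(O_\alpha(t^2/d)) = 1 + O_\alpha(t^2/d)$. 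The second factor, by \eqref{equation1} applied at $d$ and at $d+t$ (both of which lie in the valid range when $t \leq \sqrt d \leq \sqrt n$, provided we absorb the shift into the constants — $d+t \leq (1-\alpha)(1-1/q)n + \sqrt n \leq (1-\alpha/2)(1-1/q)n$ for large $n$, and I'd handle small $n$ by adjusting $C_\alpha$), equals $\frac{(1+O_\alpha(\log n/n))(n-d-t+1)/(n-\frac{q}{q-1}(d+t)+1)}{(1+O_\alpha(\log n/n))(n-d+1)/(n-\frac{q}{q-1}d+1)}$, and since $t \leq \sqrt n$ both of those rational functions change by a factor $1 + O_\alpha(t/n) = 1 + O_\alpha(1/\sqrt n)$, hence the ratio is $1 + O_\alpha(\log n/n + t/n)$, and $t/n \leq t^2/d$ when $d \leq tn$, which holds since $d \leq n$ and $t \geq 1$... wait, need $d/n \leq t$, i.e. $t \geq d/n$, true as $t\geq 1 \geq d/n$. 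So the second factor is $1 + O_\alpha(\log n/n + t^2/d)$, and multiplying the two factors gives the claim.

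The main obstacle I anticipate is bookkeeping the error terms carefully enough to land exactly on $O_\alpha(\log n/n)$ rather than $O_\alpha(\log^2 n/n)$ in \eqref{equation1}: the naive term-by-term comparison of $\sum r_j$ with the geometric series loses a logarithmic factor, and one has to exploit that $r_j \leq \rho^j$ (so errors don't accumulate additively over all $\asymp\log n$ terms but are instead controlled by the first couple of terms times a geometric tail) together with the precise behavior when $\rho \to 1$, i.e. when $d$ approaches its upper limit $(1-\alpha)(1-1/q)n$. A secondary, more routine obstacle is verifying that the shifted argument $d+t$ stays within the hypotheses of \eqref{equation1} uniformly, which just requires choosing $C_\alpha$ in the final proposition slightly larger than the one coming from \eqref{equation1} and shrinking $\alpha$ to $\alpha/2$ in the intermediate applications; the low range $d = \Theta(\log n)$ needs a separate sanity check since there $t \leq \sqrt d = \Theta(\sqrt{\log n})$ is tiny and the estimates are comfortable.
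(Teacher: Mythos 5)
Your high-level plan for \eqref{equation1} — factoring out the top term and comparing $\sum_{j}r_j$ to the geometric series $\sum_j\rho^j$ with $\rho=\tfrac{d}{(n-d+1)(q-1)}$, then using $\tfrac{1}{1-\rho}=\tfrac{n-d+1}{n-\frac{q}{q-1}d+1}$ — is the same starting point as the paper's, and it can be made to work. But your error analysis, as written, contains errors that you would have to fix. First, you state that $r_j$ differs from $\rho^j$ by the factor $\prod_{\ell<j}\tfrac{n-d+1}{n-d+1+\ell}$; this is only half of it. The true factor is
\[
\frac{r_j}{\rho^j}=\prod_{\ell=0}^{j-1}\Bigl(1-\frac{\ell}{d}\Bigr)\cdot\prod_{\ell=0}^{j-1}\frac{n-d+1}{n-d+1+\ell},
\]
and the omitted piece $\prod(1-\ell/d)$ is the dominant one when $d$ is near the lower end $C_\alpha\log n$: there it is $1-O(j^2/d)$, which is far larger than $O(j^2/n)$. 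Second, your parenthetical ``the lower bound on $d$ ensures $\rho$ is not too small'' is false — at $d\asymp\log n$ we have $\rho\asymp(\log n)/n$, very small. (It is the \emph{upper} bound on $d$ that keeps $\rho$ bounded away from $1$ and makes the tail geometric.) Third, your self-correction is pointing in the right direction (the weighting by $\rho^j$ means the error is dominated by $j=2$ and decays geometrically), but it uses ``deviation $O(1/n)$'' for the $j=2$ term, which again drops the $1/d$ contribution. If you carry the full computation, the error is $\sum_{j\geq2}\rho^j\,O\!\bigl(j^2(\tfrac1d+\tfrac1{n})\bigr)=O_\alpha\!\bigl(\rho^2(\tfrac1d+\tfrac1{n})\bigr)=O_\alpha\!\bigl(\tfrac{d}{n^2}+\tfrac{d^2}{n^3}\bigr)=O_\alpha(1/n)$, so the conclusion holds — in fact with a better bound than $O(\log n/n)$ — but none of this is actually on the page; you would also still need to dispose of the truncation $\sum_{j>d}\rho^j$, using $\rho\le1-\alpha'$ together with $d\ge C_\alpha\log n$.

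The paper sidesteps all of this with a cleaner two-sided bound: the upper bound $\sum r_j\le\tfrac{1}{1-\rho}$ is the same as yours, but the lower bound is obtained by truncating at $t=-\log n/\log(1-\alpha/2)=O_\alpha(\log n)$ terms and replacing every ratio by the uniform worst case $\rho'=\tfrac{d-t+1}{(n-d+t)(q-1)}$, giving $\sum_{j\le t}r_j\ge\tfrac{1-(\rho')^t}{1-\rho'}$. Since $\rho'\le1-\alpha/2$ forces $(\rho')^t\le1/n$, and the shift $\rho\to\rho'$ changes $\tfrac{1}{1-\rho}$ multiplicatively by $1+O_\alpha(t/n)=1+O_\alpha(\log n/n)$, the paper never needs to sum per-term errors at all. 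Your part \eqref{eq: pdnpdtn} is essentially identical to the paper's (both estimate $\binom{n}{d+t}/\binom{n}{d}$ to within $1+O_\alpha(t^2/d)$ of $((n-d)/d)^t$ and combine with part one); the only point to watch, which you do flag, is that $d+t$ must stay in the admissible range, a routine adjustment of constants.
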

\begin{proof}
We have
\begin{align*}
 \frac{\pfp_{d}}{{n \choose d}q^{-n}(q-1)^d}  &= 
 1 + \frac{1}{q-1}\frac{d}{n-d+1}+\left(\frac{1}{q-1}\right)^2\frac{d}{n-d+1}\frac{d-1}{n-d+2} + \dots\\
 &\hspace{1cm}+\left(\frac{1}{q-1}\right)^d\frac{d(d-1)\cdots 1}{(n-d+1)(n-d+2)\cdots n}\\
 &\le \frac{1}{1-\frac{d}{(n-d+1)(q-1)}}\\
&= {\frac{n-d+1}{n-\brk{\frac{q}{q-1}}d+1}}.
\end{align*}
On the other hand, for any positive integer $t\leq d$ we have 
\begin{align*}
  \frac{\pfp_{d}}{{n \choose d}q^{-n}(q-1)^d} &\ge 
  1 +  \frac{1}{q-1}\frac{d}{n-d+1}+ \left(\frac{1}{q-1}\right)^2\frac{d}{n-d+1}\frac{d-1}{n-d+2} 
  + \dots\\
&\hspace{1cm}+ \left(\frac{1}{q-1}\right)^t\frac{d(d-1)\cdots (d-t+1)}{(n-d+1)(n-d+2)\cdots (n-d+t)}\\
&  \ge \frac{1-\big(\frac{d-t+1}{(n-d+t)(q-1)}\big)^t}{1-\frac{d-t+1}{(n-d+t)(q-1)}}\\
& = \brk{1-\Big(\frac{d-t+1}{(n-d+t)(q-1)}\Big)^t}
\frac{n-d+1+(t-1)}{n-\frac{q}{q-1}(d-t+1)+1}.
\end{align*}

Observe that 
\begin{align*}
  \frac{n-d+1+(t-1)}{n-\frac{q}{q-1}(d-t+1)+1}
  &= \frac{n-d+1+(t-1)}{n-d+1}
  \frac{n-(\frac{q}{q-1})d+1}{n-\frac{q}{q-1}d+1+\frac{q}{q-1}(t-1)}\,\,
  \frac{n-d+1}{n-\brk{\frac{q}{q-1}}d+1}\\
  &= 
    \brk{1+\frac{t-1}{n-d+1}} 
    \brk{1+\frac{\frac{q}{q-1}(t-1)}{n-(\frac{q}{q-1})d+1}}^{-1}
  \frac{n-d+1}{n-\brk{\frac{q}{q-1}}d+1}.
\end{align*}
With $d\le (1-\alpha)(1-\frac{1}{q})n$, we have
$$
  n-d+1 > n-\Big(\frac{q}{q-1}\Big)d+1 \ge \alpha n.
$$
Thus, if $t\leq \alpha n$, we obtain 

{

\begin{align}\label{eq:ratio1}
  \frac{n-d+1+(t-1)}{n-\frac{q}{q-1}(d-t+1)+1}= 
  \brk{
    1+ O\Big(\frac{t}{\alpha n}\Big)
  }\frac{n-d+1}{n-\brk{\frac{q}{q-1}}d+1}.
\end{align}
}
Taking $t=-\frac{\log n}{\log \brk{1-\frac{1}{2}\alpha}}$, and using the assumption on $d$, we get for all large enough $n$
{
\begin{align*}
  \brk{\frac{d-t+1}{(n-d+t)(q-1)}}^t &\le 
  \brk{
    \frac{(1-\alpha)(1-\frac{1}{q})n +1}
    {(n-(1-\frac{1}{q})n)(q-1)}
  }^t
  \le \brk{1-\frac{1}{2}\alpha}^t
  = \frac{1}{n}.
\end{align*}
}
Combining the above, we get \eqref{equation1}. 

\medskip

Next, observe that for $t\le n-d$ we have 
\begin{align}\label{binom-assympto}
  \left(\frac{n-d-t+1}{d+t}\right)^t  {n \choose d} 
  \le {n \choose d+t} 
  \le \left(\frac{n-d}{d+1}\right)^t  {n \choose d}
\end{align}
and 

{ 
\begin{align*}
  \frac{n-d-t+1}{d+t} &= \frac{n-d-t+1}{n-d}\,\frac{d+1}{d+t}\,\frac{n-d}{d+1}\\
  &= \Big(1-\frac{t-1}{n-d}\Big)\Big(1-\frac{t-1}{d+t}\Big)\frac{n-d}{d+1}\\
  &= \Big(1+O_\alpha\Big(\frac{t}{d+t}\Big)\Big) \frac{n-d}{d+1}.
\end{align*}
Hence, when $t\leq\sqrt{d}$, we have 
\begin{align*}
  \frac{{n \choose d+t} }{{n \choose d} } 
  =  \Big(1+O_\alpha\Big(\frac{t^2}{d}\Big)\Big)\brk{\frac{n-d}{d+1}}^t.
\end{align*}
Combining this with \eqref{equation1} and \eqref{eq:ratio1}, we get 
$$
 \frac{\pfp_{d+t}}{\pfp_{d}}=\brk{\frac{(q-1)(n-d)}{d}}^{t}\Big(1+O_\alpha\Big(\frac{\log n}{n}+\frac{t^2}{d}\Big)\Big).
$$
}

%Now we restrict $t = o(\sqrt{d})$, 
%$$
  %\frac{n-d-t+1}{d+t} = \frac{n-d-t+1}{n-d+1}\frac{d}{d+t}\frac{n-d+1}{d}
  %\ge \brk{1-c\frac{t}{n}}\brk{1-c\frac{t}{d}}\frac{n-d+1}{d}.
%$$
%Then, 
%$$
%(\frac{n-d-t+1}{d+t})^t 
%\ge (1-O(\frac{t^2}{d}))(\frac{n-d+1}{d})^t 
%$$
%where we use 
%$$
  %(1-c\frac{t}{d})^t \ge \exp(-c'\frac{t^2}{d} \ge 1-c'\frac{t^2}{n}
%$$
%and the first inequality holds due to $t=o(d)$. Thus, we have 
%\begin{align*}
  %(1-O(\frac{t^2}{d}))
  %(\frac{n-d+1}{d})^t   {n \choose d} 
  %\le {n \choose d+t} 
  %\le (\frac{n-d+1}{d})^t  {n \choose d}
%\end{align*}
%Using this inequalities, we conclude that 
%\begin{align*}
 % p_{d+t,n} &\le p_{d,n}+2^{-n}{n \choose d}
  %\brk{\frac{n-d+1}{d}
   % \frac{\brk{\frac{n-d+1}{d}}^{t+1}-1}{\frac{n-d+1}{d}-1}
  %}\\
  %& = p_{d,n}+2^{-n}{n \choose d}\frac{n-d+1}{n-2d+1}
  %\brk{\brk{\frac{n-d+1}{d}}^{t+1}-1}\\
  %&\le 2^{-n}{n \choose d}\frac{n-d+1}{n-2d+1}
  %\brk{\frac{n-d+1}{d}}^{t+1}\\
  %&\le (1+O\brk{\frac{\log(n)}{\varepsilon^2 n}})\brk{\frac{n-d+1}{d}}^{t+1}p_{d,n}.
%\end{align*}
%Similarly, we can derive the lower bound.
\end{proof}

\bigskip

%From Theorem \ref{main} and Proposition \ref{prop: 09860986} it also follows that we can improve the Gilbert--Varshomov by $O(\log n /n)$.  To see this, let $d$ be the critical value that satisfies $R = 1 - H_q(d/n)$, then $\rho_d = \frac{1}{q^k}$.  Let $m = \frac{q^k - 1}{q-1}$ and $\varepsilon$ a constant we will decide later.  There exists constants $c_1, c_2 > 0$, s.t.
%  \begin{align*}
%    \PP (w_{min} \ge d + \varepsilon n) &= (1 - \rho_{d + \varepsilon \log n})^m \\
%    &\ge \exp( - c_1 \rho_{d + \varepsilon n}m) \\
%    &=\exp( - c_1 \rho_d \cdot m \cdot \frac{\rho_{d+ \varepsilon \log n}}{\rho_d}) \\
%    &\ge \exp(-c_1 \rho_d \cdot \frac{1}{q-1} \cdot \frac{1}{2} \cdot (\frac{(q-1)d}{n-d})^{\varepsilon \log n})\\
%    &= \exp(-c_2 \cdot n ^{\varepsilon \log ( \frac{(q-1)d}{n-d})})
%  \end{align*}
%  Choose $R$ such that $\frac{(q-1)d}{n-d}>1$ and $\varepsilon > 0$ such that $\varepsilon \log c_3 \le \frac{1}{3}$.  Then $\PP (w_{min} \ge d + \varepsilon n) = \Omega(\exp(n^{-\frac{1}{3}}))$.  Since $\PP (w_{min} \ge d + \varepsilon n) - \PP (d_{min} \ge d + \varepsilon n) = O(\exp(-\sqrt{n})$, we have $\PP (d_{min} \ge d + \varepsilon n) > 0$, which implies we can always find a code with $d_{min} \ge d + \varepsilon \log n$.

% \end{proof}

The next proposition provides an approximation of the minimum of independent binomial variables in terms of the Gumbel distribution.
Although the computations seem to be rather standard, we prefer to include them for reader's convenience.

\begin{prop}\label{gumbleprop} Fix $q\ge 2$ and $\alpha\in(0,1)$. Let $q^{\alpha n}\leq m\leq q^{(1-\alpha)n}$ and
  let $d_0$ be the largest integer such that $\rho_{d_0}m\leq 1$. 
    Let $X_1,\dots,X_m$ be i.i.d.\ binomial random variables with 
    parameters $n$ and $\frac{1}{q}$, i.e.\
  $$\PP\{X_j=a\}={n \choose a}\Big(1-\frac{1}{q}\Big)^a q^{a-n},\quad a=0,1,\dots,n,$$
  and set $Y:=\min_{j=1,\dots,m}X_j.$ 
  Then
  $$
  \PP\left\{Y-d_0> \frac{-t}{{\log \frac{(q-1)(n-d_0)}{d_0}}}
  -\frac{\log (\rho_{d_0} m)}{\log \frac{(q-1)(n-d_0)}{d_0}}\right\}
  =o_{\alpha,q}(1)+\exp\big(-e^{-t}\big),
  $$
  for all $t\in \log \frac{(q-1)(n-d_0)}{d_0} \mathbb{Z}-\log(\rho_{d_0}m)$.
  \end{prop}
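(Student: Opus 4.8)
The plan is to reduce the statement to a direct computation using the asymptotic expansions for $\rho_d$ recorded in Proposition~\ref{prop: 09860986}. By independence, $\PP\{Y > d\} = (1-\rho_d)^m$ for every integer $d$, so for a parameter value of the form $d = d_0 - s$ with $s$ a nonnegative integer (the arithmetic-progression restriction on $t$ corresponds exactly to integer shifts $d_0 - d$), we have $\PP\{Y - d_0 > -s\} = (1-\rho_{d_0 - s})^m$, and we must show this is close to $\exp(-e^{-t})$ where $t = s\log\frac{(q-1)(n-d_0)}{d_0} - \log(\rho_{d_0}m)$. So the whole task is to control $(1-\rho_{d_0-s})^m$.

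\medskip

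\textbf{Step 1: locating $d_0$ and its order of magnitude.} First I would verify that under the hypothesis $q^{\alpha n}\le m\le q^{(1-\alpha)n}$, the integer $d_0$ lies in a window $[\alpha' n, (1-\alpha')(1-1/q)n]$ for some $\alpha' = \alpha'(\alpha,q) > 0$, and in particular $d_0 \ge C_{\alpha'}\log n$, so that Proposition~\ref{prop: 09860986} applies at $d_0$ (and at $d_0 - s$ for the relevant range of $s$). This follows because $\rho_d$ is, up to the polynomial correction factor $\frac{n-d+1}{n-(q/(q-1))d+1} = \Theta_\alpha(1)$, equal to $\binom{n}{d}q^{-n}(q-1)^d = q^{-n(1 - H_q(d/n) + o(1))}$, which is monotone in $d$ on the relevant range; the constraint $\rho_{d_0}m \le 1 < \rho_{d_0+1}m$ then pins $d/n$ to the solution of $H_q(\delta) = 1 - \log_q m / n \in [\alpha, 1-\alpha]$ up to lower-order terms.

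\medskip

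\textbf{Step 2: the ratio identity and the key estimate.} Write $u := \rho_{d_0}m \in (0,1]$ (this is the ``$\rho_{d_0}m$'' in the statement). Using \eqref{eq: pdnpdtn} of Proposition~\ref{prop: 09860986} with $t = s \le \sqrt{d_0}$ applied in reverse, I get
\[
\rho_{d_0 - s}\, m = \rho_{d_0}\, m \cdot \frac{\rho_{d_0-s}}{\rho_{d_0}} = u\left(\frac{d_0 - s}{(q-1)(n-d_0+s)}\right)^{s}\Big(1 + O_\alpha\big(\tfrac{\log n}{n} + \tfrac{s^2}{d_0}\big)\Big).
\]
One checks that $\frac{d_0-s}{(q-1)(n-d_0+s)} = \big(\frac{(q-1)(n-d_0)}{d_0}\big)^{-1}\big(1 + O_\alpha(s/n)\big)$, so $\rho_{d_0-s}m = u\,\big(\frac{(q-1)(n-d_0)}{d_0}\big)^{-s}\big(1 + O_\alpha(\tfrac{\log n}{n} + \tfrac{s^2}{d_0} + \tfrac{s^2}{n})\big) = e^{-t}\big(1 + o_\alpha(1)\big)$, provided $s$ is restricted to a range like $s \le d_0^{1/4}$ (equivalently $t$ bounded away from $-\infty$ at a slow rate) where the multiplicative error genuinely tends to $0$; note $\frac{(q-1)(n-d_0)}{d_0}$ is bounded away from $1$ by Step~1, so the map $s \mapsto t$ has bounded-below derivative and this range of $s$ covers all $t$ up to $\omega(1)\cdot$(something $\to\infty$).

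\medskip

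\textbf{Step 3: passing from $\rho_{d_0-s}m$ to the probability, and the tails.} On the range where $\rho_{d_0-s}m = e^{-t}(1+o(1))$ with $\rho_{d_0-s} \le n^{-c}$ say, I use $(1-\rho_{d_0-s})^m = \exp\big(m\log(1-\rho_{d_0-s})\big) = \exp\big(-\rho_{d_0-s}m(1 + O(\rho_{d_0-s}m/m))\big)$... more carefully $m\log(1-p) = -mp - mp^2/2 - \cdots = -mp(1 + O(p))$ since $mp$ is bounded, so $(1-\rho_{d_0-s})^m = \exp(-\rho_{d_0-s}m)(1+o(1)) = \exp(-e^{-t}(1+o(1)))(1+o(1)) = \exp(-e^{-t}) + o(1)$, uniformly. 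For the two tails I argue separately: when $t \to +\infty$ (i.e. $s$ large negative, so $d_0 - s$ large), $\rho_{d_0-s}m \to 0$ and $(1-\rho_{d_0-s})^m \to 1 = \exp(-e^{-t}) + o(1)$; when $t \to -\infty$ (i.e. $s$ large positive), $\rho_{d_0-s}m \to \infty$ and $(1-\rho_{d_0-s})^m \to 0 = \exp(-e^{-t}) + o(1)$. The delicate point is that for $t\to-\infty$ the expansion of Step~2 only holds for $s \le \sqrt{d_0}$, but there one can instead use the crude bound $\rho_{d_0-s} \ge \rho_{d_0-1}$ together with $\rho_{d_0-1}m = u\cdot\frac{\rho_{d_0-1}}{\rho_{d_0}} \ge c_\alpha u \cdot \frac{(q-1)(n-d_0)}{d_0} \gg 1$ (using $u \le 1 < \rho_{d_0+1}m$ to lower-bound $u$) and monotonicity of $\rho$, so $(1-\rho_{d_0-s})^m$ is already $o(1)$ for all $s \ge 1$ with $t$ sufficiently negative.

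\medskip

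\textbf{Main obstacle.} The main subtlety is the matching of ranges: the Gumbel approximation is asymptotic in the ``bulk'' regime $|t| = o(\text{slowly growing})$, and one must certify that the error term $O_\alpha(\frac{\log n}{n} + \frac{s^2}{d_0})$ in \eqref{eq: pdnpdtn} is genuinely $o(1)$ precisely on a range of $s$ whose image under $s \mapsto t$ exhausts $\R$ as $n\to\infty$ (after accounting for the $\log n$ growth of the Gumbel's own tails), while handling the complementary ``far tail'' values of $t$ by the soft monotonicity estimates above rather than the sharp ratio formula. Everything else is bookkeeping with the expansions already proved.
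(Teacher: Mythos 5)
Your overall strategy is exactly the paper's: reduce to $\PP\{Y>d\}=(1-\rho_d)^m$, feed in the ratio asymptotics \eqref{eq: pdnpdtn} at $d=d_0\pm s$ in the ``bulk'' window, verify that $\rho_{d_0}m$ is bounded below and $\log\frac{(q-1)(n-d_0)}{d_0}$ is bounded away from $0$, and handle the far tails by monotonicity. Steps~1 and~2 are fine and match the paper's proof.

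However, Step~3 contains a systematic sign/direction confusion in the tail analysis which, as written, makes the argument incorrect. With $d=d_0-s$ and $t=s\log\frac{(q-1)(n-d_0)}{d_0}-\log(\rho_{d_0}m)$, where $\log\frac{(q-1)(n-d_0)}{d_0}>0$ and $-\log(\rho_{d_0}m)\ge 0$, one has: $s$ large \emph{positive} $\Rightarrow$ $t\to+\infty$ and $d$ small, hence $\rho_d m\to 0$; $s$ large \emph{negative} $\Rightarrow$ $t\to-\infty$ and $d$ large, hence $\rho_d m\to\infty$. Your parentheticals identify these backwards. More seriously, your key estimate for the $t\to-\infty$ tail, namely
$\rho_{d_0-1}m=u\cdot\frac{\rho_{d_0-1}}{\rho_{d_0}}\ge c_\alpha u\cdot\frac{(q-1)(n-d_0)}{d_0}\gg1$,
is false: \eqref{eq: pdnpdtn} with $t=1$ gives $\rho_{d_0-1}/\rho_{d_0}=(1+o(1))\frac{d_0}{(q-1)(n-d_0)}<1$, so $\rho_{d_0-1}m\le\rho_{d_0}m\le1$, not $\gg1$. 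And the concluding clause ``for all $s\ge1$ with $t$ sufficiently negative'' is vacuous because $s\ge1$ forces $t\ge\log\frac{(q-1)(n-d_0)}{d_0}>0$. The quantity you actually need for the $t\to-\infty$ tail (i.e.\ $d>d_0$) is $\rho_{d_0+1}m$, which is $>1$ by the maximality defining $d_0$; combined with $\frac{\rho_{d_0+j}}{\rho_{d_0}}\approx\bigl(\frac{(q-1)(n-d_0)}{d_0}\bigr)^j$ for $j$ up to $\sqrt{d_0}$ and the fact that the base is bounded away from $1$, this makes $\rho_d m\to\infty$ and hence $(1-\rho_d)^m\to0$. (Your lower bound on $u$ via $u\le1<\rho_{d_0+1}m$ is the right idea and matches the paper's $\rho_{d_0}m\ge\tilde c$; just apply that logic consistently on the correct side of $d_0$.) Once the signs are fixed, your argument coincides with the paper's.
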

  
  \begin{proof}
  Without loss of generality, we can assume that $n\geq C_{\alpha,q}$ for a large constant $C_{\alpha,q}$
  depending on $\alpha,q$.
  Observe that there is $c\in(0,1)$ depending only on $\alpha$ and $q$ such that the condition on $m$ implies $c n\leq d_0\leq (1-c)(n-n/q)$. 
  For any integer $d$ we have
  $$
  \PP\left\{Y\leq d\right\}=1-\left(1-\PP\{X_1\leq d\}\right)^m,
  $$
  where in our notation,
  $$\PP\{X_1\leq d\}=\pfp_d.$$
  Thus,
  $$
  1-\PP\left\{Y\leq d\right\}=\exp\big(-(1+o_{\alpha,q}(1))\rho_d m\big)
  =\exp\Big(-(1+o_{\alpha,q}(1))\rho_{d_0} m\;\frac{\rho_d}{\rho_{d_0}}\Big),
  $$
  whenever { $d$ is an integer less equal to $(1-c/2)(n-n/q)$}, so that $\rho_d=o_{\alpha,q}(1)$.
  Further, applying the second assertion of 
  Proposition \ref{prop: 09860986}, we get that for 
  any integer $d$ with $|d-d_0|= o(\sqrt{d_0})$,
  $$
  \frac{\rho_d}{\rho_{d_0}}
      =\left(1+o_{\alpha,q}(1)\right)
      \left(\frac{(q-1)(n-d_0)}{d_0}\right)^{d-d_0}.
  $$
  Hence,
  $$
  1-\PP\left\{Y\leq d\right\}=
  \exp\bigg(-(1+o_{\alpha,q}(1))\rho_{d_0} m\;\left(\frac{(q-1)(n-d_0)}{d_0}\right)^{d-d_0}\bigg),\quad |d-d_0|= o(\sqrt{d_0}).
  $$
  Further, note that by our conditions on $m$ we have $\rho_{d_0} m\geq \tilde c$ for some $\tilde c>0$ depending only on $\alpha$ and $q$.
  Moreover, $\frac{(q-1)(n-d_0)}{d_0}>\frac{1}{1-c}$, by the above obvervation for $d_0$.
  Thus, we can write
  $$
  1-\PP\left\{Y\leq d\right\}=o_{\alpha,q}(1)+\exp\bigg(-\rho_{d_0} m\;\left(\frac{(q-1)(n-d_0)}{d_0}\right)^{d-d_0}\bigg)
  $$
  for {\it all} integers $d$, or, in other form,
  $$
  1-\PP\left\{Y-d_0\leq \frac{t}{{\log \frac{(q-1)(n-d_0)}{d_0}}}
  -\frac{\log (\rho_{d_0} m)}{\log \frac{(q-1)(n-d_0)}{d_0}}\right\}
  =o_{\alpha,q}(1)+\exp\big(-e^{t}\big),\quad 
  $$
  for all $t\in \log \frac{(q-1)(n-d_0)}{d_0} \mathbb{Z}+\log(\rho_{d_0}m)$.
  The result follows.
  \end{proof}

%\bigskip

%ALTERNATIVELY,

%\begin{prop}\label{gumbleprop2}
%Fix $n\geq 1$, $p\geq 2$ and $m=2^k$. Let $X_1,...,X_m$ be i.i.d. Binomial random variables with parameters $n$ and $\frac{1}{p}$: that is, for $a\in \{0,...,n\}$,
%$$\PP\left(X_j=a\right)={n \choose a}(1-\frac{1}{p})^a p^{-n+a}.$$
%and denote by $\widetilde d=\widetilde d(n,k,p)$ the largest integer in the interval
%$\big[C_\delta, (1-\delta)\big(1-\frac{1}{p}\big)\big]$ such that
%$$\frac{(p-1)(n-\widetilde d)}{np-\widetilde dp-n}{n\choose \widetilde d}(p-1)^{\widetilde d} p^{-n}\leq 2^k.$$
%Then for any
%$C_\delta\leq d\leq (1-\delta)\big(1-\frac{1}{p}\big) n$,
%$$\PP\big\{\min\limits_{i\leq 2^k}\wt{X_i}\leq d\big\}=1-\exp\bigg(-\frac{(p-1)(n-d)}{np-dp-n}{n\choose d}(p-1)^d p^{-n}\,2^k\bigg)+O_{\delta}(1/n).$$
%\end{prop}
%\begin{proof}
%Observe that
%\begin{align*}
%\PP\big\{\min\limits_{i\leq 2^k}\wt{X_i}\leq d\big\}
%&=1-\PP\big\{\wt{X_1}\geq d+1\big\}^{2^k}\\
%&=1-\big(1-\pfp_{n,d}\big)^{2^k}\\
%&=1-\exp\big(-\pfp_{n,d}\,2^k+O_{\varepsilon,p}(\pfp_{n,d}^2)\,2^k\big),
%\end{align*}
%where we have used the trivial identity $1-t=\exp(-t+O_{\delta}(t^2))$, $t\in[0,1-\delta]$.
%Further, we can consider two cases:
%\begin{itemize}
%\item Assume that $\pfp_{n,d}\geq 2^{-3k/4}$. Then
%$$
%\big(1-\pfp_{n,d}\big)^{2^k}\leq\exp(-2^{k/4}).
%$$

%\item Otherwise, if $\pfp_{n,d}< 2^{-3k/4}$ then
%$$\PP\big\{\min\limits_{i\leq 2^k}\wt{X_i}\leq d\big\}=1-(1+O_{\varepsilon,p}(2^{-k/2}))\exp\big(-\pfp_{n,d}\,2^k\big),$$
%and the result follows.
%\end{itemize}
%\end{proof}

It is not difficult to see that the above proposition and the main theorem of the paper imply Theorem \ref{thm:limiting}.

%\subsection*{Derivation of Theorem \ref{thm:limiting} from Proposition \ref{gumbleprop}}
%
%Let $G$ be the standard Gumbel random variable with distribution function $F(x)=e^{-e^{-x}}$. We need to show that for any compact set $A\subset\R$,
%$$\PP(-\gamma+\frac{\sqrt{\pi}}{6}\frac{Y-\E Y}{\sqrt{Var(Y)}}\in A)-\PP(G\in A)\rightarrow 0$$
%when $m,n\rightarrow 0$. We note that for any compact set $A$ there exists $n\in \mathbb{N}$ such that $A\subset [-c' n, c'' n]$, for any pair of constants $c',c''$. Hence the range of $d$ for which the estimate for $P(Y\leq d)$ is sufficient for the convergence. The statement of the theorem therefore follows by normalization. $\square$

\section{Moments comparison for proportional codes}
\label{sec: moments comparison}

Fix $a\in\R^k$ and $d\geq 0$. Given the independent random vectors $X_1,\dots,X_k$ uniform on $\F^n$,
we define
$$
Z_d:=\sum_{a\in \F^k \backslash \set{0}}{W_a}(d),\quad d\geq 0,
$$
where $W_a(d)$ is the indicator of the event 
$$\Big\{\wtB{\sum_{i=1}^k a_i X_i}\le d\Big\}.$$

For any $a,b\in \F^k \backslash \set{0}$,
we say $a$ and $b$ are proportional if there exists $f \in \F\backslash \set{0}$ 
such that $a=f\,b$ (here the multiplication is in the field $\F$).
Notice that if $a$ and $b$ are proportional, then, $\sum_{i=1}^k a_iX_i$ and $\sum_{i=1}^k b_iX_i$ are proportional
as well.
In particular, the supports of the linear combinations are the same, and thus $W_a(d)=W_b(d)$ whenever $a$ and $b$ are proportional.

Let $\set{Y_a}_{a\in \F \backslash \set{0}}$ 
be random vectors uniformly distributed on $\F^n$ and mutually independent
up to the constraint that $Y_a = Y_b$ whenever $a$ and $b$ are proportional. Define
$$
   \widetilde{Z_d} := \sum_{a\in \F^k \backslash \set{0}}
    \widetilde{W_a}(d)
$$
where $\widetilde{W}_a(d)$ is the indicator function 
of the event $\{\wt{Y_a}\le d\}$.

The goal of this section is to compare the moments of $\widetilde{Z}_d$ and $Z_d$ assuming certain constraints on the parameters $n,k$ and $d$.
The main statement of the section is
\begin{prop}\label{prop: main sec 2}
    For any $\lambda_0 \in (0,1)$ there are $c_{\text{\tiny{\ref{prop: main sec 2}}}}(\lambda_0, q)>0$
and $C_{\text{\tiny{\ref{prop: main sec 2}}}}(\lambda_0, q)>0$ with the following property.
    Suppose $d,n\in \mathbb{N}$ satisfy
    $\frac{d}{n} \leq \lambda_0 (1-\frac{1}{q})$, and $d^2/n^{3/2}\geq C_{\text{\tiny{\ref{prop: main sec 2}}}}(\lambda_0, q)$.
    Then for any positive integer $m\le c_{\text{\tiny{\ref{prop: main sec 2}}}}(\lambda_0, q)d^2/n^{3/2}$ such that
    $q^k\rho_d\geq \exp\big(-\frac{c_{\text{\tiny{\ref{prop: main sec 2}}}}(\lambda_0, q)d^4}{n^3 m}\big)$, we have 
    $$
        \E {Z_d}^m = \big(1+O(\exp(-c_{\text{\tiny{\ref{prop: main sec 2}}}}(\lambda_0, q)d^4/n^3))+O(2^{-k/2})\big) \,
        \E\widetilde{Z_d}^m.
    $$
\end{prop}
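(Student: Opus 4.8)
The plan is to expand both $\E Z_d^m$ and $\E \widetilde{Z_d}^m$ as sums over $m$-tuples $(a^{(1)},\dots,a^{(m)})\in(\F^k\setminus\{0\})^m$ of the joint probabilities
$$
p(a^{(1)},\dots,a^{(m)}):=\PP\Big\{\wtB{\textstyle\sum_i a^{(j)}_i X_i}\le d\ \text{for all } j\le m\Big\}
$$
and the analogous $\widetilde p$ for the $Y$-ensemble, and to show these agree up to the claimed multiplicative error term-by-term, or at least after a suitable grouping. The key structural observation is that $p$ depends only on the matrix whose rows are $a^{(1)},\dots,a^{(m)}$ — more precisely on the subspace $V=\mathrm{span}(a^{(1)},\dots,a^{(m)})\subseteq\F^k$ together with the coordinates of each $a^{(j)}$ inside $V$ — because $\sum_i a^{(j)}_i X_i$ is a linear image of the independent block $(X_i)$. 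For the $Y$-ensemble, by contrast, $\widetilde p$ factors according to proportionality classes: if the $a^{(j)}$ fall into proportionality classes with representatives that are pairwise non-proportional, the corresponding $Y$-vectors are independent, so $\widetilde p$ is a product of single-vector probabilities $\rho_d$ over the distinct classes.

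The main steps I would carry out are: (1) Classify $m$-tuples by the dimension $r=\dim V$ of the span and, within that, by the pattern of proportionality classes. The dominant contribution, for both ensembles, comes from tuples where all $m$ vectors are pairwise non-proportional and ``generic'' — for the $X$-ensemble this means the linear forms $\sum_i a^{(j)}_i X_i$ behave like independent uniform vectors, which happens with overwhelming probability when $n$ is large relative to $m$. (2) For such generic tuples, show $p=(1+o(1))\rho_d^m=(1+o(1))\widetilde p$; the error here is governed by how far the joint law of the $m$ linear combinations is from a product of uniforms, which one controls via a Fourier/character computation on $\F^n$ (this is where the Linial–Mosheiff-style moment method enters). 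The deviation of $p$ from $\rho_d^m$ is of order roughly $q^{-\text{(codimension-type quantity)}}$, and summing the geometric-type series produces the $\exp(-cd^4/n^3)$ and $2^{-k/2}$ factors once one uses $q^k\rho_d\ge\exp(-cd^4/(n^3m))$ to ensure $\rho_d^m$ is not too small compared to the error. (3) Bound the total contribution of non-generic tuples (span of small dimension, or with proportional pairs, or where the linear forms are degenerate) by a union bound: there are few such tuples, their probabilities are at most $\rho_{\lfloor?\rfloor}$-type quantities, and the condition $d^2/n^{3/2}\ge C$ together with $m\le cd^2/n^{3/2}$ makes this negligible relative to $\E\widetilde Z_d^m\ge (q^k-1)^m\rho_d^{\,m}/(\text{something})$. (4) Do the same bookkeeping on the $\widetilde Z_d$ side — here the analysis is purely combinatorial since everything factors over proportionality classes — and compare.

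The main obstacle I expect is Step (2): quantifying precisely how close the joint distribution of $\big(\sum_i a^{(1)}_i X_i,\dots,\sum_i a^{(m)}_i X_i\big)$ is to $m$ independent uniform vectors on $\F^n$, with an error bound sharp enough to survive summation over all $\binom{q^k-1}{m}$-ish tuples and to produce a superpolynomially small — rather than merely polynomially small — discrepancy. This requires carefully tracking the coordinates of the tuple as a linear map $\F^k\to\F^m$ and understanding, via characters, the defect $\PP\{\wt{(\cdot)}\le d\ \forall j\} - \rho_d^m$ as a sum over nonzero ``dual'' vectors, each term weighted by something like a Krawtchouk-type sum; bounding these weights and showing the aggregate is controlled by the single parameter $d$ (through $n-\frac{q}{q-1}d>0$, which is exactly the regime $d/n\le\lambda_0(1-1/q)$ guarantees) is the delicate part. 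A secondary difficulty is handling the proportional/degenerate tuples cleanly: one must verify that identifying weights of proportional codewords — the only ``real'' dependency — is faithfully captured on both sides so that the comparison is genuinely an equality up to the stated error, which is why the lower bound $q^k\rho_d\ge\exp(-cd^4/(n^3m))$ must be invoked to prevent the main term from being swamped.
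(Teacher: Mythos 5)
Your proposal misidentifies where the difficulty actually lies, and as a result the central technical step of the paper is absent from your plan.

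Your Step (2) proposes a Fourier/character computation to estimate how far the joint law of $\big(\sum_i a^{(1)}_i X_i,\dots,\sum_i a^{(m)}_i X_i\big)$ is from a product of uniforms for ``generic'' tuples, and treats this as the main obstacle. But over a finite field this gap simply does not exist: if $a^{(1)},\dots,a^{(m)}$ are linearly independent in $\F^k$, then the $m$ linear combinations are \emph{exactly} mutually independent and uniform on $\F^n$ (this is the content of Corollary~\ref{cor: lin indep} — extend $a^{(1)},\dots,a^{(m)}$ to a basis and apply an invertible change of variables coordinate by coordinate). So for the full-rank tuples, $p=\rho_d^m=\widetilde p$ with no error at all, and no Fourier argument is needed. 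The entire contribution to the error in $\E Z_d^m - \E\widetilde Z_d^m$ comes from tuples of pairwise non-proportional vectors whose span has deficient rank $r<\ell$.

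For that deficient-rank contribution your Step (3) proposes a union bound (``there are few such tuples, their probabilities are at most $\rho$-type quantities''). This does not work. For rank-$r$ tuples, the naive bound $\rho_d^r$ together with the count $\approx q^{kr}$ gives roughly $(q^k\rho_d)^r$, while the main term is $\approx(q^k\rho_d)^\ell$. Since the hypotheses allow $q^k\rho_d<1$ (only $q^k\rho_d\ge\exp(-cd^4/(n^3m))$ is assumed), the ratio $(q^k\rho_d)^{r-\ell}$ can be \emph{large}, so counting alone cannot make the low-rank terms negligible. What is actually needed is the refinement proved in Proposition~\ref{prop: expectationFixedRank} and Lemma~\ref{prop: wtof2Sum}: when $v^{s+1}=\sum_{i\le s}c_iv^i$ with all $c_i\ne 0$, the event that all $s+1$ weights are $\le d$ forces the sum $S=\sum Y_i$ and $Y_s$ both to have weight close to $d$, and conditional on these weights the probability that $\wt{S+Y_s}\le d$ is at most $\exp(-cd^4/n^3)$ by a hypergeometric/Hoeffding concentration argument. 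This gain factor of $\exp(-cd^4/n^3)$ over the naive $\rho_d^r$ bound is exactly what the assumed lower bound on $q^k\rho_d$ is calibrated to absorb, and it is the mechanism you hint at but do not supply. Without it, the proof does not close.
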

Before proving the proposition we need to make some preparatory work.

Let $\ell\leq m\le k$ be positive integers. Suppose $I_1,\dots, I_\ell$ is a partitioning of $[m]$ into non-empty set.
Denote by $\Omega(I_1,\dots,I_\ell)$ the collection of all $m$--tuples
$(a^1,\dots, a^m)\in \brk{\F^k\backslash\set{0}}^m$
such that $a^i$ is proportional to $a^j$ if and only if $i,j\in I_t$ 
for some $t\in [\ell]$.
Further, define
\begin{equation}\label{eq: omega def}
\Omega_\ell:=\big\{(v^1,\dots, v^\ell) \in \brk{\F^k\backslash\set{0}}^\ell:\;\mbox{$v^1,\dots, v^\ell$ are pairwise non-collinear}\big\}.
\end{equation}
Note that 
there is a natural $(q-1)^{m-\ell}$--to--one mapping from $\Omega(I_1,\dots,I_\ell)$ onto
$\Omega_\ell$ which assigns $(a_{\min\{j\in I_t\}})_{t=1}^\ell$ to each $(a^1,\dots, a^m)$.

Now, in view of the above remarks,
\begin{align*}
  Z_d^m = \sum_{a^1,\dots, a^m \in \F^{k}\backslash\set{0}} \prod_{i=1}^m W_{a^i}(d)
    =  \sum_{\ell=1}^m\sum_{I_1,\dots,I_\ell} 
    \sum_{\,v^1,\dots, v^\ell \in  \Omega_\ell }
    \brk{q-1}^{m-\ell}\prod_{i=1}^\ell W_{v^i}^{\abs{I_i}}(d),
\end{align*}
where the second summation is taken over all partitions $I_1,\dots,I_\ell$ of $[m]$ into non-empty sets.
Notice that $W_{v^i}^{\abs{I_i}}(d)=W_{v^i}(d)$, so we can simplify the above representation to
\begin{align}\label{eq: -98529587-593}
  Z_d^m=\sum_{\ell=1}^m S(m,\ell)\brk{q-1}^{m-\ell}
  \brk{\sum_{\,v^1,\dots, v^\ell \in  \Omega_\ell }
  \prod_{i=1}^\ell W_{v^i}(d)},
\end{align}
where $S(m,\ell)$ is the number of ways to partition $[m]$ into $\ell$ non-empty sets
(a {\it{}Stirling number of the second kind}).
The above formula works for $\widetilde{Z}_d^m$ as well, up to replacing $W_{v^i}(d)$
with $\widetilde W_{v^i}(d)$.

\medskip

The central technical statement of the section is the following
\begin{prop} \label{prop: expectationFixedRank}
    For any $\lambda_0 \in (0,1)$ there are $c_{\text{\tiny{\ref{prop: expectationFixedRank}}}}(\lambda_0, q)>0$
and $C_{\text{\tiny{\ref{prop: expectationFixedRank}}}}(\lambda_0, q)>0$ with the following property.
  Suppose $d,n\in \mathbb{N}$ satisfy
  $\frac{d}{n} \leq \lambda_0 (1-\frac{1}{q})$ and $d\geq C_{\text{\tiny{\ref{prop: expectationFixedRank}}}}(\lambda_0, q)$. 
  Suppose further that $s\leq k$, and $\brk{v^1,\,v^2,\,\dots,\,v^{s}}$ are linearly independent vectors in $\F^k$, and that 
  $ v^{s+1} = \sum_{i=1}^s c_iv^i$ 
  for some $c_i \in \F \backslash \set{0}$. Then
  $$
    \E \prod_{i=1}^{s+1} W_{v^i}(d) \le C\rho_{d}^s \,\exp(-c_{\text{\tiny{\ref{prop: expectationFixedRank}}}}(\lambda_0, q)d^4/n^3)
  $$
  where $C>0$ is a universal constant.
\end{prop}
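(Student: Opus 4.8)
The plan is to condition on the realization of the vectors $Y_{v^1},\dots,Y_{v^s}$ (equivalently on the random vectors $\sum_i a_i X_i$ for a basis supporting $v^1,\dots,v^s$) and then estimate the conditional probability that $\wt{\sum_{i=1}^s c_i Y_{v^i}}\le d$ as well. First I would observe that $\prod_{i=1}^s W_{v^i}(d)$ depends only on $Y_{v^1},\dots,Y_{v^s}$, and by independence $\E\prod_{i=1}^s W_{v^i}(d)=\rho_d^s$. So it suffices to show that
$$
\E\Big[\mathbf 1\{\wt{Y_{v^{s+1}}}\le d\}\;\Big|\;Y_{v^1},\dots,Y_{v^s}\Big]\le C\exp(-c\,d^4/n^3)
$$
on an event of probability $\ge 1-O(\rho_d^s\exp(-cd^4/n^3))/\rho_d^s$ — more precisely, I would split according to whether the conditioning vectors are "typical." The key structural fact is that, conditioned on $Y_{v^1},\dots,Y_{v^s}$, the vector $Y_{v^{s+1}}=\sum_{i=1}^s c_i Y_{v^i}$ is \emph{deterministic}: each coordinate $(Y_{v^{s+1}})_j=\sum_i c_i (Y_{v^i})_j$ is a fixed linear combination in $\F$ of the coordinates of the conditioning vectors. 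Wait — that cannot give a genuine probability bound unless we exploit that the $Y_{v^i}$ are themselves uniform. The correct approach is coordinate-wise: the triples (or $(s+1)$-tuples) $\big((Y_{v^1})_j,\dots,(Y_{v^s})_j,(Y_{v^{s+1}})_j\big)$, $j=1,\dots,n$, are i.i.d.\ across $j$, each distributed as $(U_1,\dots,U_s,\sum_i c_i U_i)$ with $U_i$ i.i.d.\ uniform on $\F$. So $\wt{Y_{v^1}},\dots,\wt{Y_{v^s}},\wt{Y_{v^{s+1}}}$ is a sum of $n$ i.i.d.\ bounded vectors, and the event $\bigcap_{i=1}^{s+1}\{\wt{Y_{v^i}}\le d\}$ is a large-deviation event for this multivariate sum.

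The heart of the argument is then a large-deviations estimate. Let $p:=\Pr_U\{U_1\ne 0\}=1-1/q$ be the marginal probability a coordinate is nonzero, and let $p':=\Pr\{U_1\ne 0,\ \sum_i c_i U_i\ne 0\}$. One computes (this is the one genuine small computation) that $p'=p^2+$ correction: since $U_1$ and $\sum_i c_i U_i$ are each uniform but \emph{not} independent — conditioned on $U_1\ne 0$, the sum $\sum_i c_i U_i = c_1 U_1 + (\text{indep.\ uniform})$ is still uniform on $\F$, so actually $\Pr\{\sum_i c_i U_i\ne 0\mid U_1\ne 0\}=p$ when $s\ge 2$; when $s=1$ we have $v^{s+1}=c_1 v^1$ which is collinear — excluded. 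Hmm, so for $s\ge 2$ the two weights are essentially governed by $p^2$. The point I actually need: to have $\wt{Y_{v^1}}\le d$ \emph{and} $\wt{Y_{v^{s+1}}}\le d$ simultaneously forces the number of coordinates $j$ with $(Y_{v^1})_j\ne 0$ or $(Y_{v^{s+1}})_j\ne 0$ to be $\le 2d$, hence the number with \emph{both} nonzero is $\le 2d$ but also the number with at least one nonzero is $\le 2d$, which is a deviation of order $\asymp n - 2d\cdot\frac{1}{p(2-p)}$ below its mean $\approx n p(2-p)$... the cleanest route is: condition on $\wt{Y_{v^1}}\le d$ (probability $\rho_d$ as a factor); given the support $T$ of $Y_{v^1}$ with $|T|\le d$, the coordinates of $Y_{v^{s+1}}$ outside $T$ still include a uniform-on-$\F$ contribution from $v^2,\dots,v^s$ (using $s\ge 2$ and $c_i\ne 0$), so $\wt{Y_{v^{s+1}}}$ restricted to $[n]\setminus T$ stochastically dominates a $\mathrm{Bin}(n-d, p)$ variable; requiring the total weight $\le d$ then forces a binomial of mean $\ge p(n-d)\ge pn(1-\lambda_0(1-1/q))\gtrsim_\lambda n$ to be $\le d$, which by Chernoff is $\le \exp(-c n)$. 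This already gives a bound $\rho_d^s\cdot\exp(-cn)$, far stronger than $\rho_d^s\exp(-cd^4/n^3)$ (since $d\le n$ means $d^4/n^3\le n$). Iterating the same one-extra-vector estimate $s$ times recovers the $\rho_d^s$ prefactor cleanly, with only the final vector contributing the extra decay.

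I expect the main obstacle to be bookkeeping rather than a conceptual barrier: carefully handling the conditioning so that the "fresh randomness" in $Y_{v^{s+1}}$ from the directions $v^2,\dots,v^s$ is correctly isolated (the coordinates of $Y_{v^{s+1}}$ are \emph{not} independent of the supports of $Y_{v^1},\dots,Y_{v^s}$, so one must argue coordinate-by-coordinate that, conditionally, enough coordinates of $Y_{v^{s+1}}$ remain uniform), and making sure the Chernoff bound is uniform over the exponentially many choices of the conditioning configuration. A secondary technical point is the borderline regime where $d$ is close to its upper limit $\lambda_0(1-1/q)n$: there $p(n-d)$ is still $\gtrsim_{\lambda_0} n$, so the Chernoff bound survives, but the constants degrade as $\lambda_0\to 1$, which is why $c_{\text{\tiny{\ref{prop: expectationFixedRank}}}}$ depends on $\lambda_0$. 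The weaker target decay $\exp(-c d^4/n^3)$ (as opposed to $\exp(-cn)$) gives substantial slack, so I would not try to optimize constants; I would aim for the cleanest coupling-to-binomial argument that yields \emph{some} exponential-in-$n$ decay and then note $n\ge d^4/n^3$ for $d\le n$.
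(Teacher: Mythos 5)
The argument has a genuine gap, and in fact the claimed conclusion $\rho_d^s\exp(-cn)$ is impossible in the regime $d\ll n$. Consider the key step: you condition on $Y_{v^1}$ (with small support $T$) and claim $\PP\{\wt{Y_{v^{s+1}}}\le d\mid Y_{v^1}\}\le\exp(-cn)$. But conditioned on $Y_{v^1}$, the vector $Y_{v^{s+1}}=c_1Y_{v^1}+\sum_{i\ge2}c_iY_{v^i}$ is \emph{uniform} on $\F^n$ (its restriction to $[n]\setminus T$ is uniform as you note, and its restriction to $T$ is a deterministic shift of $\sum_{i\ge2}c_iY_{v^i}\mid_T$, which is also uniform and independent of the rest); hence that conditional probability equals exactly $\rho_d$, and the pair contributes $\rho_d^2$, with no gain. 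The $\exp(-cn)$ appears in your reasoning only because you compare $\mathrm{Bin}(n-d,1-1/q)$ to the threshold $d$, which gives a Chernoff decay \emph{only when} $d<(1-1/q)(n-d)$, i.e.\ $\lambda_0<q/(2q-1)$, so the argument also fails outright for, e.g., $q=2$, $\lambda_0\ge2/3$. Moreover, a simple lower bound rules out your rate even in the regime where Chernoff applies: choosing $\wt{Y_{v^i}}\le d/s$ for all $i\le s$ forces $\wt{Y_{v^{s+1}}}\le d$, so $\E\prod W_{v^i}(d)\ge\rho_{d/s}^s$, and Proposition~\ref{prop: 09860986} shows the gain over $\rho_d^s$ is at most $\exp(-O(sd))$, not $\exp(-cn)$.

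The paper's proof is driven by a different, more delicate mechanism. After conditioning out $s-2$ of the vectors, it studies the \emph{triple} event $\{\wt{Y_{s-1}}\le d,\ \wt{Y_s}\le d,\ \wt{S+Y_s}\le d\}$ with $S=\sum_{i<s}Y_i$, splits on whether $\wt{S}$ and $\wt{Y_s}$ are close to $d$ (the atypical alternatives contribute $\rho_d^2\exp(-cd)$ via Proposition~\ref{prop: 09860986}), and in the main case invokes Lemma~\ref{prop: wtof2Sum}. That lemma bounds $\PP\{\wt{V+Y}\le d\}$ for a fixed $V$ of moderate weight and $Y$ of weight $\approx d$ by controlling (i) the size of the intersection of the two supports via hypergeometric concentration, and (ii) the amount of cancellation on the intersection via Hoeffding. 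The two steps give $\exp(-cd^3/n^2)$ and $\exp(-cd^4/n^3)$ respectively, and the weaker one sets the rate. This joint support-overlap analysis is the idea missing from your proposal: the extra decay does not come from $\wt{Y_{v^{s+1}}}$ being marginally unlikely to be small (that is merely another $\rho_d$), but from the conflict between requiring $\wt{Y_s}$, $\wt{S}$, and $\wt{S+Y_s}$ to be simultaneously of size $\le d$ while the supports of $S$ and $Y_s$ are nearly disjoint.
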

Let us postpone the proof, and proceed with the argument.
As a corollary of the above statement, we have 

\begin{cor}\label{cor: rank cor}
  Suppose $d,n\in \mathbb{N}$ are as in Proposition~\ref{prop: expectationFixedRank}.
  Suppose further that $\ell\leq k$, and $v^1,\,v^2,\,\dots,\,v^{\ell}$ are non-zero vectors in $\F^k$ such that
  the rank of $\brk{v^1,\,v^2,\,\dots,\,v^{\ell}}$ is $r<\ell$. Then
  $$
  \E \prod_{i=1}^{\ell} W_{v^i}(d) \le C\rho_{d}^r \,\exp(-c_{\text{\tiny{\ref{prop: expectationFixedRank}}}}(\lambda_0, q)d^4/n^3).
$$
\end{cor}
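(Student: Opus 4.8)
The plan is to deduce Corollary~\ref{cor: rank cor} from Proposition~\ref{prop: expectationFixedRank} by an iterated conditioning argument. Since the rank of $\brk{v^1,\dots,v^\ell}$ is $r<\ell$, after relabeling the indices we may assume that $v^1,\dots,v^r$ are linearly independent and span all of $v^1,\dots,v^\ell$. In particular each of $v^{r+1},\dots,v^\ell$ is a linear combination of $v^1,\dots,v^r$. The idea is to bound $\E\prod_{i=1}^\ell W_{v^i}(d)$ by $\E\prod_{i=1}^{r+1}W_{v^i}(d)$, since dropping the extra factors $W_{v^{r+2}}(d),\dots,W_{v^\ell}(d)$ (which are indicators, hence at most $1$) only increases the expectation, and then invoke Proposition~\ref{prop: expectationFixedRank} directly on the remaining product.

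The one subtlety is that Proposition~\ref{prop: expectationFixedRank} is stated for $v^{s+1}=\sum_{i=1}^s c_iv^i$ with \emph{all} coefficients $c_i\in\F\backslash\set{0}$, whereas a general element of the span of $v^1,\dots,v^r$ may have some zero coefficients. So first I would fix a minimal subset $\{v^{i_1},\dots,v^{i_s}\}\subseteq\{v^1,\dots,v^r\}$ (with $s\le r$) such that $v^{r+1}$ lies in its span with all nonzero coefficients; such a subset exists because one can simply discard from the representation of $v^{r+1}$ any basis vector whose coefficient vanishes. Applying Proposition~\ref{prop: expectationFixedRank} to the independent tuple $v^{i_1},\dots,v^{i_s},v^{r+1}$ gives
$$
\E\prod_{t=1}^s W_{v^{i_t}}(d)\cdot W_{v^{r+1}}(d)\le C\rho_d^s\,\exp(-c_{\text{\tiny{\ref{prop: expectationFixedRank}}}}(\lambda_0,q)d^4/n^3).
$$
Then I would bound $\E\prod_{i=1}^\ell W_{v^i}(d)\le \E\prod_{i\in\{i_1,\dots,i_s\}}W_{v^i}(d)\cdot W_{v^{r+1}}(d)\cdot\prod_{j\in\{1,\dots,r\}\setminus\{i_1,\dots,i_s\}}W_{v^j}(d)$, but now the factors $W_{v^j}(d)$ for the discarded basis indices are not independent of the rest, so I cannot just split the expectation as a product of $\rho_d$'s. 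Instead I would condition: the $Y_a$'s are jointly "linear images" of the underlying i.i.d.\ coordinates, and conditioned on the realization of $Y_{v^{i_1}},\dots,Y_{v^{i_s}},Y_{v^{r+1}}$ (equivalently, on the values of the underlying independent vectors indexed by a basis of $\mathrm{span}(v^{i_1},\dots,v^{i_s})$), each remaining $Y_{v^j}$ with $v^j$ outside that span is still uniform on $\F^n$ — so $\E[W_{v^j}(d)\mid \cdots]=\rho_d$ — wait, this needs $v^j$ to be linearly independent from $v^{i_1},\dots,v^{i_s}$, which holds since all of $v^1,\dots,v^r$ are independent. Hence each such conditional expectation contributes exactly $\rho_d$, and multiplying through gives the bound $C\rho_d^s\cdot\rho_d^{\,r-s}\,\exp(-c\,d^4/n^3)=C\rho_d^r\,\exp(-c\,d^4/n^3)$, as claimed.

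The main obstacle, and the only place requiring genuine care, is the joint independence bookkeeping in that last conditioning step: one must verify that conditioning on $Y_{v^{i_1}},\dots,Y_{v^{i_s}},Y_{v^{r+1}}$ is the same as conditioning on the underlying i.i.d.\ vectors indexed by a basis of $\mathrm{span}\{v^{i_1},\dots,v^{i_s}\}$ (the vector $Y_{v^{r+1}}$ adds no information because $v^{r+1}$ already lies in that span), and that the remaining basis vectors $v^j$, being linearly independent from that span, have $Y_{v^j}$ conditionally uniform and conditionally independent across $j$. This is the same linear-algebra-over-$\F$ fact used repeatedly in Section~\ref{sec: moments comparison}, so I would state it once as the mechanism and carry out the estimate. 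Everything else — reducing $\ell$ factors to $r+1$ factors by monotonicity of indicators, and extracting a minimal all-nonzero-coefficient representation — is routine.
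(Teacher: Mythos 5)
Your proposal is correct and follows essentially the same route as the paper's proof: drop the surplus indicators to reduce from $\ell$ factors to $r+1$, isolate a sub-tuple on which Proposition~\ref{prop: expectationFixedRank} applies (namely a linearly independent set together with a linear combination having all nonzero coefficients), and peel off a factor of $\rho_d$ for each remaining basis vector. The only wrinkle is a transient misstatement: you first assert that the factors $W_{v^j}(d)$ for the discarded basis indices ``are not independent of the rest'' and that you therefore ``cannot just split the expectation as a product of $\rho_d$'s,'' but this is exactly backwards --- since $v^{i_1},\dots,v^{i_s},v^j$ are linearly independent (all lie in the linearly independent set $v^1,\dots,v^r$) and $v^{r+1}\in\mathrm{span}(v^{i_1},\dots,v^{i_s})$, Corollary~\ref{cor: lin indep} gives that $\sum_i v^j_i X_i$ is independent of $\bigl(\sum_i v^{i_t}_i X_i\bigr)_{t\le s}$ and hence of $W_{v^{r+1}}(d)$ as well, so the expectation does factor directly as a product. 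You then reach this conclusion yourself through the conditioning argument, which is a more roundabout way of invoking the same independence, so the final bound is sound; the paper simply states the factorization
$$
\E \prod_{i=1}^{s+1}W_{v^i}(d) \prod_{j=s+2}^{r+1}W_{v^j}(d)
= \rho_d^{\,r-s}\,\E \prod_{i=1}^{s+1}W_{v^i}(d)
$$
in one line and then applies the proposition.
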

\begin{proof}
    By rearranging the indices, we may assume there exists $s$ 
    such that the vectors $v^1,\dots, v^s,$ $v^{s+2},\dots, v^{r+1}$
    form a linearly independent set, and $v^{s+1}=\sum_{i=1}^s c_iv^i$
    where $c_i\in \F\setminus\{0\}$.
    Next,
    $$
        \E \prod_{i=1}^{\ell} W_{v^i}(d) \le \E \prod_{i=1}^{r+1} W_{v^i}(d)
        = \E \prod_{i=1}^{s+1}W_{v^i}(d) \prod_{j=s+2}^{r+1}W_{v^j}(d)
        = \E \prod_{i=1}^{s+1}W_{v^i}(d) \rho_{d}^{r-s},
    $$
    and the result follows by applying Proposition~\ref{prop: expectationFixedRank}.
\end{proof}

Next, we need to estimate cardinality of the set of $\ell$--tuples of vectors
$\brk{v^1,\,v^2,\,\dots,\,v^{\ell}}\in \Omega_\ell$ with a given rank $r$. 
\begin{lem}\label{lem: mtuplesrank}  For $r \le \ell \le k$, denote
$$
\Omega_{r,\ell}:=\set{\brk{v^1,\,v^2,\,\dots,\,v^\ell}\in \Omega_\ell \,:\, 
       {\rm dim}\brk{ 
        {\rm span}\brk{v^1,\,\dots,\,v^\ell} 
        }= r
      }.
$$
Then
    $$
      |\Omega_{r,\ell}|\le { \ell \choose r} q^{r(\ell-r)}\prod_{i=0}^{r-1}\brk{q^k-q^i}.
    $$
    When $r=\ell$, equality holds, implying
    $$
      \frac{|\Omega_{r,\ell}|}{|\Omega_{\ell,\ell}|} \le { \ell \choose \ell- r} 
      \frac{\brk{q^r}^{\ell-r}}{\prod_{i=r}^{\ell-1}\brk{q^k-q^i}}. 
    $$
  \end{lem}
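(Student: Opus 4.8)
The plan is to count $\ell$-tuples of pairwise non-collinear vectors whose span has dimension exactly $r$ by first choosing the span, then choosing which coordinate subset "carries" a basis, then choosing the tuple entries, and finally discarding the relations we do not need. First I would fix an $r$-dimensional subspace $V\subseteq\F^k$; the number of such subspaces is the Gaussian binomial coefficient, but we only need the crude bound that the number of ordered linearly independent $r$-tuples spanning it is $\prod_{i=0}^{r-1}(q^r-q^i)$ and the number of ordered linearly independent $r$-tuples in $\F^k$ is $\prod_{i=0}^{r-1}(q^k-q^i)$, so the number of $r$-dimensional subspaces is $\prod_{i=0}^{r-1}(q^k-q^i)\big/\prod_{i=0}^{r-1}(q^r-q^i)$.

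Next I would use the standard device: given a tuple $(v^1,\dots,v^\ell)$ spanning an $r$-dimensional space, there is a subset $T\subseteq[\ell]$ with $|T|=r$ such that $\{v^i:i\in T\}$ is a basis of $\mathrm{span}(v^1,\dots,v^\ell)$ — for definiteness take $T$ to be the lexicographically least such subset, so each tuple is associated with exactly one $T$, and there are $\binom{\ell}{r}$ choices of $T$. Having fixed $T$ and the $r$-dimensional subspace $V$, the entries indexed by $T$ must be an ordered basis of $V$ (at most $\prod_{i=0}^{r-1}(q^r-q^i)$ choices — here I overcount by ignoring lex-minimality, which is fine for an upper bound), and each of the remaining $\ell-r$ entries is an arbitrary vector of $V$, giving at most $(q^r)^{\ell-r}$ choices for those. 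Multiplying,
$$
|\Omega_{r,\ell}|\le \binom{\ell}{r}\,\frac{\prod_{i=0}^{r-1}(q^k-q^i)}{\prod_{i=0}^{r-1}(q^r-q^i)}\,\prod_{i=0}^{r-1}(q^r-q^i)\,(q^r)^{\ell-r}
=\binom{\ell}{r}\,q^{r(\ell-r)}\prod_{i=0}^{r-1}(q^k-q^i),
$$
which is the claimed bound. For $r=\ell$ the factor $q^{r(\ell-r)}=1$, the set $T$ must be all of $[\ell]$, every entry is forced to be part of the basis, there is no overcounting from lex-minimality (a spanning $\ell$-tuple of an $\ell$-dimensional space is automatically a basis and automatically pairwise non-collinear), so $|\Omega_{\ell,\ell}|=\prod_{i=0}^{\ell-1}(q^k-q^i)$ exactly.

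Finally, dividing the general bound by the exact value $|\Omega_{\ell,\ell}|=\prod_{i=0}^{\ell-1}(q^k-q^i)$ and cancelling the common factor $\prod_{i=0}^{r-1}(q^k-q^i)$ leaves
$$
\frac{|\Omega_{r,\ell}|}{|\Omega_{\ell,\ell}|}\le \binom{\ell}{r}\,\frac{q^{r(\ell-r)}}{\prod_{i=r}^{\ell-1}(q^k-q^i)},
$$
and rewriting $\binom{\ell}{r}=\binom{\ell}{\ell-r}$ and $q^{r(\ell-r)}=(q^r)^{\ell-r}$ gives the stated form. I do not expect a serious obstacle here; the only point requiring a little care is making the counting argument genuinely an upper bound rather than an identity (the lex-minimal $T$ convention handles the "which $r$ coordinates form the basis" ambiguity, and one must make sure the non-collinearity constraint defining $\Omega_\ell$ is only used to tighten, never to inflate, the count — which it does, so dropping it preserves the inequality).
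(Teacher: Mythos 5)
Your proof is correct and takes essentially the same approach as the paper: pick which $r$ indices carry a linearly independent set (union bound over $\binom{\ell}{r}$ subsets), count the $\prod_{i=0}^{r-1}(q^k-q^i)$ choices for those, and note the remaining $\ell-r$ entries lie in an $r$-dimensional span. Your extra detour through the number of $r$-dimensional subspaces times ordered bases per subspace cancels to exactly the paper's direct count, so the two arguments are essentially identical.
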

  \begin{proof}
    Let $I\subset [\ell]$ with $\abs{I}=r$. 
    We will consider vectors $(v^1,\dots, v^\ell)$ such that  
    $\set{v^i}_{i\in I}$ forms a linearly independent set,
    and $v^j$ lies in the span of $\set{v^i}_{i\in I}$ 
    for each $j\notin I$. 
    Without loss of generality, we will assume $I=[r]$.
    The cardinality of the set of all $r$--tuples $(v^1,\dots, v^r)$ of linearly independent vectors is 
    $$
      \brk{q^k - 1}\brk{q^k-q}\brk{q^k-q^2}\cdots 
      \brk{q^k - q^{r-1}}
    $$
    where $q^k -q^{i-1}$ represents the number of choices
    of $v^i$ which is not a linear combination of $v^1,\dots, v^{i-1}$
    with the assumption that $v^1,\dots, v^{i-1}$ are linearly 
    independent. Having chosen $v^1,\dots, v^r$, the vectors
    $v^{r+1},\dots v^\ell$ are vectors in the span of $v^1, \dots, v^r$, which has cardinality $q^{r}$. 
    Therefore, there are at most
    $
       \brk{q^r}^{\ell-r}\prod_{i=0}^{r-1}\brk{q^k-q^i}
    $
    choices for $\brk{v^1,\dots, v^\ell}$ satisfying 
    the above conditions. We obtain the desired bound 
    because there are $\ell \choose r$ subsets of $[\ell]$ with
    cardinality $r$.
  \end{proof}

\medskip

Now we are ready to prove Proposition~\ref{prop: main sec 2}.
\begin{proof}[Proof of Proposition~\ref{prop: main sec 2}]
We set
$$c_{\text{\tiny{\ref{prop: main sec 2}}}}(\lambda_0, q):=
\frac{c_{\text{\tiny{\ref{prop: expectationFixedRank}}}}(\lambda_0, q)}{4+2\log q}\quad\mbox{ and }\quad
C_{\text{\tiny{\ref{prop: main sec 2}}}}(\lambda_0, q)
:=\max(C_{\text{\tiny{\ref{prop: expectationFixedRank}}}}(\lambda_0, q), 1/c_{\text{\tiny{\ref{prop: main sec 2}}}}(\lambda_0, q)).
$$
Assume that $d$, $n$, $k$ satisfy the assumptions of the proposition.

Fix any $\ell\leq m$.
For each $r\leq \ell$, let $\Omega_{r,\ell}$ be defined as in Lemma~\ref{lem: mtuplesrank}.
For $\brk{v^1,\dots,v^\ell} \in \Omega_\ell$, we have $\E \prod_{i=1}^\ell W_{v^i}(d)=\rho_{d}^\ell$ 
whenever $v^1,\dots, v^\ell$ form a linearly 
independent set. Thus,
$$
     \E\sum_{\,v^1,\dots, v^\ell \in  \Omega_{\ell,\ell} }
        \prod_{i=1}^\ell W_{v^i}(d) =\E\sum_{\,v^1,\dots, v^\ell \in  \Omega_{\ell,\ell} }
        \prod_{i=1}^\ell \widetilde{W}_{v^i}(d)=\rho_{d}^\ell |\Omega_{\ell,\ell}|.
$$
Further, take any $r<\ell$, and observe that, in view of Corollary~\ref{cor: rank cor}, we have
$$
     \E\sum_{\,v^1,\dots, v^\ell \in  \Omega_{r,\ell} }
        \prod_{i=1}^\ell W_{v^i}(d)\leq C\rho_{d}^r \,\exp(-c_{\text{\tiny{\ref{prop: expectationFixedRank}}}}
(\lambda_0, q)d^4/n^3)\,|\Omega_{r,\ell}|.
$$
Applying the estimate from Lemma~\ref{lem: mtuplesrank} to the last expression, we obtain
\begin{align*}
      \E\sum_{\,v^1,\dots, v^\ell \in  \Omega_{r,\ell} }
      \prod_{i=1}^\ell W_{v^i}(d)&\leq C\rho_{d}^\ell\, |\Omega_{\ell,\ell}| \,
      \exp(-c_{\text{\tiny{\ref{prop: expectationFixedRank}}}}(\lambda_0, q)d^4/n^3)\;{ \ell \choose \ell- r} 
      \frac{\rho_d^{r-\ell}\brk{q^r}^{\ell-r}}{\prod_{i=r}^{\ell-1}\brk{q^k-q^i}}\\
      &\leq C\rho_{d}^\ell\, |\Omega_{\ell,\ell}| \,\exp(-c_{\text{\tiny{\ref{prop: expectationFixedRank}}}}(\lambda_0, q)d^4/n^3)\; 
      \frac{2^\ell q^{\ell^2}}{(\rho_d \,q^k)^{\ell-r}}.
\end{align*}
Further, using our assumptions on the parameters, we get
\begin{align*}
\frac{2^\ell q^{\ell^2}}{(\rho_d \,q^k)^{\ell-r}}
&\leq \exp\big(c_{\text{\tiny{\ref{prop: main sec 2}}}}(\lambda_0, q)\,(1+\log q)d^4/n^3\big)
\exp\bigg(\frac{c_{\text{\tiny{\ref{prop: main sec 2}}}}(\lambda_0, q)d^4(\ell-r)}{n^3 \ell}\bigg)\\
&\leq \exp\bigg(-r+\frac{1}{2}c_{\text{\tiny{\ref{prop: expectationFixedRank}}}}(\lambda_0, q)d^4/n^3\bigg).
\end{align*}
Thus,
$$
     \E\sum_{\,v^1,\dots, v^\ell \in  \Omega_{\ell} }
        \prod_{i=1}^\ell W_{v^i}(d) =\big(1+O(\exp(-c_{\text{\tiny{\ref{prop: main sec 2}}}}(\lambda_0, q)d^4/n^3))\big)
     \rho_{d}^\ell |\Omega_{\ell,\ell}|.
$$
On the other hand, applying Lemma~\ref{lem: mtuplesrank}, we get
$$
|\Omega_{\ell,\ell}|=\prod_{i=0}^{\ell-1}\brk{q^k-q^i}\geq \big(1-O(q^{\ell-k})\big)|\Omega_\ell|.
$$
Hence,
$$
\E\sum_{\,v^1,\dots, v^\ell \in  \Omega_{\ell} }
        \prod_{i=1}^\ell W_{v^i}(d)
=\big(1+O(\exp(-c_{\text{\tiny{\ref{prop: main sec 2}}}}(\lambda_0, q)d^4/n^3))+O(2^{-\frac{k}{2}})\big)
\E\sum_{\,v^1,\dots, v^\ell \in  \Omega_{\ell} }
        \prod_{i=1}^\ell \widetilde W_{v^i}(d).
$$
The result follows by applying formula \eqref{eq: -98529587-593}.
\end{proof}

\medskip

The rest of the section is devoted to proving Proposition~\ref{prop: expectationFixedRank}.

\begin{lem}
  Suppose $M  \in GL_{k}(\F)$ is a fixed invertible $k\times k$ matrix over the field $\F$.
  Let $Y$ be a random vector uniformly distributed in $\F^k$. Then, the image $MY$ is uniformly distributed in $\F^k$.
\end{lem}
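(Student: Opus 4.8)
The plan is to exploit the fact that multiplication by an invertible matrix is a bijection of $\F^k$ onto itself, so it merely permutes the points of the (finite) sample space and therefore preserves the uniform measure. No computation of any substance is required; the only thing to be careful about is to phrase the argument at the level of point masses rather than, say, invoking a change-of-variables formula that would be overkill in the discrete setting.

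Concretely, I would argue as follows. Since $M\in GL_k(\F)$, the map $x\mapsto Mx$ is a bijection of $\F^k$ with inverse $x\mapsto M^{-1}x$. Fix an arbitrary $z\in\F^k$. Then the event $\{MY=z\}$ coincides with the event $\{Y=M^{-1}z\}$, because $MY=z$ if and only if $Y=M^{-1}z$. Hence
$$
\PP\{MY=z\}=\PP\{Y=M^{-1}z\}=q^{-k},
$$
where the last equality uses that $Y$ is uniformly distributed on the $q^k$ points of $\F^k$. Since $z\in\F^k$ was arbitrary and $\PP\{MY=z\}=q^{-k}$ does not depend on $z$, the random vector $MY$ is uniformly distributed on $\F^k$, which is the claim.

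There is essentially no obstacle here; the lemma is purely a bookkeeping device that will be used later (in the proof of Proposition~\ref{prop: expectationFixedRank}) to reduce a general configuration of linearly dependent vectors $v^1,\dots,v^{s+1}$ to a normalized one by applying a suitable change of basis $M$ and using that the joint distribution of the relevant linear combinations $\sum_i a_i X_i$ is invariant under such a substitution. The only thing worth flagging for the write-up is that one should state the conclusion for a \emph{deterministic} $M$ (as in the statement), since for a random $M$ independent of $Y$ one would additionally condition on $M$ and average, but that refinement is not needed at this point.
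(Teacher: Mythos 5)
Your proof is correct and is exactly the standard bijection argument; the paper in fact states this lemma without proof, treating it as obvious. Nothing further to compare.
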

As a corollary, we obtain
\begin{cor}\label{cor: lin indep}
  Suppose $a^1,\dots, a^\ell\in \F^k\setminus\{0\}$ are linearly independent fixed vectors, and let, as before, $X_1,\dots,X_k$
  be i.i.d.\ random vectors uniformly distributed in $\F^n$. Then the random vectors
$$
\sum_{i=1}^k a^{\ell'}_iX_i,\quad \ell'=1,\dots,\ell,$$
are mutually independent and uniformly distributed over $\F^n$. 
\end{cor}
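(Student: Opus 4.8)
The plan is to recast the statement in terms of a single random matrix and then apply the preceding lemma. Assemble $X_1,\dots,X_k\in\F^n$ into the columns of a random $n\times k$ matrix $X=[X_1\mid\cdots\mid X_k]$; since the $X_i$ are i.i.d.\ uniform on $\F^n$, the matrix $X$ has i.i.d.\ uniform entries, or equivalently its $n$ rows are mutually independent and each uniform on $\F^k$. Under this identification $\sum_{i=1}^k a^{\ell'}_i X_i = X a^{\ell'}$, so the assertion is precisely that the vectors $Xa^1,\dots,Xa^\ell$ are mutually independent and uniform on $\F^n$.

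I would first extend the linearly independent tuple $a^1,\dots,a^\ell$ to a basis $a^1,\dots,a^k$ of $\F^k$, and let $A\in GL_k(\F)$ be the matrix with $j$-th column $a^j$. Then $XA=[Xa^1\mid\cdots\mid Xa^k]$, and it suffices to prove that $XA$ is a uniformly random element of $\F^{n\times k}$, because the first $\ell$ columns of a uniform $n\times k$ matrix are automatically mutually independent and each uniform on $\F^n$. To establish uniformity of $XA$, write $r_i$ for the $i$-th row of $X$; then the $i$-th row of $XA$ is $r_i A=(A^\top r_i^\top)^\top$, which by the lemma applied to the invertible matrix $A^\top$ is uniform on $\F^k$. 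Since the $i$-th row of $XA$ depends only on $r_i$, and the rows $r_i$ are mutually independent, the rows of $XA$ are mutually independent as well; hence all $nk$ entries of $XA$ are i.i.d.\ uniform, so $XA$ is uniform on $\F^{n\times k}$. Restricting to the first $\ell$ columns finishes the argument.

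There is no genuine obstacle here — it is a routine change of basis over a finite field. The one subtlety worth highlighting is that uniformity of each individual $Xa^{\ell'}$ does \emph{not} by itself give joint independence; this is exactly why one extends $a^1,\dots,a^\ell$ to a full basis rather than working with the $\ell\times k$ matrix of the $a^{\ell'}$ directly, since a sub-tuple of coordinates of a random vector is seen to be independent only once one knows the whole vector is uniformly distributed. A minor point is that the lemma must be invoked in its transposed (right-multiplication) form rather than applied to $X$ itself.
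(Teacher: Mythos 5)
Your argument is correct and is essentially the paper's proof: both extend $a^1,\dots,a^\ell$ to a basis of $\F^k$ and then apply the preceding lemma row by row to the random matrix $X=[X_1\mid\cdots\mid X_k]$; your $A^\top$ is exactly the paper's matrix $M$ with rows $a^1,\dots,a^k$, and your row $r_i$ is the paper's vector $Y$ built from the $t$-th components $X_1(t),\dots,X_k(t)$. The only difference is presentational — the paper phrases the reduction coordinate-by-coordinate rather than in terms of the columns of $XA$.
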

\begin{proof}
  It is sufficient to show that for each $t\in[n]$, the random variables
  $\sum_{i=1}^k a^{\ell'}_iX_i(t)$, $\ell'=1,\dots,\ell$,
  are mutually independent and uniform over $\F$, where by $X_i(t)$ we denote the $t$-th
  component of $X_i$.

  If $\ell<k$ then we can find vectors $a^{\ell+1},a^{\ell+2},\dots, a^{k}$ such that 
  $a^1,\dots, a^{k}$ are linearly independent. Let 
  $M$ be the $k\times k$ matrix with rows $a^1,\dots a^{k}$ and let $Y$
  be the random vector in $\F^k$ with components 
  $X_1(t),\dots X_k(t)$. Applying the above lemma, we get that the components of the vector $MY$ are mutually independent
  and uniform over $\F$. The result follows.
\end{proof}

\begin{lem}
  \label{prop: wtof2Sum}
    For any $\lambda_0 \in (0,1)$ there is $c_{\text{\tiny{\ref{prop: wtof2Sum}}}}(\lambda_0, q)>0$
  with the following property.
  Suppose $d,n\in \mathbb{N}$ satisfy 
  $\frac{d}{n} \leq \lambda_0 (1-\frac{1}{q})$. 
  Let $\varepsilon=\min\set{\frac{1}{4},\,\frac{1-\lambda_0}{2}}$ and take $\kappa,\gamma>0$ such that
  $\gamma d$ and $\kappa d$ are integers,
  $(1-\varepsilon)\le \kappa \le \frac{n}{d}$, and $(1-\varepsilon)\le \gamma \le 1$.
  Let $V = (\underbrace{1,\dots,1}_{ \kappa d\,\, {\rm terms} },0,\dots, 0)\in\F^n$, and
  let $Y$ be a random vector uniformly distributed in $\F^n$. 
  Then, 
  $$
    \PP\set{ \wt{V+Y}\le d\,|\, \wt{Y}=\gamma d}\le 3\exp(-c_{\text{\tiny{\ref{prop: wtof2Sum}}}}(\lambda_0, q)d^4/n^3).
  $$
\end{lem}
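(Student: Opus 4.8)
The plan is to condition on the support of $Y$, recast the target event as an upper-tail event for a simple two-part sum, and then apply a Chernoff bound whose success rests on a single deterministic inequality. Condition on $\{\wt Y=\gamma d\}$: under this conditioning the support $S:=\{i:\,Y_i\neq0\}$ is uniform over the $\gamma d$-element subsets of $[n]$, and, given $S$, the coordinates $(Y_i)_{i\in S}$ are i.i.d.\ uniform on $\F\setminus\{0\}$ and independent of $S$. Let $T:=\{1,\dots,\kappa d\}$ be the support of $V$, put $j:=|S\cap T|$, and let $Z$ count those $i\in S\cap T$ with $Y_i=-1$ (the coordinates where $V_i+Y_i$ vanishes). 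Inspecting the coordinates of $V+Y$ according to membership in $S$ and $T$,
$$
\wt{V+Y}=|S\setminus T|+|T\setminus S|+(j-Z)=\gamma d+\kappa d-j-Z,
$$
so, under our conditioning, $\{\wt{V+Y}\le d\}$ is exactly the event $\{j+Z\ge\mu\}$ with $\mu:=(\gamma+\kappa-1)d$, and it suffices to bound $\PP\{j+Z\ge\mu\}$. Here $j$ is hypergeometric with $\E j=\gamma\kappa d^2/n=:\nu$; conditionally on $j$, $Z$ is binomial with parameters $j$ and $1/(q-1)$, so $\E(j+Z)=\tfrac{q}{q-1}\nu=:\tilde\nu$; and $\mu\ge(1-2\varepsilon)d\ge d/2$ because $\varepsilon\le1/4$.

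The hard part, and essentially the whole point of the hypotheses, is the deterministic inequality $\tilde\nu\le(1-\varepsilon)^2\mu$ — that the mean of $j+Z$ lies below $\mu$ by a fixed multiplicative factor. One writes $\tfrac{\tilde\nu}{\mu}=\tfrac{q}{q-1}\cdot\tfrac dn\cdot\tfrac{\gamma\kappa}{\gamma+\kappa-1}$; here $\tfrac{q}{q-1}\tfrac dn\le\lambda_0$ by hypothesis, while $\tfrac{\gamma\kappa}{\gamma+\kappa-1}$ is non-increasing in $\kappa$ (as $\gamma\le1$) and, at $\kappa=1-\varepsilon$, non-increasing in $\gamma$, hence is at most $\tfrac{(1-\varepsilon)^2}{1-2\varepsilon}$; and $\varepsilon\le(1-\lambda_0)/2$ forces $\lambda_0\le1-2\varepsilon$, so the product is $\le(1-\varepsilon)^2$. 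Once this gap is secured the rest is a routine concentration estimate.

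To conclude I would bound the upper tail of the (dependent) sum $j+Z$ by the exponential method. For $t\ge0$, taking the conditional expectation over $Z$ first,
$$
\E e^{t(j+Z)}=\E[\beta^j],\qquad \beta:=e^t\cdot\tfrac{q-2+e^t}{q-1}\ge1,
$$
and the standard exponential-moment bound for a hypergeometric variable (which follows from Hoeffding's comparison with the binomial of the same mean), namely $\E[\lambda^j]\le e^{\nu(\lambda-1)}$ for $\lambda\ge1$, gives $\E[\beta^j]\le e^{(\beta-1)\nu}$. With $s:=e^t-1\ge0$ one has $\beta-1=\tfrac{s(q+s)}{q-1}=s\tfrac{q}{q-1}+\tfrac{s^2}{q-1}$, so Markov's inequality and $\log(1+s)\ge s-s^2/2$ yield
$$
\PP\{j+Z\ge\mu\}\le\exp\!\Big(-s(\mu-\tilde\nu)+s^2\big(\tfrac\mu2+\tfrac{\nu}{q-1}\big)\Big).
$$
Minimizing the quadratic in $s\ge0$ and invoking $\mu-\tilde\nu\ge\varepsilon\mu$ together with $\tfrac\mu2+\tfrac{\nu}{q-1}\le\tfrac32\mu$ (valid since $\nu\le\tilde\nu\le\mu$) bounds the right-hand side by $\exp(-\varepsilon^2\mu/6)\le\exp(-\varepsilon^2d/12)$. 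Since $d<n$ forces $d^4/n^3\le d$, this is at most $3\exp\!\big(-c_{\text{\tiny{\ref{prop: wtof2Sum}}}}(\lambda_0,q)\,d^4/n^3\big)$ with $c_{\text{\tiny{\ref{prop: wtof2Sum}}}}(\lambda_0,q):=\varepsilon^2/12$, finishing the proof (with room to spare).
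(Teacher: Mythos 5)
Your proof is correct, and it improves on the paper's bound. Both you and the paper begin the same way: condition on $\wt{Y}=\gamma d$, parametrize the conditional distribution by a uniform support set and i.i.d.\ nonzero entries, and write $\wt{V+Y}=\kappa d+\gamma d-j-Z$ where $j=|S\cap T|$ is hypergeometric and $Z\mid j$ is $\text{Binomial}(j,\tfrac{1}{q-1})$. The deterministic mean-gap inequality you prove ($\tilde\nu\le(1-\varepsilon)^2\mu$) is also morally the same calculation the paper carries out when verifying $\wt{V+U}>d$ on $\Event_1\cap\Event_2$. Where you diverge is the concentration step: the paper runs a two-stage argument, first controlling $j$ around its mean $\nu$ via a hypergeometric deviation bound and then controlling $Z$ given $j$ via Hoeffding, with a carefully tuned deviation parameter $s=\min\bigl(\tfrac{qd}{2n(q-1)},\tfrac{1-\lambda_0}{6}\bigr)$; the $d^4/n^3$ in their exponent comes precisely from $s^2\cdot j\approx(d/n)^2\cdot(d^2/n)$. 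You instead bound the upper tail of $j+Z$ in one shot by computing its MGF exactly (conditioning on $j$ first) and then using Hoeffding's comparison $\E[\lambda^j]\le e^{\nu(\lambda-1)}$ for $\lambda\ge1$. This is cleaner — no tuning of a two-parameter union bound — and, as you observe, gives the strictly stronger estimate $\exp(-\varepsilon^2 d/12)$, from which the stated $\exp(-c\,d^4/n^3)$ follows trivially since $d\le n$. I verified the key steps: the monotonicity argument for $\tfrac{\gamma\kappa}{\gamma+\kappa-1}$ in $\kappa$ then in $\gamma$, the fact that $\lambda_0\le 1-2\varepsilon$ for your choice of $\varepsilon$, the identity $\beta-1=\tfrac{s(q+s)}{q-1}$, and the use of $\log(1+s)\ge s-s^2/2$; all check out. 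The only thing worth keeping in mind if you were to write this up formally is that Hoeffding's convex-ordering comparison of sampling without replacement to sampling with replacement is what licenses $\E[\beta^j]\le e^{\nu(\beta-1)}$, and that $\beta\ge1$ holds for $t\ge0$; you noted both.
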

\begin{proof}
  Let $J$ be a random subset of size $\gamma d$ uniformly distributed in 
  $[n]$. Let $u_1,\dots, u_{n}$ 
  be i.i.d random variables uniformly distributed in $\F\setminus\{0\}$, independent from $J$.
  Then the conditional distribution of $Y$ given $\wt{Y}=\gamma d$ coincides with the distribution of
  $$U=\brk{ u_1 {\bf 1}_{J}\brk{1},\dots, u_n {\bf 1}_J\brk{n}}^\top,$$
  where ${\bf 1}_J(\cdot)$ is the indicator function of $J$. Thus, 
  \begin{align*}
      \PP\set{ \wt{V+Y}\le d\,|\, \wt{Y}=\gamma d}
      = \PP\set{ \wt{V+U}\leq d}.
  \end{align*}
  Notice that 
  \begin{align}
    \label{eq: weightOfSum}
    \wt{V+U} = \kappa d + \gamma d - \abs{[\kappa d]\cap J} -
    \abs{\set{i\in [\kappa d]\cap J\,:\, u_i= -1}}.
  \end{align}
  The random variable $\abs{[\kappa d]\cap J}$ has the hypergeometric distribution given by
  $$
  \PP\{\abs{[\kappa d]\cap J}=\ell\}={n\choose \gamma d}^{-1}{n-\kappa d \choose \gamma d-\ell}{\kappa d \choose \ell},\quad \ell\geq 0.
  $$
  The expected size of $[\kappa d] \cap J$ is
  $\frac{\kappa d}{n}\gamma d \leq \kappa \gamma \lambda_0 (\frac{q-1}{q})d$. 
  Applying the large deviation inequality for the hypergeometric distribution (see, e.g. Talagrand \cite{Tal}), we then get
  \begin{align}\label{eq: aux 20985209572}
    \PP \set{
      \abs{\abs{[\kappa d]\cap J} -\frac{\kappa d}{n}\gamma d}
      \ge  t\gamma d}
    \le 2\exp(-2t^2\gamma d),\quad t>0.
  \end{align}
  Let $\Event_1(s)$ be the event that  
  $\abs{\abs{[\kappa d]\cap J} -\frac{\kappa d}{n}\gamma d}
  \le  \kappa s \brk{\frac{q-1}{q}}\gamma d$,
  where $s>0$ will be determined later.
%Notice that 
  %$u_i = -1$ with probability $\frac{1}{p-1}$.
  Further, let $\Event_2(s)$ be the 
  event that $$\abs{\set{i\in [\kappa d]\cap J\,:\, u_i=-1}}
  \le \brk{\frac{1}{q-1}+s}\abs{[\kappa d]\cap J}.$$
  By Hoeffding's inequality, we have 
  \begin{equation}\label{eq: aux 20985034985}
    \PP\big( \Event_2^c(s)\,\mid\, J\big)
    \le \exp\brk{-2s^2 \abs{[\kappa d]\cap J}}.
  \end{equation}
  Whenever both $\Event_1(s)$ and $\Event_2(s)$ hold, we have, in view of \eqref{eq: weightOfSum},
  \begin{align*}
    \wt{V+U} &\ge  \kappa d+\gamma d- 
    \Big(1+\frac{1}{q-1}+s\Big)\abs{[\kappa d] \cap J}\\
    &\ge \kappa d+\gamma d - 
     \Big(\frac{q}{q-1}+s\Big)\kappa\gamma d \Big(\frac{d}{n}+s\frac{q-1}{q}\Big).
  \end{align*}
  Now, we set $s:=\min\big(\frac{qd}{2n(q-1)},\frac{1-\lambda_0}{6}\big)$, so that the above relation implies
  \begin{align*}
    \wt{V+U} \ge d\Big(\kappa + \gamma -\kappa \frac{\lambda_0+1}{2}\gamma\Big)\quad
    \mbox{everywhere on $\Event_1(s)\cap \Event_2(s)$}.
  \end{align*}

  If $\kappa\frac{\lambda_0+1}{2}\ge 1$ then
  \begin{align*}
    \kappa + \gamma -\kappa \frac{\lambda_0+1}{2}\gamma 
    \ge \kappa + 1 -\kappa\frac{\lambda_0+1}{2}
    = 1 + \kappa \frac{1-\lambda_0}{2},
  \end{align*}
  which implies that $\wt{U+V}>d$ everywhere on $\Event_1(s)\cap \Event_2(s)$.

  Next, if
  $\kappa\frac{\lambda_0+1}{2}< 1$ then we have 
  \begin{align*}
    \kappa + \gamma -\kappa \frac{\lambda_0+1}{2}\gamma 
    \ge& (1-\varepsilon)+(1-\varepsilon) - (1-\varepsilon)^2\frac{\lambda_0+1}{2}\\
    =& 1 + \frac{1-\lambda_0}{2} -2\varepsilon \,\frac{1-\lambda_0}{2}-\varepsilon^2\frac{\lambda_0+1}{2}.
  \end{align*}
  Using the definition $\varepsilon =\min\big(\frac{1}{4},\,\frac{1-\lambda_0}{2}\big)$,
  we get
  \begin{align*}
    \kappa + \gamma -\kappa \frac{\lambda_0+1}{2}\gamma 
    \ge 1+ (1-3\varepsilon)\frac{1-\lambda_0}{2}>1
  \end{align*}
  which, again, implies that $\wt{U+V}>d$ everywhere on $\Event_1(s)\cap \Event_2(s)$. 

  It remains to estimate the probability of $\Event_1(s)\cap \Event_2(s)$. In view of \eqref{eq: aux 20985034985},
  the definition of $\Event_1(s)$, and the condition $s\leq \frac{qd}{2n(q-1)}$,
  we have
  $$\PP\big( \Event_2^c(s)\, |\, \Event_1(s)\big) \le 
  \exp(-c(\lambda_0, q)d^4/n^3),$$
  whereas \eqref{eq: aux 20985209572} yields
  $$\PP(\Event_1^c(s))\leq 2\exp(-c(\lambda_0, q) d^3/n^2)$$
  for some $c(\lambda_0,q)>0$ depending only on $\lambda_0$ and $p$.
  Hence, 
  $$
    \PP\set{\wt{V+Y}\le d\,|\, \wt{Y}=\gamma d} \le 
    \PP\set{ \brk{\Event_1(s)\cup \Event_2(s)}^c} \le 3\exp(-c(\lambda_0, q)d^4/n^3),
  $$
  and the result follows.
\end{proof}

\begin{proof}[Proof of Proposition~\ref{prop: expectationFixedRank}]
Notice that 
$$
  W_{v^t}(d) = {\bf 1}_{\left\{ \wt{\brk{\sum_{i=1}^k v^t_i X_i}}\le d\right\}}
    = {\bf 1}_{\left\{\wt{\brk{\sum_{i=1}^k c_tv^t_i X_i}}\le d\right\}}.
$$
Further, since the vectors $c_1v^1,\dots, c_sv^s$ are linearly independent, by 
Corollary \ref{cor: lin indep} the joint distribution of the vectors
$$\sum_{i=1}^k c_1v^1_iX_i,\,\sum_{i=1}^k c_2v^2_iX_i, \dots, \sum_{i=1}^k c_sv^s_iX_i, \;
\sum_{i=1}^k \brk{\sum_{t=1}^s c_tv^t_i}X_i$$ is the 
same as that of $Y_1,\dots, Y_s, \sum_{t=1}^s Y_t$, where $Y_1,\dots,Y_s$ are i.i.d.\ copies of $X_1$. Thus, 
\begin{align*}
  \E&\brk{\prod_{i=1}^s W_{v^i}(d)} W_{\sum_{i=1}^s c_iv^i}(d)\\
  &= \PP\big\{ \forall i\in [s]\,  {\rm wt}\brk{Y_i}\le d,\,{\rm and}\, {\rm wt}\brk{S+Y_s}\le d\big\}\\
  &= \E\big({\bf 1}_{\{\wt{Y_i}\leq d\,\forall\,i\leq s-2\}}
\PP\big\{\wt{Y_{s-1}}\le d,\,\wt{Y_{s}}\le d,\, \wt{S+Y_s}\le d\;\mid\; Y_1,\dots,Y_{s-2}\big\}\big),
\end{align*}
where $S := \sum_{i=1}^{s-1} Y_i$. 
Set $P_1:=\PP\set{\wt{Y_{s-1}}\le d,\, \wt{Y_{s}}\le d,\,\wt{S+Y_s}\le d\,|\, Y_1,\dots,Y_{s-2}}$.
We shall break $P_1$ into two components:
\begin{align*}
  P_1 =&
  \PP\set{\wt{Y_{s-1}}\le d,\, \wt{Y_{s}}\le d,
  \,\wt{S+Y_s}\le d,\,\wt{S}< (1-\varepsilon)d\,|\, Y_1,\dots,Y_{s-2}}\\
  &+\PP\set{\wt{Y_{s-1}}\le d,\,\wt{Y_{s}}\le d,
  \,\wt{S+Y_s}\le d,\,\wt{S}\ge (1-\varepsilon)d\,|\, Y_1,\dots,Y_{s-2}}\\
  \le &
  \underbrace{\PP\set{\wt{Y_{s}}\le d,
  \,\wt{S}< (1-\varepsilon)d\,|\, Y_1,\dots,Y_{s-2}}}_{A}\\
  &+\underbrace{\PP\set{\wt{Y_{s-1}}\le d,\,\wt{Y_{s}}\le d,
  \,\wt{S+Y_s}\le d,\,\wt{S}\ge (1-\varepsilon)d\,|\, Y_1,\dots,Y_{s-2}}}_B.
\end{align*}
Notice that $Y_1,\dots,Y_{s-2},S$ and $Y_s$
are mutually independent (and uniform on $\F^n$), so we have 
$$
A= \PP\set{\wt{Y_{1}}\le d,\wt{Y_{2}}< (1-\varepsilon)d}
\le \rho_{d}^2 \exp(-c(\lambda_0, q)d)
$$
for some $c(\lambda_0, q)>0$,
where we have used the second assertion of Proposition~\ref{prop: 09860986}.
For the second summand, 
\begin{align*}
  B&=\E\big({\bf 1}_{\{\wt{Y_{s-1}}\le d\}}\,{\bf 1}_{\{\wt{Y_{s}}\le d\}}\,
  {\bf 1}_{\{\wt{S+Y_s}\le d\}}\,{\bf 1}_{\{\wt{S}\ge (1-\varepsilon)d\}}\;\mid\;Y_1,\dots,Y_{s-2}\big)\\
&=\E\big(\E\big({\bf 1}_{\{\wt{Y_{s}}\le d\}}\,{\bf 1}_{\{\wt{S+Y_s}\le d\}}
\;\mid\;Y_1,\dots,Y_{s-2},\,{\bf 1}_{\{\wt{Y_{s-1}}\le d\}},\,{\bf 1}_{\{\wt{S}\ge (1-\varepsilon)d\}},\,\wt{S}
\big)\cdot\\
&\hspace{0.85cm}{\bf 1}_{\{\wt{Y_{s-1}}\le d\}}\,{\bf 1}_{\{\wt{S}\ge (1-\varepsilon)d\}}\;\mid\;Y_1,\dots,Y_{s-2}\big)\\
&=\E\big(\E\big({\bf 1}_{\{\wt{Y_{s}}\le d\}}\,{\bf 1}_{\{\wt{S+Y_s}\le d\}}
\;\mid \;\wt{S}
\big)\cdot
{\bf 1}_{\{\wt{Y_{s-1}}\le d\}}\,{\bf 1}_{\{\wt{S}\ge (1-\varepsilon)d\}}\;\mid\;Y_1,\dots,Y_{s-2}\big),
\end{align*}
where we used the fact that the conditional expectation
$$
\E\big({\bf 1}_{\{\wt{Y_{s}}\le d\}}\,{\bf 1}_{\{\wt{S+Y_s}\le d\}}
\;\mid\;Y_1,\dots,Y_{s-2},\,{\bf 1}_{\{\wt{Y_{s-1}}\le d\}},\,{\bf 1}_{\{\wt{S}\ge (1-\varepsilon)d\}},\,\wt{S}
\big)
$$
is actually measurable with respect to $\wt{S}$.
On the event $\{\wt{Y_{s-1}}\le d,\;\wt{S}\ge (1-\varepsilon)d\}$ we have
\begin{align*}
\E\big(&{\bf 1}_{\{\wt{Y_{s}}\le d\}}\,{\bf 1}_{\{\wt{S+Y_s}\le d\}}
\;\mid \;\wt{S}
\big)\\
    =&\PP\set{\wt{Y_{s}}\le (1-\varepsilon)d,
    \,\wt{S+Y_s}\le d\,|\, \wt{S}}\\
    &+\PP\set{(1-\varepsilon)d < \wt{Y_{s}}\le d,
    \,\wt{S+Y_s}\le d\,|\, \wt{S}}\\
    \le &
    \PP\set{\wt{Y_{s}}\le (1-\varepsilon)d}
    +\PP\set{(1-\varepsilon)d < \wt{Y_{s}}\le d,
    \,\wt{S+Y_s}\le d\,|\, \wt{S}}\\
    \le & \rho_{d}\exp(-c(\lambda_0, q)d)
    +\E\big({\bf 1}_{\{(1-\varepsilon)d < \wt{Y_{s}}\le d\}}
    \PP\{\wt{S+Y_s}\le d\,|\, \wt{S},\,\wt{Y_s}\}\,|\, \wt{S}\big),
\end{align*}
where we again used the second assertion of Proposition~\ref{prop: 09860986}.
To estimate the second summand in the last expression, we apply Lemma~\ref{prop: wtof2Sum}.
First, let us make the following observation.

For any fixed $n \times n$ diagonal matrix $D$ whose diagonal entries are non-zero elements in $\mathbb{F}$
and any $v\in \mathbb{F}^n$, we have $\wt{v}=\wt{Dv}$, and, furthermore,
$DX_1$ and $X_1$ have the same distribution. 
Similarly, for any fixed $n\times n$ permutation matrix $P$, $\wt{v}=\wt{Pv}$ for any vector $v\in \mathbb{F}^n$, and $PX_1$ and $X_1$ 
have the same distribution. 
Now, note that we can construct a {\it random} permutation matrix $P$ and a random diagonal matrix $D$
with non-zero diagonal elements,
both measurable with respect to $S$, such that everywhere on the probability space
$
  PDS = V,
$
where $V = \big(\underbrace{1,\dots,1}_{\wt{S}\; {\rm terms}}
,0,\dots,0\big)^\top$.
Then we have
\begin{align*}
&\PP\{\wt{S+Y_s}\le d\,|\,\wt{S},\,\wt{Y_s}\}\\
&\hspace{1cm}= \PP\{\wt{V+PDY_s}\le d\,|\,\wt{S},\,\wt{Y_s}\}\\
&\hspace{1cm}= \PP\{\wt{V+Y_s}\le d\,|\,\wt{S},\,\wt{Y_s}\},
\end{align*}
where we used that $Y_s$ is independent from $P,D,S$.

The above observation allows us to use Lemma~\ref{prop: wtof2Sum}:
conditioning on any value $k_1\in[(1-\varepsilon)d,n]$ of $\wt{S}$ and any value $k_2\in[(1-\varepsilon)d,d]$
of $\wt{Y_s}$, we have
$$
\PP\{\wt{S+Y_s}\le d\,|\, \wt{S}=k_1,\,\wt{Y_s}=k_2\}\leq 3\exp(-c'(\lambda_0, q)d^4/n^3)
$$
for some $c'(\lambda_0, q)>0$ depending only on $\lambda_0$ and $q$.
Thus, everywhere on $\{\wt{Y_{s-1}}\le d,\,\wt{S}\ge (1-\varepsilon)d\}$ we have
$$
\E\big({\bf 1}_{\{\wt{Y_{s}}\le d\}}\,{\bf 1}_{\{\wt{S+Y_s}\le d\}}
\;\mid \;\wt{S}
\big)\leq 4\rho_d\exp(-c''(\lambda_0, q)d^4/n^3),
$$
where $c''(\lambda_0, q):=\min(c(\lambda_0, q),c'(\lambda_0, q))$.
That, in turn, implies
$$
B\leq 4\rho_{d}^2\exp(-c''(\lambda_0, q)d^4/n^3).
$$
Then, we have 
$$
    P_1 \le 5\rho_{d}^2 \exp(-c''(\lambda_0,q)d^4/n^3)
$$
and the result follows since
$\PP\set{\wt{Y_i}\le d,\, \forall i\leq s-2} 
= \rho_{d}^{s-2}$.
\end{proof}

\section{Analysis of the distribution of the minimal distance}
\label{sec: analysis of the distribution of the minimal distance}

The goal of this section is to prove our main result comparing the distributions of the minimal distance of the random linear code,
with the minimum $w_{\min}$ of the weights of the random vectors $Y_a$, $a\in\F^k\setminus\{0\}$
(defined earlier in the paper).

First, we state the ``technical'' version of the result:
\begin{thm}\label{prop: main prop}
For any $\lambda_0 \in (0,1)$ there are $c_{\text{\tiny{\ref{prop: main prop}}}}(\lambda_0, q)>0$ and
$C_{\text{\tiny{\ref{prop: main prop}}}}(\lambda_0, q)>0$ with the following property.
Let $n\geq 1$, and take any $L\geq e$.
Assume further that $k$ satisfies $C_{\text{\tiny{\ref{prop: main prop}}}}(\lambda_0, q)L\log L\leq k\leq n$,
and take any $d$ such that
$$C_{\text{\tiny{\ref{prop: main prop}}}}(\lambda_0, q)\sqrt{L}n^{3/4}\leq d\leq \lambda_0 \Big(1-\frac{1}{q}\Big)n,$$
and
$c_{\text{\tiny{\ref{prop: main prop}}}}(\lambda_0, q) L\geq q^k\rho_d\geq
\exp\big(-\frac{c_{\text{\tiny{\ref{prop: main prop}}}}(\lambda_0, q)d^2}{n^{3/2}}\big)$.
Let, as before, $X_1,\dots,X_k$ be i.i.d.\ random vectors uniformly distributed in $\F^n$, and denote
$$
d_{\min}:=\min\bigg\{\wt{\sum\nolimits_{i=1}^k a_i X_i},\;a\in\F^k\setminus\{0\}\bigg\}.
$$
Then
$$
\big|\PP\{d_{\min}\leq d\}-\PP\{w_{\min}\leq d\}\big|=O(\exp(-L)).
$$
\end{thm}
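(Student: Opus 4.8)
The strategy is the classical moment (method of moments / inclusion–exclusion) comparison, executed quantitatively. For a non-negative integer-valued random variable $Z$ (either $Z_d$ or $\widetilde Z_d$) one has $\PP\{Z=0\}=\sum_{j\geq 0}(-1)^j\binom{Z}{j}$ in the sense that the factorial moments determine $\PP\{Z=0\}$; equivalently, writing $p_i=\PP\{Z=i\}$, the vector $(p_0,p_1,\dots)$ is recovered from the moments $\E Z, \E Z^2,\dots$ by solving a triangular linear system. Since $\PP\{d_{\min}\leq d\}=1-\PP\{Z_d=0\}$ and $\PP\{w_{\min}\leq d\}=1-\PP\{\widetilde Z_d=0\}$, it suffices to compare $\PP\{Z_d=0\}$ with $\PP\{\widetilde Z_d=0\}$. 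The plan is: (i) truncate — show that with overwhelming probability both $Z_d$ and $\widetilde Z_d$ take values in a bounded range $\{0,1,\dots,M\}$ with $M=O(L)$, so that only the first $m=O(L)$ moments matter and the tail of the moment expansion is negligible; (ii) invoke Proposition~\ref{prop: main sec 2} to compare $\E Z_d^m$ with $\E\widetilde Z_d^m$ for all $m\leq M$, with multiplicative error $O(\exp(-c\,d^4/n^3))+O(2^{-k/2})$; (iii) propagate these two controls through the (explicit, triangular) linear system that expresses $p_0$ in terms of the moments, tracking how the errors accumulate over the $O(L)$ equations, and conclude $|\PP\{Z_d=0\}-\PP\{\widetilde Z_d=0\}|=O(\exp(-L))$.

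First I would set up the parameters. The hypotheses $d\geq C\sqrt{L}\,n^{3/4}$ give $d^4/n^3\geq C^4 L^2/1 \gg L$ (more precisely $d^2/n^{3/2}\geq C^2 L$), so the per-moment multiplicative error $\exp(-c\,d^4/n^3)$ from Proposition~\ref{prop: main sec 2} is at most $\exp(-c' L^2)$, far smaller than the target $\exp(-L)$; and $q^k\rho_d\leq c L$ together with $q^k\rho_d\geq\exp(-c\,d^2/n^{3/2})$ is exactly the regime in which the mean $\E\widetilde Z_d=(q^k-1)\rho_d/(q-1)\asymp q^k\rho_d$ is of order at most $O(L)$ and bounded below. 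One then checks that the constraints $m\leq c\, d^2/n^{3/2}$ and $q^k\rho_d\geq\exp(-c\,d^4/(n^3 m))$ required by Proposition~\ref{prop: main sec 2} are satisfied for all $m$ up to the truncation level $M=\Theta(L)$: indeed $m\leq M=O(L)\leq c\,d^2/n^{3/2}$ by the lower bound on $d$, and $\exp(-c\,d^4/(n^3 m))\geq \exp(-c\,d^4/(n^3 L))\geq\exp(-c'\,d^2/n^{3/2})$ again by the lower bound on $d$, which is below $q^k\rho_d$ by hypothesis.

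Next, the truncation step. Since $w_{\min}$ is the minimum of $(q^k-1)/(q-1)$ i.i.d.\ weights, $\widetilde Z_d$ is (up to the identification of proportional codewords, i.e.\ up to the factor $q-1$) a sum of independent indicators, so a Chernoff/Markov bound on $\E\widetilde Z_d^M$ shows $\PP\{\widetilde Z_d\geq M\}\leq e^{-M}$ once $M$ is a sufficiently large multiple of $\max(1,\E\widetilde Z_d)=O(L)$. For $Z_d$ one uses the moment comparison of Proposition~\ref{prop: main sec 2} (valid for $m=M$) to transfer the same tail bound from $\widetilde Z_d$ to $Z_d$: $\E Z_d^M\leq 2\,\E\widetilde Z_d^M$, hence $\PP\{Z_d\geq M\}\leq 2e^{-M}$. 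Then write $\PP\{Z_d=0\}=\sum_{i=0}^{M}(-1)^i e_i+(\text{error})$ where $e_i$ are the normalized factorial-moment / inclusion–exclusion terms built from $\E Z_d,\dots,\E Z_d^M$, and similarly for $\widetilde Z_d$; the difference of the truncated sums is bounded by $\sum_{i\leq M}|e_i(Z_d)-e_i(\widetilde Z_d)|$, and each $|e_i(Z_d)-e_i(\widetilde Z_d)|$ is controlled by $\exp(-c'L^2)+O(2^{-k/2})$ times something of size $e_i(\widetilde Z_d)=O((\E\widetilde Z_d)^i/i!)$, which sums to $O(\exp(-L))$ after using $k\geq CL\log L$ to absorb the $2^{-k/2}$ term (note $2^{-k/2}\leq 2^{-CL\log L/2}\ll e^{-L}$) and summing the Poisson-type series. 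The truncation errors on both sides are $O(e^{-M})=O(e^{-L})$.

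The main obstacle I anticipate is step (iii): controlling the \emph{accumulation} of the per-moment errors when inverting the triangular moment-to-probability system. The inversion involves alternating sums with binomial coefficients, and a naive bound would multiply the small error $\exp(-c'L^2)$ by a combinatorial factor that could be as large as $2^{M}\sim 2^{L}$ or $M!$; one must instead exploit that the \emph{relative} error in each moment is small and that the quantities $\E\widetilde Z_d^m$ grow in a controlled (roughly factorial-times-$(\E\widetilde Z_d)^m$) way, so that the alternating series for $p_0-\widetilde p_0$ is dominated by a convergent Poisson-type series with a uniformly small relative perturbation. Concretely, I would factor $\E Z_d^m=(1+\eta_m)\E\widetilde Z_d^m$ with $|\eta_m|\leq\exp(-c'L^2)+O(2^{-k/2})=:\eta$ for all $m\leq M$, substitute into the (finite) inclusion–exclusion formula for $\PP\{Z_d=0\}$, and bound the resulting difference by $\eta\cdot\big(\sum_{i\leq M}e_i(\widetilde Z_d)\big)\leq \eta\cdot e^{\E\widetilde Z_d}\cdot(1+o(1))=\eta\cdot e^{O(L)}$, which is still $O(\exp(-L))$ provided the constant $C$ in $d\geq C\sqrt L n^{3/4}$ (hence in $\exp(-c'L^2)$) and in $k\geq CL\log L$ are chosen large enough relative to the implied $O(L)$. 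Assembling the truncation error $O(e^{-L})$ and this propagated moment error $O(e^{-L})$ yields $|\PP\{d_{\min}\leq d\}-\PP\{w_{\min}\leq d\}|=O(\exp(-L))$, as claimed.
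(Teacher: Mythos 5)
Your plan — truncate both $Z_d$ and $\widetilde Z_d$ to a range of size $O(L)$, compare the first $O(L)$ moments via Proposition~\ref{prop: main sec 2}, and propagate the resulting errors through the inversion of the moment-to-probability system — is exactly the strategy the paper uses, so in broad outline the proposal is correct and matches the paper. The only technical difference is the choice of basis for the inversion: you work with factorial moments and the Bonferroni expansion $\PP\{Z=0\}=\sum_j(-1)^j\E\binom{Z}{j}$, while the paper works with ordinary moments, the Vandermonde-type matrix $B=(j^i)_{i,j\le h}$ and the explicit bound $|b'_{ij}|\le C^h h^{-j}$ on its inverse (Corollary~\ref{cor: crude bound for inv}); both amount to the same moment method.

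One small imprecision in step (iii) deserves a flag. From $\E Z_d^m=(1+\eta_m)\E\widetilde Z_d^m$ with $|\eta_m|\le\eta$ you cannot directly conclude $|e_i(Z_d)-e_i(\widetilde Z_d)|\le\eta\, e_i(\widetilde Z_d)$, because $e_i=\E\binom{Z}{i}$ is an \emph{alternating} Stirling-number combination of the ordinary moments, so the relative error does not pass through unchanged. What you actually get is $|e_i(Z_d)-e_i(\widetilde Z_d)|\le\frac{\eta}{i!}\sum_m|s(i,m)|\,\E\widetilde Z_d^m$, and after summing over $i\le M=O(L)$ and using Lemma~\ref{l: moment growth} the combinatorial factor is of size $e^{O(L\log L)}$, not $e^{O(L)}$ as you wrote. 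This is still harmless — the hypotheses force $\eta\lesssim\exp(-cL^2)+2^{-cL\log L}$, so $\eta\cdot e^{O(L\log L)}=O(e^{-L})$ once the constants in the theorem are set large enough — but your argument needs this factor computed honestly, which is precisely what the paper's bound $|b'_{ij}|\le C^h h^{-j}$ combined with $|\E Z_d^j-\E\widetilde Z_d^j|\lesssim\eta\,(Ch(q-1))^j$ accomplishes in Lemma~\ref{lem: moment err}. Modulo making that estimate explicit, the proposal is sound.
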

The theorem provides some freedom of the choice of the parameters, and includes a regime when
the ratio $k/n$ converges to one when $n\to\infty$. At the same time, we would like to provide a cleaner statement
for the most important regime when $k/n$ is ``separated'' from both $0$ and $1$. We will obtain Theorem \ref{main} as a corollary of Theorem \ref{prop: main prop}.
%\begin{cor}\label{cor: main cor}
%For any numbers $R_1,R_2\in(0,1)$ there is $c(R_1,R_2,p)>0$ with the following property.
%Let positive integers $k,n$ satisfy $R_1\leq k/n\leq R_2$, and let
%$\mathcal C$ be the random linear code uniformly distributed on the set of all linear codes in $\F^n$ of dimension $k$.
%Denote by $F_{\rm dmin}$ the cumulative distribution function of the minimal distance of $\mathcal C$.
%Further, let $w_{\min}$ be the minimum weight of $\frac{p^k-1}{p-1}$ i.i.d.\ uniform random vectors in $\F^n$,
%and $F_{\rm wmin}$ be its cumulative distribution function. Then
%$$
%\sup\limits_{x\in\R}\big|F_{\rm dmin}(x)-F_{\rm wmin}(x)\big|=O\big(\exp(-c(R_1,R_2,p)\,\sqrt{n})\big).
%$$
%\end{cor}

\bigskip

For each $r\geq 0$, we let
$$
M_d(r):=\PP\big\{Z_d=r\big\},\quad \widetilde M_d(r):=\PP\big\{\widetilde Z_d=r\big\},
$$
so that
\begin{align*}
\PP\big\{d_{\min}\leq d\big\}&=\PP\big\{Z_d>0\big\}=\sum\limits_{r=1}^\infty M_d(r);\\
\PP\big\{w_{\min}\leq d\big\}&=\PP\big\{Z_d>0\big\}=\sum\limits_{r=1}^\infty \widetilde M_d(r).
\end{align*}
Observe further that the numbers $M_d(r)$ and $\widetilde M_d(r)$ satisfy the relations
$$
\sum\limits_{r=1}^\infty M_d(r) r^m=\E Z_d^m,\quad
\sum\limits_{r=1}^\infty \widetilde M_d(r) r^m=\E \widetilde Z_d^m,\quad m\geq 1.
$$
These identities, together with the relations between $\E Z_d^m$ and $\E \widetilde Z_d^m$
obtained in the previous section, will allow us to compare $M_d(r)$ with $\widetilde M_d(r)$, hence
bound the distance between the distributions of $d_{\min}$ and $w_{\min}$.
Let us start by recording a moment growth estimate for $\widetilde{Z}_d$: 
\begin{lem}\label{l: moment growth}
  We have
  \begin{align*}
     \brk{\E \widetilde{Z}_d^\ell}^{1/\ell} \le C_{\text{\tiny{\ref{l: moment growth}}}} 
     \begin{cases} 
      \frac{q^k\rho_{d}}{q-1}, & \mbox{ if }\ell\le  \frac{q^k\rho_{d}}{q-1}, \\
       \frac{\ell}{\log(e \ell(q-1) / (q^k\rho_{d}))}, & \mbox{ if }\ell \ge \frac{q^k\rho_{d}}{q-1}.
    \end{cases}
  \end{align*}
Here, $C_{\text{\tiny{\ref{l: moment growth}}}}>0$ is a universal constant.
\end{lem}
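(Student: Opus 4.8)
The plan is to estimate the moments $\E\widetilde{Z}_d^\ell$ directly using the combinatorial formula \eqref{eq: -98529587-593} (which, as remarked there, holds verbatim for $\widetilde{Z}_d$ with $W$ replaced by $\widetilde W$), and then sum the resulting bound over all ranks. Write
\[
\E\widetilde{Z}_d^\ell=\sum_{r=1}^\ell S(\ell,r)(q-1)^{\ell-r}\,\E\!\!\sum_{v^1,\dots,v^r\in\Omega_r}\prod_{i=1}^r\widetilde W_{v^i}(d).
\]
For the $\widetilde W$'s the key simplification is that $Y_a$ and $Y_b$ are \emph{independent} whenever $a,b$ are non-collinear; hence for any pairwise non-collinear tuple $(v^1,\dots,v^r)\in\Omega_r$ we have the exact identity $\E\prod_{i=1}^r\widetilde W_{v^i}(d)=\rho_d^{\,r}$, regardless of linear dependencies among the $v^i$. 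Therefore $\E\sum_{v^1,\dots,v^r\in\Omega_r}\prod\widetilde W_{v^i}(d)=|\Omega_r|\,\rho_d^{\,r}\le \big(\tfrac{q^k-1}{q-1}\big)^r\rho_d^{\,r}\le\big(\tfrac{q^k\rho_d}{q-1}\big)^r$, using $|\Omega_r|\le |\Omega_1|^r=(q^k-1)^r$. Setting $\mu:=q^k\rho_d/(q-1)$, this gives
\[
\E\widetilde{Z}_d^\ell\le\sum_{r=1}^\ell S(\ell,r)(q-1)^{\ell-r}\mu^r=\sum_{r=1}^\ell S(\ell,r)\,\big((q-1)\mu\big)^{\ell}(q-1)^{-r}\mu^{r-\ell};
\]
more cleanly, $\E\widetilde{Z}_d^\ell\le\sum_{r=1}^\ell S(\ell,r)(q-1)^{\ell-r}\mu^r$. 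I recognize the right-hand side: $\sum_r S(\ell,r)(q-1)^{\ell-r}\mu^r=\E N^\ell$ where $N$ is a \emph{Poissonized} surrogate — in fact $\sum_r S(\ell,r)x^r=B_\ell(x)$ is the $\ell$-th Bell/Touchard polynomial, which equals the $\ell$-th moment of a Poisson random variable with mean $x$. So $\E\widetilde{Z}_d^\ell\le B_\ell\big((q-1)\cdot\frac{\mu}{q-1}\big)$-type expression; concretely $\sum_r S(\ell,r)(q-1)^{\ell-r}\mu^r=(q-1)^\ell B_\ell\!\big(\mu/(q-1)\big)$ is not quite it, so instead I will bound $(q-1)^{\ell-r}\mu^r\le \max((q-1),\mu)^\ell\cdot(\min(1,\mu/(q-1)))^{\text{stuff}}$ — the cleanest route is simply $\E\widetilde{Z}_d^\ell\le\sum_r S(\ell,r)\nu^r$ with $\nu:=\max(\mu,(q-1))$ when $\mu\ge q-1$, and to treat the two regimes $\ell\le\mu$ and $\ell\ge\mu$ separately.

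For the regime $\ell\le\mu$: here the dominant term is $r=\ell$, giving $\E\widetilde{Z}_d^\ell\lesssim \mu^\ell$ up to a factor that is controlled because $S(\ell,r)(q-1)^{\ell-r}\mu^r$ is, as a function of $r$, increasing on $[1,\ell]$ in this range (the ratio of consecutive terms is $\gtrsim \mu/\ell\ge 1$); summing the geometric-like tail yields $(\E\widetilde{Z}_d^\ell)^{1/\ell}\le C\mu$, which is the first case. For the regime $\ell\ge\mu$: use the standard bound $S(\ell,r)\le\binom{\ell}{r}r^{\ell-r}\le\big(\frac{e\ell}{r}\big)^r r^{\ell-r}$, or more efficiently the moment-of-Poisson interpretation together with the classical estimate $B_\ell(x)^{1/\ell}\asymp \ell/\log(e\ell/x)$ valid for $x\le\ell$. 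Applying this with $x=\mu$ (after absorbing the $(q-1)^{\ell-r}$ factors, which only shifts $x$ by a constant factor and hence changes the logarithm by an additive constant) gives $(\E\widetilde Z_d^\ell)^{1/\ell}\le C\,\ell/\log(e\ell(q-1)/(q^k\rho_d))$, the second case. To make the Poisson-moment estimate self-contained I would instead just prove $B_\ell(x)\le (C\ell/\log(e\ell/x))^\ell$ for $1\le x\le\ell$ directly: optimize $r\mapsto S(\ell,r)x^r$ over $r$ using $S(\ell,r)\le\frac{\ell!}{r!}\cdot\frac{1}{(\ell-r)!}\cdot$(small), take logarithms, and balance via Stirling, finding the maximizer at $r^\star\asymp \ell/\log(\ell/x)$ and the value $\asymp (\ell/\log(e\ell/x))^\ell$, then multiply by $\ell$ for the number of terms (which is absorbed into the $\ell$-th root).

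The main obstacle is the second regime: getting the precise logarithmic dependence $\ell/\log(e\ell(q-1)/(q^k\rho_d))$ rather than a cruder bound like $\ell$ or $\ell/\log\ell$ requires carefully locating the maximizing term $r^\star$ in $\sum_r S(\ell,r)(q-1)^{\ell-r}\mu^r$ and controlling the Stirling approximation errors uniformly in the range $\mu\le\ell$, including the delicate boundary cases $\mu\asymp\ell$ (where the two cases of the lemma must agree up to constants) and $\mu$ close to $1$ (where $\log(e\ell/\mu)\asymp\log\ell$ and one must be careful that the bound does not degenerate). Everything else — the identity $\E\prod\widetilde W_{v^i}=\rho_d^r$ on $\Omega_r$, the bound $|\Omega_r|\le(q^k-1)^r$, and the expansion \eqref{eq: -98529587-593} — is immediate from what precedes, so the proof reduces to this one extremal-combinatorial computation for Stirling numbers, which I would carry out and absorb all universal-constant losses into $C_{\text{\tiny{\ref{l: moment growth}}}}$.
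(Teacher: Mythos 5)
Your proposal takes essentially the same route as the paper: expand via the Stirling-number formula \eqref{eq: -98529587-593}, observe that for the $\widetilde W$'s pairwise non-collinearity already implies independence (so $\E\prod_i\widetilde W_{v^i}(d)=\rho_d^{\,r}$ on all of $\Omega_r$, not only on $\Omega_{r,r}$), bound $|\Omega_r|\le q^{kr}$, and then estimate $\sum_r S(\ell,r)\lambda^r$ by bounding the Stirling numbers and optimizing over $r$. The paper does exactly this, using $S(\ell,m)\le\tfrac12\binom{\ell}{m}m^{\ell-m}\le 2^{\ell+1}m^{\ell-m}$ (from Rennie--Dobson) and then maximizing $(t\ell)^{1-t}\lambda^t$ over $t\in[0,1]$ by elementary calculus, locating the critical point via $\tfrac1t-\log t=1+\log(\ell/\lambda)$ so that $t^*\asymp 1/(1+\log(\ell/\lambda))$ --- exactly the balancing computation you sketch for the second regime.

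One small inaccuracy in your sketch: in the regime $\ell\le\mu$ you claim the terms $S(\ell,r)(q-1)^{\ell-r}\mu^r$ are increasing in $r$ with consecutive ratio $\gtrsim\mu/\ell$. That is false in general, because the Stirling ratio $S(\ell,r+1)/S(\ell,r)$ is not bounded below by $c/\ell$; for instance $S(\ell,\ell)/S(\ell,\ell-1)=1/\binom{\ell}{2}$, so near $r=\ell$ the ratio is only $\asymp\mu/\ell^2$, which need not exceed $1$ when $\ell\le\mu<\ell^2$. This does not sink the plan --- after substituting $S(\ell,m)\le 2^{\ell+1}m^{\ell-m}$ one simply notes that $(t\ell)^{1-t}\lambda^t\le\lambda$ for every $t\in[0,1]$ when $\ell\le\lambda$ (since $t\ell\le\lambda$), which is how the paper dispenses with the first regime cleanly and avoids any monotonicity claim --- but you should replace the ratio argument with this pointwise bound.
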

\begin{proof}
  Notice that
  \begin{align*}% \label{eq: momentOfTildeZ_d}
  \E \widetilde{Z}_d^\ell = \sum_{m=1}^\ell S(\ell,m)\brk{q-1}^{\ell-m} \abs{\Omega_m}\rho_{d}^m,
  \end{align*}
  where we use the notation from the previous section.
  The cardinality of $\Omega_m$ is 
  \begin{align*}
    |\Omega_m|&=\brk{q^k-1}\brk{q^k-(q-1)-1}\brk{q^k-2(q-1)-1}\cdots 
    \brk{q^k-(m-1)(q-1)-1} \\
&\le q^{km}.
  \end{align*}
  Thus, 
  \begin{align*}
    \sum_{m=1}^\ell S(\ell,m)\brk{q-1}^{\ell-m} \abs{\Omega_m}\rho_{d}^m
    \le  (q-1)^{\ell}\sum_{m=1}^{\ell} S(\ell,m)\brk{ \frac{q^k\rho_{d}}{q-1} }^m.
  \end{align*}

  Let $\lambda:=  \frac{q^k\rho_{d}}{q-1} $. We will use an upper
  estimate for $S(\ell,m)$ from \cite[Theorem 3]{RD69}:
  \begin{align*}
    S(\ell,m) \le \frac{1}{2}{\ell \choose m} m^{\ell-m}\le 2^{\ell+1} m^{\ell-m}.
  \end{align*}
  With this bound and the substitution $m=t\ell$ we have 
  \begin{align*}
    \sum_{m=1}^\ell S(\ell,m)\lambda^{m} \le \ell 2^{\ell+1} \brk{\max_{ t\in [0,1]} 
    \brk{\brk{t\ell}^{(1-t)}\lambda^t}}^\ell.
  \end{align*} 

  When $\ell \le \lambda$, we get 
  \begin{align*}
    \max_{ t\in [0,1]} 
    \brk{\brk{t\ell}^{(1-t)}\lambda^t} \le \lambda 
  \end{align*}
  which finishes the proof of the first inequality.
  Now we assume $\ell\ge \lambda$. We will use 
  a standard argument from calculus. 
  Consider the derivative
  \begin{align*}
    \frac{d}{dt}\brk{t\ell}^{(1-t)}\lambda^t&=
  \frac{d}{dt}\exp(\log(t\ell)(1-t)+\log(\lambda)t)\\
    &=\brk{
      \frac{1}{t}(1-t)-\log(t\ell)+\log(\lambda)
    } 
    \brk{t\ell}^{(1-t)}\lambda^t\\
    &=\brk{\frac{1}{t}-\log(t)-\brk{1+\log\Big(\frac{\ell}{\lambda}\Big)}}
    \brk{t\ell}^{(1-t)}\lambda^t.
  \end{align*}
  Notice that $\frac{1}{t}-\log(t)-\brk{1+\log(\frac{\ell}{\lambda})}$ is a monotone decreasing function which takes value 
  $\infty$ at $t=0$ and $-\log(\frac{\ell}{\lambda})<0$ at $t=1$. 
  Thus, the maximum of $\brk{t\ell}^{(1-t)}\lambda^t$
  is achieved when
  \begin{align*}
    \frac{1}{t}-\log(t) =  \brk{1+\log\Big(\frac{\ell}{\lambda}\Big)}.
  \end{align*}
  Now we fix $t\in (0,1)$ to be the constant satisfying the above equation.
  Since $  \frac{1}{t} \ge -\log(t) \ge 0$ on $t\in [0,1]$, we have 
\begin{align*}
  \frac{1}{\brk{1+\log(\frac{\ell}{\lambda})}} \le  t  \le  \frac{2}{\brk{1+\log(\frac{\ell}{\lambda})}}.
\end{align*}

Furthermore, we have 
$\lambda \le \frac{\ell}{\brk{1+\log(\frac{\ell}{\lambda})}}$ since 
 $\frac{x}{1+\log(x)}\ge 1$ for $x\ge 1$. We conclude that 
$$
  \max_{t\in[0,1]}\brk{t\ell}^{(1-t)}\lambda^t \le \frac{2\ell}{\brk{1+\log(\frac{\ell}{\lambda})}}
$$
and the statement of the lemma follows.

\end{proof}

Next, fix an integer parameter $h\geq 1$ (its value will be defined later), and define the $h\times h$ square matrix
$B=(b_{ij})$ as 
$$
b_{ij}=j^i,\quad i,j=1,\dots,h.
$$
The next lemma can be easily checked by a straightforward computation.
\begin{lem}
Let $B=(b_{ij})$ be as above.
Then $B$ is invertible, and the entries of the inverse matrix $B^{-1}=(b_{ij}')$ are given by
$$
b_{ij}'=\begin{cases}
\frac{(-1)^{j-1}\sum\limits_{\substack{1\leq m_1<\dots< m_{h-j}\leq h,\\m_1,\dots,m_{h-j}\neq i}}m_1\dots m_{h-j}}
{i\prod\limits_{1\leq m\leq h,\,m\neq i}(m-i)},&\mbox{if }j<h;\\
\frac{1}{i\prod\limits_{1\leq m\leq h,\,m\neq i}(i-m)},&\mbox{if }j=h.
\end{cases}
$$
\end{lem}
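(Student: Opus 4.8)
The plan is to recognize $B$ as a column-rescaling of a Vandermonde matrix, which gives invertibility at once, and then to compute the rows of $B^{-1}$ by reinterpreting them as coefficient vectors of Lagrange interpolation polynomials. For invertibility, write $B=\widetilde V D$, where $D=\mathrm{diag}(1,2,\dots,h)$ and $\widetilde V$ is the matrix with entries $\widetilde V_{ij}=j^{\,i-1}$ (so that column $j$ of $B$ is $j$ times column $j$ of $\widetilde V$); then $\widetilde V$ is the transpose of a Vandermonde matrix on the nodes $1,\dots,h$, whence $\det\widetilde V=\prod_{1\le i<j\le h}(j-i)\ne 0$ and $\det D=h!\ne 0$, so $B$ is invertible. (Alternatively, one can deduce invertibility directly from the unique solvability of the interpolation problem in the next step.)

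For the inverse, I would fix a row index $i$ and encode the entries $b'_{i1},\dots,b'_{ih}$ of the $i$-th row of $B^{-1}$ in the polynomial $P_i(x):=\sum_{m=1}^{h}b'_{im}x^{m}$, which has degree at most $h$ and zero constant term. The identity $B^{-1}B=I$ is exactly the statement that $\sum_{m=1}^{h}b'_{im}j^{m}=\delta_{ij}$ for $j=1,\dots,h$, i.e.\ $P_i(j)=\delta_{ij}$ for $j=1,\dots,h$; combined with $P_i(0)=0$, this determines $P_i$ uniquely as the polynomial of degree $\le h$ taking the prescribed values at the $h+1$ nodes $0,1,\dots,h$, hence as the Lagrange basis polynomial
$$
P_i(x)=\frac{x\,\prod_{1\le m\le h,\ m\ne i}(x-m)}{i\,\prod_{1\le m\le h,\ m\ne i}(i-m)}.
$$
Then $b'_{ij}$ is precisely the coefficient of $x^{j}$ in $P_i$.

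It remains to read off these coefficients. The numerator $x\prod_{m\ne i}(x-m)$ is monic of degree $h$, so its leading coefficient is $1$; this gives the stated formula in the case $j=h$, the denominator being exactly $i\prod_{1\le m\le h,\,m\ne i}(i-m)$. For $j<h$, the coefficient of $x^{j}$ in $x\prod_{m\ne i}(x-m)$ equals the coefficient of $x^{j-1}$ in $\prod_{1\le m\le h,\,m\ne i}(x-m)$, which by Vieta's formulas equals $(-1)^{h-j}$ times the elementary symmetric function $e_{h-j}$ of $\{1,\dots,h\}\setminus\{i\}$, that is, $(-1)^{h-j}\sum_{1\le m_1<\dots<m_{h-j}\le h,\ m_\ell\ne i}m_1\cdots m_{h-j}$; rewriting the denominator via $\prod_{m\ne i}(i-m)=(-1)^{h-1}\prod_{m\ne i}(m-i)$ turns the sign $(-1)^{h-j}$ into $(-1)^{j-1}$ and yields exactly the claimed expression. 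The only point that needs attention is the sign bookkeeping; there is no genuine obstacle, which is why the lemma is recorded as a straightforward computation.
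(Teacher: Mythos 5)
Your proof is correct and complete. The paper does not actually give a proof of this lemma; it simply remarks that it ``can be easily checked by a straightforward computation,'' so there is no argument to compare against. Your Lagrange-interpolation reading of the rows of $B^{-1}$ is exactly the clean way to organize that computation: factoring $B=\widetilde V D$ with $\widetilde V$ a transposed Vandermonde matrix gives invertibility; encoding row $i$ of $B^{-1}$ in the polynomial $P_i(x)=\sum_{m=1}^{h}b'_{im}x^{m}$, observing that $B^{-1}B=I$ together with the absence of a constant term forces $P_i(j)=\delta_{ij}$ for $j=0,1,\dots,h$, identifies $P_i$ with the Lagrange basis polynomial on the $h+1$ nodes $0,\dots,h$; and the formula for $b'_{ij}$ drops out by Vieta once one rewrites $\prod_{m\neq i}(i-m)=(-1)^{h-1}\prod_{m\neq i}(m-i)$, which converts the sign $(-1)^{h-j}$ into the stated $(-1)^{j-1}$. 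The sign bookkeeping you flag is indeed the only place to be careful, and you have handled it correctly.
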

In what follows, we will not need a precise formula for the entries of the inverse; just a crude upper bound will be sufficient:
\begin{cor}\label{cor: crude bound for inv}
With the above notation, we have
$$
|b_{ij}'|\leq \frac{{h\choose j}h^{h-j}}{((\lfloor h/2\rfloor-1)!)^2}\leq C_{\text{\tiny{\ref{cor: crude bound for inv}}}}^h h^{-j},
$$
where $C_{\text{\tiny{\ref{cor: crude bound for inv}}}}>0$ is a universal constant.
\end{cor}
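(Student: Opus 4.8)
The plan is to read the explicit formula for $b_{ij}'$ off the preceding lemma and bound its numerator and denominator separately. Take first $j<h$. The numerator is $\pm1$ times a sum indexed by the $(h-j)$-element subsets of $\{1,\dots,h\}\setminus\{i\}$; there are $\binom{h-1}{h-j}=\binom{h-1}{j-1}\le\binom{h}{j}$ such subsets, and each corresponding product $m_1\cdots m_{h-j}$ is a product of $h-j$ distinct members of $\{1,\dots,h\}$, hence is at most $h^{h-j}$. Since also $i\ge1$, this gives
$$|b_{ij}'|\le\frac{\binom{h}{j}\,h^{h-j}}{\bigl|\prod_{1\le m\le h,\ m\ne i}(m-i)\bigr|}.$$
For $j=h$ the same bound is immediate, the numerator being $1=\binom{h}{h}h^{0}$.

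Next I would lower bound the denominator uniformly in $i$. One has $\bigl|\prod_{m\ne i}(m-i)\bigr|=(i-1)!\,(h-i)!$, so (assuming $h\ge2$, so that $\lfloor h/2\rfloor-1$ is a nonnegative integer) it suffices to show $(i-1)!\,(h-i)!\ge\bigl((\lfloor h/2\rfloor-1)!\bigr)^{2}$ for every $i\in\{1,\dots,h\}$. Writing $a=i-1$, $b=h-i$ so that $a+b=h-1$, the product $a!\,b!$ is minimized at the most balanced split: if $a\ge b+1$ then $a!\,b!\ge(a-1)!\,(b+1)!$, since the ratio of the two sides equals $a/(b+1)\ge1$; iterating toward balance can only decrease the product. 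The balanced split is $(\lfloor(h-1)/2\rfloor,\lceil(h-1)/2\rceil)$, and $\lfloor(h-1)/2\rfloor\ge\lfloor h/2\rfloor-1$, so $(i-1)!\,(h-i)!\ge(\lfloor(h-1)/2\rfloor!)^{2}\ge\bigl((\lfloor h/2\rfloor-1)!\bigr)^{2}$. Combining this with the numerator bound yields the first inequality of the corollary.

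For the second inequality I would just invoke Stirling's formula. Using $\binom{h}{j}\le2^{h}$ and $h^{h-j}=h^{h}h^{-j}$, it is enough to check that $\bigl((\lfloor h/2\rfloor-1)!\bigr)^{2}\ge(h/(4e))^{h}$ for all $h$ past an absolute threshold; this follows from $m!\ge(m/e)^{m}$ with $m=\lfloor h/2\rfloor-1\ge h/2-3/2$, because the exponential gain from replacing $1/(2e)$ by $1/(4e)$ absorbs the resulting polynomial-in-$h$ losses. Then $\binom{h}{j}h^{h-j}/((\lfloor h/2\rfloor-1)!)^{2}\le 2^{h}h^{h}/(h/(4e))^{h}\cdot h^{-j}=(8e)^{h}h^{-j}$, and the finitely many small values of $h$ are covered by enlarging the universal constant, so $C_{\text{\tiny{\ref{cor: crude bound for inv}}}}:=8e$ works (say).

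I do not expect a genuine obstacle here: the whole argument is elementary. The only step that is more than bookkeeping is the observation that $(i-1)!\,(h-i)!$ is minimized in the middle of the range of $i$ — that is, the balanced-partition inequality above — and even that reduces to a one-line monotonicity check; once it is in place, the denominator bound and the Stirling estimate for the final inequality are routine.
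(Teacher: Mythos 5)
Your proof is correct and is clearly the intended argument: the paper gives no proof of this corollary, leaving it as a direct consequence of the explicit formula for $b_{ij}'$, and bounding the numerator (at most $\binom{h-1}{j-1}\le\binom{h}{j}$ terms, each at most $h^{h-j}$) against the denominator $i\,(i-1)!\,(h-i)!\ge(i-1)!\,(h-i)!$ via the balanced-partition minimization is exactly the natural route. The one-line monotonicity check for $a!\,b!$ with $a+b$ fixed, the observation $\lfloor(h-1)/2\rfloor\ge\lfloor h/2\rfloor-1$, and the Stirling estimate $m!\ge(m/e)^m$ to convert the first bound into $C^h h^{-j}$ are all as one would write them; the only caveat worth mentioning is that the statement tacitly requires $h\ge2$ so that $\lfloor h/2\rfloor-1\ge0$ (you flag this correctly), but that is an artifact of how the corollary is phrased, not a gap in your argument.
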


Denote the vector $(M_d(1),\dots,M_d(h))^\top$ by $V$, and the vector
$(\widetilde M_d(1),\dots,\widetilde M_d(h))^\top$ by $\widetilde V$.
Further, let $U:=(\E Z_d,\dots,\E Z_d^h)^\top$, and $\widetilde U:=(\E \widetilde Z_d,\dots,\E \widetilde Z_d^h)^\top$,
and, finally, define the ``error vectors''
$$
E:=\Big(\sum_{r=h+1}^\infty r^i M_d(r)\Big)_{i=1}^h,\quad \widetilde E:=\Big(\sum_{r=h+1}^\infty r^i \widetilde M_d(r)\Big)_{i=1}^h.
$$
In view of the above,
$$
B V+E=U,\quad B\widetilde V+\widetilde E=\widetilde U,
$$
whence the difference $V-\widetilde V$ can be expressed as
$$
V-\widetilde V=B^{-1}(U-\widetilde U)-B^{-1}(E-\widetilde E).
$$
Let us first estimate $B^{-1}(U-\widetilde U)$:
\begin{lem}\label{lem: moment err}
    Suppose $d,n\in \mathbb{N}$ satisfy
    $\frac{d}{n} \leq \lambda_0 (1-\frac{1}{q})$, and $d^{2}/n^{3/2}\geq
C_{\text{\tiny{\ref{prop: main sec 2}}}}(\lambda_0, q)$.
Assume additionally that $h\geq q^k\rho_d\geq \exp\big
(-\frac{c_{\text{\tiny{\ref{prop: main sec 2}}}}(\lambda_0, q)d^4}{n^3 h}\big)$,
$h\log_2 C_{\text{\tiny{\ref{cor: crude bound for inv}}}}+h\log_2 C_{\text{\tiny{\ref{l: moment growth}}}}
+h+h\log(hp-h)\leq k/4$
and $h\le \frac{c_{\text{\tiny{\ref{prop: main sec 2}}}}(\lambda_0, q)}
{\log_2 C_{\text{\tiny{\ref{cor: crude bound for inv}}}}+
\log_2 C_{\text{\tiny{\ref{l: moment growth}}}}+2+\log(q-1)}\frac{d^2}{n^{3/2}}$.
Then the absolute value of each component of the vector $B^{-1}(U-\widetilde U)$
is bounded above by
$$
O\bigg(\exp\Big(-\frac{1}{2}c_{\text{\tiny{\ref{prop: main sec 2}}}}(\lambda_0, q)d^4/n^3\Big)+2^{-k/4}\bigg).
$$
\end{lem}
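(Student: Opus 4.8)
The plan is to start from the decomposition $V-\widetilde V=B^{-1}(U-\widetilde U)-B^{-1}(E-\widetilde E)$, and to focus only on the first term, $B^{-1}(U-\widetilde U)$, whose $i$-th component equals $\sum_{j=1}^h b_{ij}'(\E Z_d^j-\E \widetilde Z_d^j)$. I would first invoke Proposition~\ref{prop: main sec 2}: the hypotheses stated in the lemma (namely $d/n\le\lambda_0(1-1/q)$, $d^2/n^{3/2}\ge C_{\text{\tiny{\ref{prop: main sec 2}}}}$, the condition $h\le c_{\text{\tiny{\ref{prop: main sec 2}}}}d^2/n^{3/2}$ which guarantees that every $j\le h$ is an admissible value of $m$, and the condition $q^k\rho_d\ge \exp(-c_{\text{\tiny{\ref{prop: main sec 2}}}}d^4/(n^3h))\ge \exp(-c_{\text{\tiny{\ref{prop: main sec 2}}}}d^4/(n^3 j))$) let me write, for each $j\le h$,
$$
\E Z_d^j=\big(1+\theta_j\big)\E\widetilde Z_d^j,\qquad |\theta_j|=O\big(\exp(-c_{\text{\tiny{\ref{prop: main sec 2}}}}d^4/n^3)\big)+O(2^{-k/2}).
$$
Hence $|\E Z_d^j-\E \widetilde Z_d^j|\le |\theta_j|\,\E\widetilde Z_d^j$, and the $i$-th component of $B^{-1}(U-\widetilde U)$ is bounded in absolute value by $\big(\max_{j\le h}|\theta_j|\big)\sum_{j=1}^h |b_{ij}'|\,\E\widetilde Z_d^j$.

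The next step is to bound $\sum_{j=1}^h |b_{ij}'|\,\E\widetilde Z_d^j$. Here I would combine the crude bound $|b_{ij}'|\le C_{\text{\tiny{\ref{cor: crude bound for inv}}}}^h h^{-j}$ from Corollary~\ref{cor: crude bound for inv} with the moment growth estimate of Lemma~\ref{l: moment growth}. Since the hypothesis assumes $h\ge q^k\rho_d$ (so $h\ge q^k\rho_d/(q-1)$ as well, after absorbing the constant — actually we should use $h\ge q^k\rho_d/(q-1)$ directly, which follows since $q\ge 2$), Lemma~\ref{l: moment growth} gives, for every $j\le h$,
$$
\E\widetilde Z_d^j\le \big(C_{\text{\tiny{\ref{l: moment growth}}}}\big)^j\Big(\frac{j}{\log(e j(q-1)/(q^k\rho_d))}\Big)^j\le \big(C_{\text{\tiny{\ref{l: moment growth}}}}\big)^j j^j\le \big(C_{\text{\tiny{\ref{l: moment growth}}}}\big)^h h^h,
$$
where I used that $\log(e j(q-1)/(q^k\rho_d))\ge 1$. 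Therefore $|b_{ij}'|\,\E\widetilde Z_d^j\le \big(C_{\text{\tiny{\ref{cor: crude bound for inv}}}}C_{\text{\tiny{\ref{l: moment growth}}}}\big)^h h^{h-j}$, and a slightly more careful version (keeping the $h^{-j}$ factor against $j^j\le h^{j}$ honestly, or just summing a geometric-type series) gives $\sum_{j=1}^h |b_{ij}'|\,\E\widetilde Z_d^j\le h\,\big(C_{\text{\tiny{\ref{cor: crude bound for inv}}}}C_{\text{\tiny{\ref{l: moment growth}}}}\big)^h h^{h-1}=\big(C_{\text{\tiny{\ref{cor: crude bound for inv}}}}C_{\text{\tiny{\ref{l: moment growth}}}}\big)^h h^h$. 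Actually, to make the final exponential beat this polynomially-in-$h$-but-exponentially-in-$h\log h$ factor, I would instead keep the sharper form $\E\widetilde Z_d^j\le (C_{\text{\tiny{\ref{l: moment growth}}}})^j j^j$ and note $|b_{ij}'|(C_{\text{\tiny{\ref{l: moment growth}}}})^j j^j\le (C_{\text{\tiny{\ref{cor: crude bound for inv}}}}C_{\text{\tiny{\ref{l: moment growth}}}})^j (j/h)^j h^{\,0}\cdot\binom{h}{j}\le(C_{\text{\tiny{\ref{cor: crude bound for inv}}}}C_{\text{\tiny{\ref{l: moment growth}}}})^h$, so the whole sum is at most $h(C_{\text{\tiny{\ref{cor: crude bound for inv}}}}C_{\text{\tiny{\ref{l: moment growth}}}})^h$. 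In any case, the point is that the sum is bounded by $2^{O(h\log h)}$, and in fact the cleanest route uses the two stated side-conditions on $h$ precisely so that this factor is dominated: the condition $h\log_2 C_{\text{\tiny{\ref{cor: crude bound for inv}}}}+h\log_2 C_{\text{\tiny{\ref{l: moment growth}}}}+h+h\log(hq-h)\le k/4$ is exactly what makes $\sum_j|b_{ij}'|\,\E\widetilde Z_d^j\le 2^{k/4}$, and the condition $h\le \frac{c_{\text{\tiny{\ref{prop: main sec 2}}}}}{\log_2 C_{\text{\tiny{\ref{cor: crude bound for inv}}}}+\log_2 C_{\text{\tiny{\ref{l: moment growth}}}}+2+\log(q-1)}\,\frac{d^2}{n^{3/2}}$ makes the same sum at most $\exp(\tfrac12 c_{\text{\tiny{\ref{prop: main sec 2}}}}d^4/n^3)$ (using $d^4/n^3=(d^2/n^{3/2})^2\ge (d^2/n^{3/2})\cdot C_{\text{\tiny{\ref{prop: main sec 2}}}}^{\,-1}\cdot\dots$, i.e.\ $d^4/n^3$ dominates $d^2/n^{3/2}$ up to the constant, so $h$ times a constant is at most $\tfrac12 c_{\text{\tiny{\ref{prop: main sec 2}}}}d^4/n^3$ after the bound on $d^2/n^{3/2}$).

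Putting the two estimates together, the $i$-th component of $B^{-1}(U-\widetilde U)$ is at most
$$
\Big(O\big(\exp(-c_{\text{\tiny{\ref{prop: main sec 2}}}}d^4/n^3)\big)+O(2^{-k/2})\Big)\cdot\min\Big\{2^{k/4},\ \exp\big(\tfrac12 c_{\text{\tiny{\ref{prop: main sec 2}}}}d^4/n^3\big)\Big\},
$$
and distributing the product across the minimum — pairing $\exp(-c_{\text{\tiny{\ref{prop: main sec 2}}}}d^4/n^3)$ with $\exp(\tfrac12 c_{\text{\tiny{\ref{prop: main sec 2}}}}d^4/n^3)$ to get $\exp(-\tfrac12 c_{\text{\tiny{\ref{prop: main sec 2}}}}d^4/n^3)$, and $2^{-k/2}$ with $2^{k/4}$ to get $2^{-k/4}$ — yields the claimed bound $O\big(\exp(-\tfrac12 c_{\text{\tiny{\ref{prop: main sec 2}}}}d^4/n^3)+2^{-k/4}\big)$. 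The main obstacle, and the only place requiring genuine care rather than bookkeeping, is checking that the two side-hypotheses on $h$ are strong enough to absorb the factor $2^{O(h\log h)}$ coming from the product of the inverse-matrix entries (exponential in $h$ via $C_{\text{\tiny{\ref{cor: crude bound for inv}}}}^h$) and the moments (superexponential, $\sim h^h$); this is why one must be scrupulous about which exponent of $h$ the $b_{ij}'$ bound kills against $\E\widetilde Z_d^j$, and about using $d^4/n^3\gg d^2/n^{3/2}$ to convert the linear-in-$h$ budget into the desired $d^4/n^3$ rate. Everything else is a routine application of the three quoted lemmas together with the hypotheses' bookkeeping.
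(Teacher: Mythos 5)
Your proposal follows essentially the same route as the paper: write the $i$-th component of $B^{-1}(U-\widetilde U)$ as $\sum_{j\le h} b_{ij}'(\E Z_d^j-\E\widetilde Z_d^j)$, apply Proposition~\ref{prop: main sec 2} to each moment difference (after noting that $j\le h$ implies the hypotheses of that proposition are satisfied for $m=j$), bound $\E\widetilde Z_d^j$ via Lemma~\ref{l: moment growth} and $|b_{ij}'|$ via Corollary~\ref{cor: crude bound for inv}, and then invoke the two side-conditions on $h$ to absorb the resulting $2^{O(h\log h)}$-type prefactor against the two exponents. This is exactly what the paper does.

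One inaccuracy worth flagging: you apply the second branch of Lemma~\ref{l: moment growth} for \emph{every} $j\le h$, justified by the fact that $h\ge q^k\rho_d/(q-1)$. That is not the right condition — the branch is governed by whether $j$ (not $h$) is at least $q^k\rho_d/(q-1)$. For $j<q^k\rho_d/(q-1)$ the first branch applies, the quantity $\log\bigl(ej(q-1)/(q^k\rho_d)\bigr)$ can be less than $1$ (indeed negative), and the bound $\E\widetilde Z_d^j\le C_{\text{\tiny{\ref{l: moment growth}}}}^j j^j$ fails: in that regime $\E\widetilde Z_d^j\le\bigl(C_{\text{\tiny{\ref{l: moment growth}}}}q^k\rho_d/(q-1)\bigr)^j$, which for small $j$ can greatly exceed $C^jj^j$. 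This also undercuts the ``sharper'' calculation you then build on it. The coarser bound you also write down, $\E\widetilde Z_d^j\le C_{\text{\tiny{\ref{l: moment growth}}}}^h\bigl(h(q-1)\bigr)^h$, is uniform over both branches (using $q^k\rho_d\le h$ in the first branch), and that coarser bound — which is what the paper uses and what the side-conditions on $h$ are calibrated to absorb — is all the argument needs. So the overall proof stands once you replace the per-$j$ bound with the uniform one.
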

\begin{proof}
Fix any $i,j\leq h$. Applying Proposition~\ref{prop: main sec 2} in combination with Lemma~\ref{l: moment growth}, we get
$$
|\E Z_d^j-\E\widetilde Z_d^j|\leq O\big(\exp(-c_{\text{\tiny{\ref{prop: main sec 2}}}}(\lambda_0, q)d^4/n^3)+2^{-k/2}\big)\,
C_{\text{\tiny{\ref{l: moment growth}}}}^j
     \begin{cases} 
      (q^{k}\rho_{d})^j, & \mbox{ if }j\le e \frac{q^k\rho_{d}}{q-1}, \\
       \Big(\frac{j(q-1)}{\log(j(q-1) / (q^k\rho_{d}))}\Big)^j, & \mbox{ if }j > e\frac{q^k\rho_{d}}{q-1}.
    \end{cases}
$$
On the other hand, according to Corollary~\ref{cor: crude bound for inv}, we have
$$
|b_{ij}'|\leq C_{\text{\tiny{\ref{cor: crude bound for inv}}}}^h.
$$
Hence, the $i$-th component of $B^{-1}(U-\widetilde U)$ can be bounded from above by
$$
O\Big(C_{\text{\tiny{\ref{cor: crude bound for inv}}}}^h\,h\,
\big(\exp(-c_{\text{\tiny{\ref{prop: main sec 2}}}}(\lambda_0, q)d^4/n^3)+2^{-k/2}\big)\,
C_{\text{\tiny{\ref{l: moment growth}}}}^h(h(q-1))^h\Big).
$$
Using the assumptions on parameters, we get the result.
\end{proof}

By a slightly more careful argument, we get an estimate on the term $B^{-1}(E-\widetilde E)$:
\begin{lem}\label{lem: error}
    Suppose $d,n\in \mathbb{N}$ satisfy
    $\frac{d}{n} \leq \lambda_0 (1-\frac{1}{q})$, and $d^2/n^{3/2}\geq C_{\text{\tiny{\ref{prop: main sec 2}}}}(\lambda_0, q)$.
Assume additionally that
$$e^{-8C_{\text{\tiny{\ref{l: moment growth}}}}C_{\text{\tiny{\ref{cor: crude bound for inv}}}}(q-1)}h\geq q^k\rho_d\geq \exp\bigg(
-\frac{c_{\text{\tiny{\ref{prop: main sec 2}}}}(\lambda_0, q)d^4}{4n^3 h}\bigg),$$
and $h\le \frac{c_{\text{\tiny{\ref{prop: main sec 2}}}}(\lambda_0, q)}{4}\frac{d^2}{n^{3/2}}$.
Then
$$
\sum_{r=h+1}^\infty M_d(r),\sum_{r=h+1}^\infty \widetilde M_d(r)=O(2^{-h}),
$$
and the absolute value of each component of the vector $B^{-1}(E-\widetilde E)$
is bounded above by
$
O(2^{-h})
$.
\end{lem}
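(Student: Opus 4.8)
The plan is to control the ``tail moments'' $E$ and $\widetilde E$ by first showing that both $Z_d$ and $\widetilde Z_d$ are extremely unlikely to exceed $h$, and then converting this tail probability bound into a bound on $\sum_{r>h} r^i M_d(r)$ (and the tilde analogue) for each $i\le h$, after which Corollary~\ref{cor: crude bound for inv} finishes the job as in Lemma~\ref{lem: moment err}. The first step — bounding $\PP\{\widetilde Z_d>h\}$ and $\PP\{Z_d>h\}$ by $O(2^{-h})$ — is where the real work lies. For $\widetilde Z_d$ I would use the moment growth estimate of Lemma~\ref{l: moment growth}: by Markov's inequality applied to the $h$-th moment, $\PP\{\widetilde Z_d\ge 2h\}\le (\E\widetilde Z_d^h)/(2h)^h$, and since $h\ge q^k\rho_d/(q-1)$ the relevant case of Lemma~\ref{l: moment growth} gives $(\E\widetilde Z_d^h)^{1/h}\le C_{\text{\tiny{\ref{l: moment growth}}}}\, h/\log(e\ell(q-1)/(q^k\rho_d))$; the lower bound $q^k\rho_d\ge \exp(-c\,d^4/(4n^3h))$ combined with $h\le \frac14 c\, d^2/n^{3/2}$ keeps $\log(eh(q-1)/(q^k\rho_d))$ from being too large, so this ratio is at most a constant multiple of $h$, and the factor $1/2^h$ from Markov wins. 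For $Z_d$ I would either repeat the same computation using Proposition~\ref{prop: main sec 2} (valid in the stated parameter range, noting $h\le c_{\text{\tiny{\ref{prop: main sec 2}}}} d^2/n^{3/2}$ and the constraint on $q^k\rho_d$) to compare $\E Z_d^h$ with $\E\widetilde Z_d^h$ up to a $(1+o(1))$ factor, or simply observe that Proposition~\ref{prop: main sec 2}'s proof gives the same moment upper bound directly.

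The second step is the conversion from a tail probability to a tail-moment bound. The obstacle here is that $r^i$ can be large for large $r$, so I cannot naively write $\sum_{r>h} r^i M_d(r)\le h^i \PP\{Z_d>h\}$. Instead I would use a layer-cake / dyadic decomposition: write $\sum_{r>h} r^i M_d(r)=\sum_{j\ge 0}\sum_{2^j h< r\le 2^{j+1}h} r^i M_d(r)\le \sum_{j\ge 0}(2^{j+1}h)^i\,\PP\{Z_d>2^j h\}$, and apply Markov with the $\max(i,h)$-th moment (or a moment of order slightly larger than $i$) on each dyadic block so that the geometric decay of $\PP\{Z_d>2^jh\}$ — which by Lemma~\ref{l: moment growth}-type estimates decays like $2^{-j\cdot(\text{something})\cdot h}$ — dominates the polynomial growth $(2^{j+1}h)^i$; since $i\le h$ the sum over $j$ converges and is bounded by $O(2^{-h})$ times a constant. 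This requires that the moment growth estimate be available at order up to $\max(i,h)=h$ for $\widetilde Z_d$, which it is, and at order $h$ for $Z_d$, which Proposition~\ref{prop: main sec 2} supplies in the given range.

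Finally, having established $|E_i|, |\widetilde E_i|=O(2^{-h}\,h^{i}\cdot(\text{harmless factors}))$ — more precisely, after absorbing the $h^i$ against the $h^{-j}$ in Corollary~\ref{cor: crude bound for inv} exactly as in Lemma~\ref{lem: moment err} — I multiply by $B^{-1}$: the $i$-th component of $B^{-1}(E-\widetilde E)$ is $\sum_{j=1}^h b_{ij}'(E_j-\widetilde E_j)$, and using $|b_{ij}'|\le C_{\text{\tiny{\ref{cor: crude bound for inv}}}}^h h^{-j}$ together with the block bound (tracking that $E_j$ carries a factor at most $C_{\text{\tiny{\ref{l: moment growth}}}}^j (h(q-1))^j$ from the moment estimates and a factor $2^{-h}$ from the tail), the condition $e^{-8C_{\text{\tiny{\ref{l: moment growth}}}}C_{\text{\tiny{\ref{cor: crude bound for inv}}}}(q-1)}h\ge q^k\rho_d$ is exactly what makes the product $C_{\text{\tiny{\ref{cor: crude bound for inv}}}}^h C_{\text{\tiny{\ref{l: moment growth}}}}^j (q-1)^j h^{-j+j}\cdot(\dots)$ collapse so that the geometric series in $j$ sums to $O(1)$, leaving the overall bound $O(2^{-h})$. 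The main obstacle, to restate, is the dyadic tail-summation in step two: one must choose the Markov moment order carefully (a moment of order a bit above $i$, but no larger than what Lemma~\ref{l: moment growth} and Proposition~\ref{prop: main sec 2} can bound, namely $h$) so that geometric decay beats the $r^i$ weight uniformly in $j$.
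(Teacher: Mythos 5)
Your overall strategy — Markov's inequality at high moment order, with the moments of $\widetilde Z_d$ controlled by Lemma~\ref{l: moment growth} and the moments of $Z_d$ compared to those of $\widetilde Z_d$ via Proposition~\ref{prop: main sec 2} — is the same as the paper's. But you depart from the paper in one cosmetic and one substantive way, and the substantive one is a genuine gap.

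\textbf{The cosmetic difference.} The paper does not need a dyadic layer-cake argument. It fixes a single moment order $m:=4h$ and applies Markov's inequality \emph{termwise}: for each $r>h$, $M_d(r)\le\PP\{Z_d\ge r\}\le\E Z_d^m/r^m$ (and likewise for $\widetilde M_d(r)$), which combined with the moment bounds gives $M_d(r),\widetilde M_d(r)=O\big(h^m(2C_{\text{\tiny{\ref{cor: crude bound for inv}}}})^{-m}r^{-m}\big)$. One then sums $\sum_{r>h}r^j\cdot r^{-m}$ directly; this geometric-type series converges because $m-j\ge m-h=3h$, giving $E_j,\widetilde E_j=O\big(h^j(2C_{\text{\tiny{\ref{cor: crude bound for inv}}}})^{-m}\big)$, which then cancels against $|b'_{ij}|\le C_{\text{\tiny{\ref{cor: crude bound for inv}}}}^hh^{-j}$ to give $O(2^{-h})$. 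There is nothing to gain from splitting into dyadic blocks and re-applying Markov on each.

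\textbf{The gap.} You cap the moment order at $h$, writing that one must use ``a moment of order a bit above $i$, but no larger than what Lemma~\ref{l: moment growth} and Proposition~\ref{prop: main sec 2} can bound, namely $h$.'' That cap is wrong, and with it your dyadic sum cannot close for $i=h$: with Markov at order $h$ you get $\PP\{Z_d>2^jh\}\le\E Z_d^h/(2^jh)^h$, so the block sum $\sum_{j\ge 0}(2^{j+1}h)^h\PP\{Z_d>2^jh\}$ is bounded only by $\sum_j 2^h\,\E Z_d^h$, which diverges — exactly the failure your own parenthetical ``a bit above $i$, but no larger than $h$'' acknowledges without resolving. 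The hypotheses of Lemma~\ref{lem: error} in fact allow and are tuned for moment order $4h$: the conditions $h\le\frac{c_{\text{\tiny{\ref{prop: main sec 2}}}}}{4}\frac{d^2}{n^{3/2}}$ and $q^k\rho_d\ge\exp\big(-\frac{c_{\text{\tiny{\ref{prop: main sec 2}}}}d^4}{4n^3h}\big)$ are precisely what is needed to invoke Proposition~\ref{prop: main sec 2} with $m=4h$ (the two factors of $4$ are there for this reason), and the condition $q^k\rho_d\le e^{-8C_{\text{\tiny{\ref{l: moment growth}}}}C_{\text{\tiny{\ref{cor: crude bound for inv}}}}(q-1)}h$ is what makes $\log(4h(q-1)/(q^k\rho_d))\ge 2C_{\text{\tiny{\ref{l: moment growth}}}}C_{\text{\tiny{\ref{cor: crude bound for inv}}}}\cdot 4h(q-1)/h$, so the moment bound collapses to $(\E Z_d^{4h})^{1/4h}\lesssim h/(2C_{\text{\tiny{\ref{cor: crude bound for inv}}}})$. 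Once you notice that $m=4h$ is available, the excess $m-j\ge 3h$ kills $r^j$ in a single direct sum and the dyadic machinery becomes superfluous.

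Your first step (bounding $\sum_{r>h}M_d(r)=\PP\{Z_d>h\}$ and its tilde analogue by $O(2^{-h})$ using Markov at order $h$) does work under the stated hypotheses, and your final step (contracting against $|b'_{ij}|\le C_{\text{\tiny{\ref{cor: crude bound for inv}}}}^hh^{-j}$) is the same as the paper's. The error is localized to the choice of moment order in the tail-moment conversion.
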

\begin{proof}

  Set $m:=4h$, and observe that $m(q-1)/e\geq q^k\rho_d\geq
\exp\big(-\frac{c_{\text{\tiny{\ref{prop: main sec 2}}}}(\lambda_0, q)d^4}{n^3 m}\big)$
and, furthermore, $m\le c_{\text{\tiny{\ref{prop: main sec 2}}}}(\lambda_0, q)d^2/n^{3/2}$.
By applying the comparison Proposition~\ref{prop: main sec 2}, the bound given by Lemma~\ref{l: moment growth}
and Markov's inequality, we get for any $r>h$:
$$
M_d(r),\widetilde M_d(r)=O\bigg( C_{\text{\tiny{\ref{l: moment growth}}}}^m  r^{-m}
       \Big(\frac{m(q-1)}{\log(m(q-1) / (q^k\rho_{d}))}\Big)^{m}\bigg).
$$
Further, our conditions on $h$ and $q^k \rho_d$ imply that
$$
\log(m(q-1) / (q^k\rho_{d}))\geq 8C_{\text{\tiny{\ref{l: moment growth}}}}C_{\text{\tiny{\ref{cor: crude bound for inv}}}} (q-1)
=2C_{\text{\tiny{\ref{l: moment growth}}}}C_{\text{\tiny{\ref{cor: crude bound for inv}}}} m(q-1)/h.
$$
Thus,
$$
M_d(r),\widetilde M_d(r)=O\big( h^m  r^{-m}(2C_{\text{\tiny{\ref{cor: crude bound for inv}}}})^{-m}
\big),
$$
and we get the first assertion of the lemma after summing up.
Further, taking into account the definition of the vectors $E$ and $\widetilde E$, we obtain for every $j\leq h$:
$$
E_j,\widetilde E_j=O\bigg( \sum_{r=h+1}^\infty h^m  r^{j-m}(2C_{\text{\tiny{\ref{cor: crude bound for inv}}}})^{-m}\bigg)
=O\big(h^{j} (2C_{\text{\tiny{\ref{cor: crude bound for inv}}}})^{-m}\big).
$$
At the same time, by Corollary~\ref{cor: crude bound for inv} we have
$|b_{ij}'|\leq C_{\text{\tiny{\ref{cor: crude bound for inv}}}}^h h^{-j}$,
so that the $i$-th component of the vector $B^{-1}E$ can be bounded above by
$O(2^{-h})$. The same is true for $B^{-1}\widetilde E$, and
the result follows.
\end{proof}

\medskip

\begin{proof}[Proof of Theorem~\ref{prop: main prop}]
We start by defining the constants.
Let
\begin{align*}
C_{\text{\tiny{\ref{prop: main prop}}}}(\lambda_0, q):=\max\bigg(
&64\big(2+\log(16q-16)+\log_2 C_{\text{\tiny{\ref{cor: crude bound for inv}}}}+\log_2 C_{\text{\tiny{\ref{l: moment growth}}}}\big),
\sqrt{C_{\text{\tiny{\ref{prop: main sec 2}}}}(\lambda_0, q)},\\
&4\Big(\frac{c_{\text{\tiny{\ref{prop: main sec 2}}}}(\lambda_0, q)}
{\log_2 C_{\text{\tiny{\ref{cor: crude bound for inv}}}}+
\log_2 C_{\text{\tiny{\ref{l: moment growth}}}}+4+\log(q-1)}\Big)^{-1/2}\bigg)
\end{align*}
and
$$
c_{\text{\tiny{\ref{prop: main prop}}}}(\lambda_0, q):=\min\big(e^{-8C_{\text{\tiny{\ref{l: moment growth}}}}C_{\text{\tiny{\ref{cor: crude bound for inv}}}}(q-1)},c_{\text{\tiny{\ref{prop: main sec 2}}}}(\lambda_0, q)/4\big),
$$
and fix any $L$, $d$, $k$ satisfying the conditions of the theorem.
We set $h:=\lfloor 16L\rfloor$.
Observe that 
$$
h\log_2 (C_{\text{\tiny{\ref{cor: crude bound for inv}}}}C_{\text{\tiny{\ref{l: moment growth}}}})
+h+h\log(hq-h)\leq k/4
,\;
h\le \frac{c_{\text{\tiny{\ref{prop: main sec 2}}}}(\lambda_0, q)}
{\log_2 C_{\text{\tiny{\ref{cor: crude bound for inv}}}}+
\log_2 C_{\text{\tiny{\ref{l: moment growth}}}}+4+\log(q-1)}\frac{d^2}{n^{3/2}},
$$
and that the product $q^k \rho_d$ satisfies
$$
e^{-8C_{\text{\tiny{\ref{l: moment growth}}}}C_{\text{\tiny{\ref{cor: crude bound for inv}}}}(q-1)}h\geq q^k\rho_d\geq \exp\bigg(
-\frac{c_{\text{\tiny{\ref{prop: main sec 2}}}}(\lambda_0, q)d^4}{4n^3 h}\bigg).
$$
A combination of Lemmas~\ref{lem: moment err} and~\ref{lem: error} then gives
\begin{align*}
\big|&\PP\{Z_d>0\}-\PP\{\widetilde Z_d>0\}\big|
\leq \|V-\widetilde V\|_1+\sum_{r=h+1}^\infty M_d(r)+\sum_{r=h+1}^\infty \widetilde M_d(r)\\
&=O(2^{-h})+O(h\,2^{-h})+O\bigg(h\exp\Big(-\frac{1}{2}c_{\text{\tiny{\ref{prop: main sec 2}}}}(\lambda_0, q)d^4/n^3\Big)+h\,2^{-h/4}\bigg).
\end{align*}
Our definition of $h$ then implies that
$$
\big|\PP\{Z_d>0\}-\PP\{\widetilde Z_d>0\}\big|=O(2^{-h/8}),
$$
and the result follows.
\end{proof}

\begin{lem} \label{prop:uniformWhenLinearIndependent}
  Let $H_1,H_2$ be two $k$-dimensional subspaces of $\F^n$.
  Then, 
  \begin{align*}
    \PP\set{H_1= \rm{span}(X_1,\dots,X_k)}=    
    \PP\set{H_2= \rm{span}(X_1,\dots,X_k)}
  \end{align*}
  where $X_1,\dots,X_k$ are i.i.d.\ random vectors uniformly 
  distributed in $\F^n$.
  As a consequence, conditioned on the event that 
  $X_1,\dots,X_k$ are linearly independent, 
  $\rm{span}(X_1,\dots,X_k)$ is a random $k$-dimensional subspace 
  uniformly distributed over all $k$-dimensional subspaces of $\F^n$.
\end{lem}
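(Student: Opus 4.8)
The plan is to prove this by a symmetrization argument, using that the general linear group $GL_n(\F)$ acts transitively on the set of $k$-dimensional subspaces of $\F^n$, together with the fact that the joint law of the i.i.d.\ uniform tuple $(X_1,\dots,X_k)$ is invariant under the diagonal action of $GL_n(\F)$. Concretely, since $H_1$ and $H_2$ both have dimension $k$, I would first fix some $T\in GL_n(\F)$ with $T(H_1)=H_2$; such a $T$ exists because one can extend a basis of $H_1$ to a basis of $\F^n$, do the same for $H_2$, and let $T$ be the linear isomorphism matching the two bases.

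Next, observe that since $T$ is a bijection of $\F^n$, the image $TX_i$ of a uniform random vector is again uniform on $\F^n$ (a direct analogue of the lemma preceding Corollary~\ref{cor: lin indep}), and, since $TX_i$ is a function of $X_i$ alone, mutual independence is preserved; hence $(TX_1,\dots,TX_k)$ has the same distribution as $(X_1,\dots,X_k)$. Since $\mathrm{span}(TX_1,\dots,TX_k)=T\big(\mathrm{span}(X_1,\dots,X_k)\big)$ and $T$ carries $H_1$ bijectively onto $H_2$, the event $\set{\mathrm{span}(X_1,\dots,X_k)=H_1}$ is carried by $T$ onto the event $\set{\mathrm{span}(TX_1,\dots,TX_k)=H_2}$. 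Taking probabilities and using the distributional identity above yields
$$
\PP\set{H_1=\mathrm{span}(X_1,\dots,X_k)}=\PP\set{H_2=\mathrm{span}(TX_1,\dots,TX_k)}=\PP\set{H_2=\mathrm{span}(X_1,\dots,X_k)},
$$
which is the first assertion.

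For the consequence, the key remark is that whenever $\mathrm{span}(X_1,\dots,X_k)=H$ for a $k$-dimensional $H$, the $k$ vectors $X_1,\dots,X_k$ necessarily form a basis of $H$ and are in particular linearly independent; hence $\set{\mathrm{span}(X_1,\dots,X_k)=H}\subseteq\Event$, where $\Event$ denotes the event that $X_1,\dots,X_k$ are linearly independent. Therefore, for every $k$-dimensional subspace $H$,
$$
\PP\set{\mathrm{span}(X_1,\dots,X_k)=H\,\big|\,\Event}=\frac{\PP\set{\mathrm{span}(X_1,\dots,X_k)=H}}{\PP(\Event)},
$$
and the numerator does not depend on $H$ by the first part, so the conditional law of $\mathrm{span}(X_1,\dots,X_k)$ given $\Event$ is uniform over all $k$-dimensional subspaces of $\F^n$. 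I do not expect any genuine difficulty here, as the whole argument is a direct symmetrization; the only two points that merit an explicit line are the transitivity of $GL_n(\F)$ on $k$-subspaces and the observation that the spanning event already forces linear independence, so that conditioning on $\Event$ introduces no $H$-dependent normalizing factor.
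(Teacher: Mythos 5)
Your proof is correct, but it takes a genuinely different route from the paper's. The paper's proof is a direct counting argument: it views $(X_1,\dots,X_k)$ as a uniformly chosen $n\times k$ matrix $M$ over $\F$ and explicitly counts the number of such matrices whose columns span a fixed $k$-dimensional $H$, obtaining $(q^k-1)(q^k-q)\cdots(q^k-q^{k-1})$, a quantity manifestly independent of $H$. This count also immediately supplies the ratio needed for the conditional statement, since matrices spanning $H$ form a subset of the matrices with linearly independent columns. Your argument instead invokes the transitive action of $GL_n(\F)$ on $k$-dimensional subspaces together with the invariance of the joint law of $(X_1,\dots,X_k)$ under the diagonal action of $GL_n(\F)$; the equality of probabilities then follows by transport of the event under a fixed $T$ with $T(H_1)=H_2$, and the conditional statement follows from the observation that $\{\mathrm{span}(X_1,\dots,X_k)=H\}\subseteq\Event$ because $k$ vectors spanning a $k$-dimensional space must be a basis. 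The symmetry argument is cleaner and more conceptual, avoiding any explicit count; the paper's counting argument is more elementary and additionally produces the explicit formula for the number of spanning tuples, which is reused elsewhere in the paper (e.g.\ in the cardinality of $\Omega_{\ell,\ell}$). Both are complete and correct.
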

\begin{proof}
  Let $M$ be the $n\times k$ matrix with columns $X_1,\dots,X_k$. 
  Then, $M$ is uniformly chosen among all $n\times k$
  matrices with coefficients in $\F$.

  Let $H$ be a $k$-dimensional subspace. Then $\PP\set{ H = \rm{span}\brk{\mbox{column vectors of }\,M}}$
  is equal to the ratio of the number of $n\times k$ matrices whose column vectors span $H$, and the 
  number of all $n\times k$ matrices over $\F$.
  The number 
  of $n\times k$ matrices whose column vectors span $H$ is 
  $$
      \brk{q^k - 1}\brk{q^k-q}\brk{q^k-q^2}\cdots 
      \brk{q^k - q^{k-1}} 
  $$
    where $q^k -q^{i-1}$ represents the number of choices
    of the $i-$th column from $H$ which is not a linear combination of 
    first $i-1$ columns, with the assumption that the first $i-1$
    columns are independent. Since it does not depends on $H$, 
    it is the same for all $k$-dimensional subspaces. Thus, 
    the first statement follows. 

    Further, the collection
    of $n\times k$ matrices whose column vectors span $H$
    is a subset of the collection of $n\times k$ 
    matrices whose column vectors are linearly independent. Thus, 
    \begin{align*}
      &\PP\set{ H = \rm{span}\brk{\mbox{column vectors of }\,M}\,|\, 
      \mbox{column vectors of }M\mbox{ are linearly independent}}\\
      =& \frac{\mbox{The number 
      of $n\times k$ matrices whose column vectors span $H$}}
      {\mbox{The number of $n\times k$ matrices whose 
      column vectors are linearly independent}},
    \end{align*}    
    and the second assertion follows. 

\end{proof}

\begin{proof}[Proof of Theorem \ref{main}]
Fix any $R_1<R_2$ in the interval $(0,1)$. Without loss of generality, we can assume that $n$ is large.
First, we observe that there exist numbers
$\lambda_0=\lambda_0(R_1,R_2,q)\in(0,1)$, $\widetilde c=\widetilde c(R_1,R_2,q)>0$, and
$c'=c'(R_1,R_2,q)>0$ such that
$$
q^{R_1 n}\rho_{\lfloor \lambda_0 (n-n/q)\rfloor}\geq \exp(c' n),
$$
and
$$
q^{R_2 n}\rho_{\lceil \widetilde c n\rceil}\leq \exp(-c'n).
$$
These estimates can be obtained, in particular, with help of Proposition~\ref{prop: 09860986}.

Set $L:={\widetilde c}^2\sqrt{n}/C_{\text{\tiny{\ref{prop: main prop}}}}^2(\lambda_0, q)$.
Since $k\geq R_1 n$ and $n$ is large, we can assume
$$k\geq C_{\text{\tiny{\ref{prop: main prop}}}}(\lambda_0, q)L\log L.$$
Further, let $d_1$ be the smallest integer in the interval
$$\Big[C_{\text{\tiny{\ref{prop: main prop}}}}(\lambda_0, q)\sqrt{L}n^{3/4},\lambda_0 \Big(1-\frac{1}{q}\Big)n\Big]$$
such that
$q^k\rho_{d_1}\geq \exp\big(-\frac{c_{\text{\tiny{\ref{prop: main prop}}}}(\lambda_0, q)d_1^2}{n^{3/2}}\big)$,
and let $d_2$ be the largest integer in the same interval, such that
$$
q^k\rho_{d_2}\leq c_{\text{\tiny{\ref{prop: main prop}}}}(\lambda_0, q) L.
$$
Note that since $C_{\text{\tiny{\ref{prop: main prop}}}}(\lambda_0, q)\sqrt{L}n^{3/4}= \widetilde c n$,
the numbers $d_1$ and $d_2$ are well defined (for large enough $n$), and, moreover,
\begin{equation}\label{eq: aux 985204982}
q^k\rho_{d_1}\leq \exp(-c'' \sqrt{n}), \quad q^k\rho_{d_2}\geq c'' \sqrt{n}
\end{equation}
for some $c''(R_1,R_2,q)>0$.

Let $X_1,\dots,X_k$ be i.i.d.\ random vectors uniformly distributed in $\F^n$.
Denote by $\Event$ the event that the vectors are linearly independent.
By Lemma~\ref{prop:uniformWhenLinearIndependent},
conditioned on $\Event$, the linear span of $X_1,\dots,X_k$ is equidistributed with $\mathcal C$
. For any $d\in[d_1,d_2]$, applying Theorem~\ref{prop: main prop}, we get
$$
\big|\PP\{d_{\min}\leq d\}-\PP\{w_{\min}\leq d\}\big|=O(\exp(-L)),
$$
whence
$$
\big|\PP\{d_{\min}\leq d\,|\,\Event\}-\PP\{w_{\min}\leq d\}\big|=O(\exp(-L))+\PP(\Event^c).
$$
On the other hand, it is not hard to check that $\PP(\Event^c)=O(\exp(-\hat c n))$
for some $\hat c=\hat c(R_2)>0$. Thus, we obtain the required estimate for the difference $|F_{\rm dmin}(x)-F_{\rm wmin}(x)|$
within the interval $x\in[d_1,d_2]$. To complete the proof, it remains to notice that, in view of \eqref{eq: aux 985204982},
we have
$$
F_{\rm wmin}(d_1)\leq q^k\rho_{d_1}\leq \exp(-c'' \sqrt{n}),
$$ 
and
$$
1-F_{\rm wmin}(d_2)\leq (1-\rho_{d_2})^{(q^k-1)/(q-1)}\leq \exp(-c'' \sqrt{n}/q).
$$
\end{proof}

Finally, we consider the improvement of the Gilbert--Varshamov bound implied by our argument.
We shall state the result in a probabilistic form:
\begin{cor}
  Let $q$ be a prime power and $\alpha \in (0,\frac{1}{2})$. 
  There exists constants $c,C>0$ depending on $q$ and $\alpha$ such that,
  for a sufficiently large integer $n$ and 
  $\alpha n \le d \le (1-\alpha)(1-\frac{1}{q})n$,
  with probability greater than 
  $\exp(-c\sqrt{n})$, a uniform random $\lfloor k+\frac{1}{2}\log_q(n)-C\rfloor$--dimensional linear code has the minimal distance at least $d$
  where $k$ is the largest integer such that 
  $$
    \frac{1}{q} \frac{q^n}{\sum_{j=0}^{d-1} {n \choose j}(q-1)^j} < q^{k}\le \frac{q^n}{\sum_{j=0}^{d-1} {n \choose j}(q-1)^j}.
  $$
  (i.e.\ the dimension in Gilbert--Varshamov's bound)
\end{cor}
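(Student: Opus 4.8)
The corollary is essentially a restatement of the main theorem combined with the asymptotics for the minimum of i.i.d.\ binomial variables. The key observation is that when $q^k$ is chosen as in the statement (the Gilbert--Varshamov dimension for the pair $(n,d)$), the quantity $q^k\rho_d = q^k\cdot q^{-n}\sum_{j=0}^{d-1}\binom{n}{j}(q-1)^j$ lies in a narrow range: by the definition of $k$ we have $\frac1q < q^k\rho_d \le 1$ (up to the slight discrepancy between $\sum_{j=0}^{d-1}$ and $\sum_{j=0}^{d}$, which I would absorb by noting that for $d$ in the stated range the last term $\binom{n}{d}(q-1)^d q^{-n}$ changes $\rho_d$ only by a bounded multiplicative factor, via Proposition~\ref{prop: 09860986}). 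So the Gilbert--Varshamov dimension is, up to an additive $O(1)$, the integer $d_0$-type threshold at which $q^k\rho_d\approx 1$.

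First I would set up the parameters: put $\widetilde{k}:=\lfloor k+\tfrac12\log_q n - C\rfloor$, so that $q^{\widetilde k}\rho_d = \Theta(n^{-1/2})$ with the implied constants controlled by $C$. Then, conditioning on the event $\Event$ that $X_1,\dots,X_{\widetilde k}$ are linearly independent (which has probability $1-O(\exp(-\hat c n))$ by Lemma~\ref{prop:uniformWhenLinearIndependent} and the elementary count in its proof), the span of the $X_i$ is a uniformly random $\widetilde k$-dimensional code, and I want to lower bound $\PP\{d_{\min}\ge d\}=\PP\{Z_d=0\}$ for this code. By Theorem~\ref{main} (or directly Theorem~\ref{prop: main prop}, which is the form that actually carries the error bound $O(\exp(-L))$ with $L=\Theta(\sqrt n)$ in this regime), this differs from $\PP\{w_{\min}\ge d\} = (1-\rho_d)^{(q^{\widetilde k}-1)/(q-1)}$ by at most $O(\exp(-c\sqrt n))$.

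Next I would estimate $(1-\rho_d)^{(q^{\widetilde k}-1)/(q-1)}$ from below. Since $\rho_d=o(1)$ and $(q^{\widetilde k}-1)/(q-1)\cdot \rho_d = \Theta(n^{-1/2})$, we get $(1-\rho_d)^{(q^{\widetilde k}-1)/(q-1)} = \exp(-\Theta(n^{-1/2}))=1-\Theta(n^{-1/2})$, which is certainly larger than $\exp(-c\sqrt n)$ for any fixed $c>0$ and large $n$; more to the point it is bounded below by an absolute positive constant once $C$ is chosen large enough. Combining with the previous paragraph, $\PP\{d_{\min}\ge d \mid \Event\} \ge c_0 - O(\exp(-c\sqrt n)) - O(\exp(-\hat c n)) > \exp(-c\sqrt n)$ for suitable $c$ and large $n$. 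I must also verify that all the hypotheses of Theorem~\ref{prop: main prop} hold for $(\widetilde k, d)$ with this choice of $L=\Theta(\sqrt n)$: the lower bound $d\ge C\sqrt{L}n^{3/4}=\Theta(n)$ follows from $d\ge\alpha n$; the upper bound $d\le\lambda_0(1-1/q)n$ needs $\lambda_0$ chosen with $(1-\alpha)(1-1/q)\le\lambda_0(1-1/q)$, i.e.\ $\lambda_0\ge 1-\alpha$, which is fine since $\lambda_0<1$; the two-sided constraint $c L\ge q^{\widetilde k}\rho_d\ge \exp(-c d^2/n^{3/2})$ holds because $q^{\widetilde k}\rho_d=\Theta(n^{-1/2})$ sits comfortably between $\Theta(\sqrt n)$ above (so choose $C$ large, or absorb into $c_{\ref{prop: main prop}}L$) and $\exp(-\Theta(\sqrt n))$ below; and $\widetilde k\ge C L\log L=\Theta(\sqrt n\log n)$ holds since $\widetilde k=\Theta(n)$.

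\textbf{Main obstacle.} The only genuinely fiddly point is handling the off-by-one and the $\tfrac12\log_q n$ shift cleanly: I need to confirm that translating from the Gilbert--Varshamov dimension $k$ (defined via $\sum_{j=0}^{d-1}$, with $q^k\rho_d\in(\tfrac1q,1]$) to $\widetilde k = \lfloor k+\tfrac12\log_q n - C\rfloor$ really does place $q^{\widetilde k}\rho_d$ in the window $[\exp(-\Theta(\sqrt n)), c_{\ref{prop: main prop}}L]$ with $L\asymp\sqrt n$, \emph{and} that the resulting $(1-\rho_d)^{(q^{\widetilde k}-1)/(q-1)}$ stays bounded below by a constant (this forces $C$ to be large but fixed). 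Everything else is a direct invocation of Theorem~\ref{prop: main prop} and Lemma~\ref{prop:uniformWhenLinearIndependent}, plus the crude bounds on $\rho_d$ from Proposition~\ref{prop: 09860986}. Once the probability lower bound $\exp(-c\sqrt n)$ is established, a code of the stated size exists deterministically, which recovers Corollary~\ref{cor: mp GV} by exponentiating $\widetilde k$.
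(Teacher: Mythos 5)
Your overall strategy is the same as the paper's: compare $\PP\{d_{\min}\ge d\}$ for the boosted dimension $\widetilde k=\lfloor k+\frac12\log_q n-C\rfloor$ with $\PP\{w_{\min}\ge d\}$ via the main theorem, estimate the latter, and verify the hypotheses of Theorem~\ref{prop: main prop}. But there is a sign error in the key calculation that undermines the middle of the argument. Since $q^k\rho_{d-1}\in(1/q,1]$ by the definition of $k$, and $\rho_d=\Theta(\rho_{d-1})$ by Proposition~\ref{prop: 09860986}, we have $q^k\rho_d=\Theta(1)$; hence
$$
q^{\widetilde k}\rho_d = q^k\rho_d\cdot q^{\frac12\log_q n - C +O(1)} = \Theta\big(q^{-C}\sqrt n\big),
$$
which tends to \emph{infinity}, not $\Theta(n^{-1/2})$ as you claim. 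Consequently
$$
(1-\rho_d)^{(q^{\widetilde k}-1)/(q-1)} = \exp\big(-\Theta(q^{-C}\sqrt n)\big),
$$
which decays to zero rather than being ``$1-\Theta(n^{-1/2})$'' or bounded below by an absolute constant; your statement that this quantity ``is certainly larger than $\exp(-c\sqrt n)$ for any fixed $c>0$'' is false on its face, and the goal stated in your ``Main obstacle'' paragraph (to show it stays bounded below by a constant after choosing $C$ large) is unachievable.

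The fix is precisely the constant-balancing step that the paper carries out and your write-up skips. Both $\PP\{w_{\min}\ge d\}=\exp(-\Theta(q^{-C}\sqrt n))$ and the comparison error from Theorem~\ref{main} are of order $\exp(-\Theta(\sqrt n))$, so one must choose $C$ large enough (equivalently, the shift $t=\frac12\log_q n - C$) so that the constant $q^{-C}$ in the first exponent is strictly smaller than the constant $c(R_1,R_2,q)$ in the error bound. The paper does this explicitly by taking $t=\frac12\log_q n + \log_q(c_{\alpha,q}/4)$, giving $\PP\{w_{\min}\ge d\}\ge\exp(-c_{\alpha,q}\sqrt n/2)$, which dominates the error $O(\exp(-c_{\alpha,q}\sqrt n))$. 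Once you correct the exponent sign and insert this balance, the rest of your argument (linear-independence conditioning, verification of the hypotheses of Theorem~\ref{prop: main prop} with $L=\Theta(\sqrt n)$, and the passage to Corollary~\ref{cor: mp GV}) goes through as you outlined.
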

\begin{proof}
  Notice that $k$ is the largest integer satisfying $q^k \rho_{d-1} \leq 1$.
  The Gilbert--Varshamov result states that there 
  exists a $k$--dimensional linear code with the minimal distance at least $d$.
  
  Let $t\ge 0$ be a positive integer which we will determine later.
  Further, let $w_{\min}$ be the minimal weight of $\frac{q^{k+t}-1}{q-1}$ i.i.d.\ random vectors
  uniformly distributed over $\F^n$, and let $d_{\min}$ be the minimal distance of the uniform random $(k+t)$--dimensional
  linear code in $\F^n$.
  We have 
  \begin{align*}
   \PP \{w_{\min} \ge d\}
    &= (1 - \rho_{d-1})^{\frac{q^{k+t}-1}{q-1}} \\
    &\ge \exp\Big( - 2\rho_{d-1}\frac{q^{k+t}-1}{q-1}\Big) \\
    &\geq
      \exp( - 2\rho_{d-1} q^{k+t})
     \\
    &\ge \exp(-2q^t).
  \end{align*}
  Recall the $q$-ary entropy function 
  \begin{align*}
    H_q(x) =  x \log_q(q-1) - x \log_q(x) - (1-x)\log_q(1-x)
  \end{align*}
  which appears in the Gilbert--Varshamov bound. 
  It is a monotone increasing function on $(0, 1-\frac{1}{q})$ 
  with $H_q(0)=0$ and $H_q(1)=\log_q(q-1)$. Furthermore, for $x \in (1,1-\frac{1}{q})$,
  \begin{align}
    \label{eq:qEntropyApproximation}
    H_q(x) = \frac{1}{n}\log_q\brk{\sum_{i=0}^{x n}{n \choose i}(q-1)^i} + o(1)
    =\frac{1}{n}\log_q\brk{\rho_{xn}q^{n}} + o(1)
  \end{align}
  whenever $xn$ is an integer. (See \cite[Proposition 3.3.1]{GRS2019})

  With $  q^k\rho_d \leq 1<q^{k+1}\rho_d$, we have 
  \begin{align*}
    H_q\Big(\frac{d}{n}\Big) = 1-\frac{k}{n}+o(1).    
  \end{align*}
  Therefore, there exist $0<R_1<R_2<1$ depending only on $q,\alpha$ such that 
  \begin{align*}
    R_1 \le \frac{k}{n} \le R_2.
  \end{align*}
  Now we apply Theorem \ref{main} to get 
  \begin{align*}
    \PP\set{d_{\min} \ge d + t } &\ge  \PP\set{w_{\min} \ge d + t }
    - \abs{ \PP\set{d_{\min} \ge d + t }-\PP\set{w_{\min} \ge d + t }}\\
    &\ge \exp(-2q^t)-\exp(-c_{\alpha,q}\sqrt{n})
  \end{align*}
  where $c_{\alpha,q} = c(R_1,R_2,q)$. Choosing 
  \begin{align*}
    t = \frac{1}{2}\log_q{n}+\log_q(\frac{c_{\alpha,q}}{4})
  \end{align*}
  we obtain the desired bound.

\end{proof}

It is not difficult to check that the above corollary implies Corollary~\ref{cor: mp GV} from the introduction.

\bibliographystyle{plain}

\bibliography{Ref}

\begin{thebibliography}{10}

\bibitem{Barg2002}
Alexander Barg and G.~David Forney.
\newblock Random codes minimum distances and error exponents.
\newblock 48(9), 2002.

\bibitem{Gaborit2008}
P.~{Gaborit} and G.~{Zemor}.
\newblock Asymptotic improvement of the gilbert-varshamov bound for linear
  codes.
\newblock {\em IEEE Transactions on Information Theory}, 54(9):3865--3872, Sep.
  2008.

\bibitem{Gallager1963}
R.~Gallgar.
\newblock {\em Low-Density Parity-Check Codes}.
\newblock MIT Press, 1963.

\bibitem{Gilbert1952}
E.~N. Gilbert.
\newblock {A Comparison of Signalling Alphabets}.
\newblock {\em Bell System Technical Journal}, 31(3):504--522, may 1952.

\bibitem{Goppa}
V.~D. Goppa.
\newblock Bounds for codes.
\newblock {\em Dokl. Acad. Nauk.}, 333, 1993.

\bibitem{GRS2019}
Venkatesan Guruswami, Atri Rudra, and Madhu Sudan.
\newblock {\em Essential Coding Theory}.
\newblock Draft, Mar 15,2019.

\bibitem{Joyner2011}
David Joyner and Jon-Lark Kim.
\newblock {\em Selected unsolved problems in coding theory}.
\newblock Springer Science \& Business Media, 2011.

\bibitem{Linial2018}
Nati {Linial} and Jonathan {Mosheiff}.
\newblock {On the weight distribution of random binary linear codes}.
\newblock {\em arXiv e-prints}, page arXiv:1806.08392, June 2018.

\bibitem{Manin1982}
Yu.~I. Manin.
\newblock {What is the maximum number of points on a curve over
  {\$}F{\_}2{\$}}.
\newblock {\em Journal of the Faculty of Science, the University of Tokyo.
  Sect. 1 A, Mathematics}, 28(3):715--720, 1982.

\bibitem{Mceliece1977}
Robert McEliece, Eugene Rodemich, Howard Rumsey, and Lloyd Welch.
\newblock New upper bounds on the rate of a code via the delsarte-macwilliams
  inequalities.
\newblock {\em IEEE Transactions on Information Theory}, 23(2):157--166, 1977.

\bibitem{RD69}
B.~C. {Rennie} and A.~J. {Dobson}.
\newblock On the stirling numbers of the second kind.
\newblock {\em Jounral of Combinatorial Theory}, 7(2):116--121, 1969.

\bibitem{Tal}
M.~Talagrand.
\newblock {Sharper bounds for Gaussian and empirical processes}.
\newblock {\em Annals of Probability}, 22(1):28--76, 1994.

\bibitem{Jiang2004}
{Tao Jiang} and A.~{Vardy}.
\newblock Asymptotic improvement of the gilbert-varshamov bound on the size of
  binary codes.
\newblock {\em IEEE Transactions on Information Theory}, 50(8):1655--1664, Aug
  2004.

\bibitem{Tsfasman1982}
M.~A. Tsfasman, S.~G. Vladutx, and Th. Zink.
\newblock Modular curves, shimura curves, and goppa codes, better than
  varshamov-gilbert bound.
\newblock {\em Mathematische Nachrichten}, 109(1):21--28, 1982.

\bibitem{Varshamov1957}
R~R Varshamov.
\newblock {Estimate of the number of signals in error correcting codes}.
\newblock {\em Dokl. Acad. Nauk SSSR}, 117:739--741, 1957.

\bibitem{Vu2005}
V.~{Vu} and {Lei Wu}.
\newblock Improving the gilbert-varshamov bound for q-ary codes.
\newblock {\em IEEE Transactions on Information Theory}, 51(9):3200--3208, Sep.
  2005.

\end{thebibliography}

\end{document}